\documentclass[11pt]{article}

\usepackage{graphicx}
\usepackage{amsmath,amssymb,amsfonts}
\usepackage{amsthm}
\usepackage{booktabs}
\usepackage{longtable}
\usepackage{tabularx}
\usepackage{xspace}
\usepackage[version=4]{mhchem}
\usepackage[numbers,sort&compress]{natbib}
\usepackage{tikz}
\usetikzlibrary{arrows.meta,positioning,calc,backgrounds,fit,shapes.geometric}
\usepackage{tikz-cd}
\usepackage{hyperref}
\usepackage{arxiv}

\providecolor{ioBlue}{HTML}{0072B2}      
\providecolor{ioVermilion}{HTML}{D55E00} 
\providecolor{ioGreen}{HTML}{009E73}     
\providecolor{ioOrange}{HTML}{E69F00}    
\providecolor{ioSky}{HTML}{56B4E9}       
\providecolor{ioPurple}{HTML}{CC79A7}    
\providecolor{ioYellow}{HTML}{F0E442}    

\providecolor{cdkC}{HTML}{1A1A1A}        
\providecolor{cdkH}{HTML}{2B2B2B}        
\providecolor{cdkN}{HTML}{2A4BD7}        
\providecolor{cdkO}{HTML}{CC0000}        
\providecolor{cdkBr}{HTML}{A0522D}       
\providecolor{cdkHal}{HTML}{1F9E4A}      
\providecolor{cdkS}{HTML}{C9A100}        
\newcommand{\aC}{\textcolor{cdkC}{C}}
\newcommand{\aH}{\textcolor{cdkH}{H}}
\newcommand{\aN}{\textcolor{cdkN}{N}}
\newcommand{\aO}{\textcolor{cdkO}{O}}
\newcommand{\aBr}{\textcolor{cdkBr}{Br}}

\tikzset{
  iobadge/.style={rounded corners=1.3pt, fill=black!78, text=white,
    font=\bfseries\scriptsize, minimum size=4.4mm, inner sep=1pt, anchor=center},
  iocbadge/.style={circle, fill=black!80, text=white, font=\bfseries\scriptsize,
    minimum size=4.6mm, inner sep=0.5pt, anchor=center},
  iocell/.style={rounded corners=3pt, draw=black!26, fill=black!1,
    line width=0.5pt},
  iotitle/.style={font=\bfseries\footnotesize, text=black!85, anchor=west,
    inner sep=0pt},
  iosubtitle/.style={font=\scriptsize, text=black!55, anchor=north west,
    inner sep=0pt},
  iopanel/.style={rounded corners=3pt, draw=black!16, fill=black!1,
    line width=0.4pt},
  iocard/.style={rounded corners=2.6pt, draw=black!14, fill=black!2,
    line width=0.4pt, inner sep=0pt},
  iomap/.style={-{Stealth[length=1.8mm,width=1.4mm]}, draw=ioVermilion!88!black,
    line width=0.65pt, shorten >=1pt, shorten <=1pt},
  ioimg/.style={draw=ioBlue, line width=1.9pt},
  ioflow/.style={-{Latex[length=2.4mm]}, line width=1.1pt, draw=ioBlue!85!black},
  iofish/.style={-{Latex[harpoon,length=2.0mm]}, line width=0.9pt,
    draw=ioVermilion!85!black},
  iobond/.style={line width=0.85pt, black!80},
  iohash/.style={line width=0.85pt, black!80, dash pattern=on 1pt off 1.2pt},
  iolp/.style={fill=ioBlue, circle, inner sep=0.75pt},        
  iorad/.style={fill=ioVermilion, circle, inner sep=0.95pt},  
  iolobe/.style={fill=ioGreen!75, draw=ioGreen!55!black, opacity=0.9}, 
  celp/.style={fill=black!82, circle, inner sep=0.72pt},      
  crad/.style={fill=ioVermilion, circle, inner sep=0.95pt},   
  cbond/.style={line width=0.9pt, black!82},                  
  cpair/.style={-{Latex[length=2.2mm]}, line width=1.0pt, draw=ioBlue!88!black},  
  cfish/.style={-{Latex[harpoon,length=2.0mm]}, line width=0.95pt,
    draw=ioVermilion!88!black}, 
  cchg/.style={font=\scriptsize, inner sep=0.5pt},            
  catom/.style={inner sep=1.1pt},                             
}
\newcommand{\chplus}{\ensuremath{\oplus}}
\newcommand{\chminus}{\ensuremath{\ominus}}

\newcommand{\iosub}[4][]{%
  \node[iocbadge, anchor=north west] (iosubB) at (#2) {#3};%
  \node[iotitle] (iosubT) at ($(iosubB.east)+(0.12,0)$) {#4};%
  \ifx\relax#1\relax\else
    \node[iosubtitle] at ($(iosubT.south west)+(0,-0.02)$) {#1};%
  \fi
}

\newcommand{\ioboxbadge}[3]{%
  \draw[iocell] (#1) rectangle (#2);%
  \node[iocbadge] at ($(#1 |- #2)+(0.40,-0.40)$) {#3};%
}

\newcommand{\synkit}{\texttt{SynKit}\xspace}
\newcommand{\rdkit}{\texttt{RDKit}\xspace}

\newcommand{\modfw}{M\O D\xspace}
\newcommand{\code}[1]{\texttt{#1}}

\newcommand{\Ie}{I_{\mathrm e}}
\newcommand{\ValG}{\operatorname{Val}}
\newcommand{\Loc}{\operatorname{Loc}}
\newcommand{\gout}{\operatorname{out}}
\newcommand{\gin}{\operatorname{in}}
\newcommand{\derives}[1]{\mathrel{\overset{#1}{\Longrightarrow}}}

\theoremstyle{plain}
\newtheorem{theorem}{Theorem}
\newtheorem{proposition}[theorem]{Proposition}
\newtheorem{lemma}[theorem]{Lemma}
\theoremstyle{remark}
\newtheorem{remark}{Remark}
\newtheorem{example}{Example}
\theoremstyle{definition}
\newtheorem{definition}{Definition}

\hypersetup{
  colorlinks=true,
  linkcolor=blue,
  citecolor=blue,
  urlcolor=blue,
  filecolor=blue,
  breaklinks=true
}

\begin{document}

\title{Lewis-labeled graphs: curly arrows and fishhooks as executable electron
transfers}
\shorttitle{Curly arrows and fishhooks as executable electron transfers}
\providecommand{\correspondingauthor}[1]{%
  \thanks{Corresponding author: \texttt{#1}}}
\author{%
  Tieu-Long Phan\correspondingauthor{tieu@bioinf.uni-leipzig.de}\\[0.3em]
  \normalsize Bioinformatics Group, Department of Computer Science,
  Interdisciplinary Center for Bioinformatics, and School for Embedded and Composite Artificial Intelligence (SECAI),
  Leipzig University, D-04107 Leipzig, Germany\\
  \normalsize Department of Mathematics and Computer Science,
  University of Southern Denmark, DK-5230 Odense M, Denmark%
}

\date{}
\maketitle

\begin{abstract}
The curly-arrow formalism is the lingua franca of organic reaction mechanisms,
but it is not executable.  Molecular graphs used in rule-based modeling encode
atomic connectivity while omitting lone pairs, radical electrons, and distinct
\(\sigma\) and \(\pi\) components.  Topological matching therefore cannot determine
whether a reactive center holds the electrons a step consumes, while matching
on derived formal charge can reject centers that do hold them.  We introduce the
Lewis-labeled graph (LLG), whose atom labels carry lone-pair and radical
populations and whose bond labels separate \(\sigma\)- and \(\pi\)-bond
occupancies.  Bond order, formal charge, and valence electron inventory are
derived from these fields.  Chemical transformations become resource-constrained double-pushout
rules, and curly arrows and coupled fishhooks become locus-sorted electron
transfers committed atomically from a common pre-state.  For a specified event
group, we prove that its execution, application of its induced rule, and the
corresponding integral occupancy update are equivalent.  Every admissible event
conserves valence electrons and net formal charge, and bond-centered fishhook
coupling follows from integrality rather than drawing convention.  Bidirectional
replay recovers all 39,732 mapped reference endpoints, while LLG admits 96 fewer
forward and 818 fewer inverse unique outcomes than conventional atom--bond
rules.  Ten radical records require manual annotation corrections.  After
review, transition construction succeeds for
101,313 of 101,314 records.  The remaining case requires an endpoint atom-map
correction rather than an arrow edit.  Strict replay accepts all 160 reviewed steps and rejects
all 1,120 controlled corruptions.  These results establish a common executable
state space for molecular graphs, reaction rules, and electron-flow annotations,
prior to questions of kinetic or thermodynamic feasibility.
\end{abstract}

\keywords{chemical graph, cheminformatics, electron flow, curly arrow,
double-pushout rewriting, graph rewriting, reaction mechanism}

\section{Introduction}
\label{sec:introduction}

A curly arrow~\cite{KermackRobinson1922} instructs a redistribution of electrons, yet it is inert on graphs that only understand bonds. When molecular representations strip away the valence inventories of the Lewis electron-pair model~\cite{Lewis1916AtomAndMolecule,Langmuir1919ArrangementOfElectrons}, such as lone pairs, unpaired radicals, and $\pi$ systems, rewrite rules~\cite{corradini1997algebraic,ehrig2006,andersen2016} cannot verify electron availability before they fire. Algorithms then match too permissively or too selectively, and lose atomicity. They might protonate a saturated \ce{NH4+} cation, reject an amide anion carrying exactly the pair a substitution needs because formal charge is an accounting convention rather than an observed resource~\cite{Parkin2006}, or execute coupled homolytic fishhooks as two independent half actions~\cite{Hoffman2004}. Section~\ref{sec:prelim} formalizes these matching failures and the need to record local electron capacity.

Chemical transformations are recorded at three semantic levels, as mapped reactant and product pairs~\cite{schwaller2021}, as graph rewrite rules~\cite{andersen2016,Andersen2018RuleComposition}, and as electron-flow annotations~\cite{Tavakoli2023,Tavakoli2024}. Representations that make electrons explicit close that gap~\cite{benko2003graph}. The most complete is the half-edge graph of~\citet{holzschuh2026halfedge}, whose electron-anonymous quotient, the electron-feature graph, carries lone-pair and radical counts on atoms and a bond multiplicity on bonds, and admits double-pushout rules acting on those electron resources. What that work establishes is \emph{existence}, that every balanced mapped reaction admits some sequence of electron pushes and that feature-level rules are quotients of half-edge rules. The operational converse remains open, and it is what a curator or a mechanism generator~\cite{Bradshaw2019Electro,Chen2023MechFinder,Joung2025FlowER} needs. Given a supplied and grouped annotation, decide whether it executes from a given state and reaches the declared endpoint. Existence of an explanation is not conformance of a candidate. Two details bear on it. A bond multiplicity does not distinguish a \(\sigma\) pair from a \(\pi\) pair, so the reacting pair of a multiple bond is not canonically selectable. Pushes applied one at a time supply no synchronized group assessed against one common pre-state, which is what makes coupled fishhooks obligate rather than conventional.

The Lewis-labeled graph (LLG) supplies the state that decision requires. Vertex labels carry lone-pair and radical populations, edge labels decouple $\sigma$ from $\pi$ bond occupancy, and bond order, formal charge, and valence-electron inventory are derived from these four fields rather than stored. The same fields serve at once as molecular state, as the matchable resources of a rule, and as the coordinates of electron flow, which aligns the three levels. A mapped reaction fixes two states and their correspondence, a rule declares a local resource delta, and a mechanism factors that delta into synchronized admissible events. The construction places these fields inside standard double-pushout rewriting~\cite{corradini1997algebraic,ehrig2006} and extends the attributed-graph architecture of \synkit~\cite{phan2025synkit}.

Section~\ref{sec:lsg} makes open-shell species first-class states on the ordinary atom--bond skeleton, so standard graph~\cite{hagberg2008} and cheminformatics~\cite{rdkit} libraries still apply. Section~\ref{sec:rewriting} turns the electron fields into matchable resources, so availability becomes an inequality. Partial atom maps used in the experiments are expanded by the method of Ref.~\cite{Laffitte2026}. Sections~\ref{sec:arrows} and~\ref{sec:grammar} show the three levels to be one decision on one coordinate set, and coupled fishhooks to be obligate by integrality rather than by convention. Section~\ref{sec:experiments} validates the representation by bidirectional rule replay, the transition semantics by controlled corruption, and the grammar by a corpus audit. Ten radical annotations require manual correction, and one further record requires an endpoint atom-map correction that is not applied.

\section{Notation and preliminaries}
\label{sec:prelim}

\subsection{Molecular graphs, rules, and imaginary transition states}
\label{sec:prelim-graphs}

Our notation and foundational definitions follow prior work~\cite{gonzalezlaffitte2024, phan2025syntemp,Laffitte2026}.
Unless stated otherwise, all graphs are finite, undirected, and simple. A molecular graph has vertices for atoms and edges for chemical bonds, with attributes on both carrying chemical information.
A \emph{homomorphism} $\varphi: G \to H$ is a vertex map with $\{u,v\}\in E(G)\Rightarrow\{\varphi u,\varphi v\}\in E(H)$, that is, one preserving adjacency.
A \emph{monomorphism} is an injective homomorphism, equivalently an isomorphism of $G$ onto a subgraph of $H$ that need not be induced, so exhibiting one is an instance of subgraph isomorphism~\cite{Ullmann1976SubgraphIsomorphism,GareyJohnson1979Computers}.
An \emph{isomorphism} is a bijective homomorphism that also reflects adjacency, equivalently one whose inverse is again a homomorphism.
These notions concern the underlying unlabeled graphs. Label compatibility is imposed on top of them and is defined separately for each label family below.

Chemical reactions act locally and are naturally formalized by a
double-pushout (DPO) rewriting rule~\cite{corradini1997algebraic,ehrig2006,andersen2016}, $p=(L \xleftarrow{\,l\,} K \xrightarrow{\,r\,} R)$, where the left-hand side \(L\) is the reactant-side pattern, the right-hand
side \(R\) is the product-side pattern, and the interface \(K\) records the
substructure that persists.  A rule is applied to a host graph \(G\) by
choosing a match \(m:L\to G\) and constructing the two pushout squares
\[
\begin{tikzcd}[column sep=large,row sep=large]
L \arrow[d,"m"'] &
K \arrow[l,"l"'] \arrow[r,"r"] \arrow[d,"m'"] &
R \arrow[d,"m''"] \\
G &
D \arrow[l,"g"] \arrow[r,"h"'] &
H ,
\end{tikzcd}
\]
yielding the derivation $G \derives{(p,m)} H$.  For atom-preserving
chemical rules, vertices are neither deleted nor created. The transformation
is carried by bond edits and label changes, and the span induces the local
atom-to-atom correspondence~\cite{corradini1997algebraic,ehrig2006,andersen2016}.

A mapped reaction can equivalently be condensed into a single graph.  Given an
atom-balanced reaction \(G\to H\) with an atom-to-atom map (AAM)
\(\alpha: V(G) \to V(H)\), the \emph{imaginary transition state}
(ITS)~\cite{fujita1986, gonzalezlaffitte2024}, later reintroduced as the
\emph{condensed graph of reaction}~\cite{hoonakker2011}, is the graph
\(\Upsilon(G,H,\alpha)\) on \(V(G)\) whose edge set is the union of the reactant and
product bonds transported along \(\alpha\), and whose edge \(\{u,v\}\) carries
the ordered pair \(\bigl(b_G(u,v),\,b_H(\alpha u,\alpha v)\bigr)\) of its
reactant- and product-side bond states (with an absent bond read as the null
state).  The \emph{reaction center} is the subgraph on the edges whose two
components differ together with the vertices whose label fields change.

The ITS and the rule span are two views of one object.  By Proposition~4 of Ref~\cite{phan2025syntemp}, every subgraph
\(\Gamma\subseteq\Upsilon(G,H,\alpha)\) containing the reaction center determines
a unique atom-preserving DPO rule \(p=(L\xleftarrow{l}K\xrightarrow{r}R)\) with
\(\Upsilon(L,R,r\circ l^{-1})\cong\Gamma\), and a match \(m\) making
\(G\derives{(p,m)}H\) well defined.  Its reactant side gives \(L\), its product
side \(R\), and the agreeing sides the interface \(K\), so \(\Gamma\) fixes the
preserved context the rule carries.  Section~\ref{sec:lsg} refines the bond and
vertex labels of this span.  The AAM \(\alpha\) builds the span but is not a
vertex attribute, correspondence within the rule being carried by \(l\)
and \(r\).

\subsection{Limitations of omitting electron representation}
\label{sec:limits}

Structural matching on element-and-bond patterns fails in complementary ways.

\paragraph{Over-permissive matches}
Consider a protonation rule whose reacting nitrogen must donate a lone pair.  If
its left-hand side records only a nitrogen vertex, the pattern also embeds at the
nitrogen of ammonium, which has no lone pair to donate in the two-center Lewis
model (Figure~\ref{fig:matching}A1). What the representation omits is the
resource, not the expressiveness of DPO rewriting.  An
expanded electron-locus graph encodes the precondition with an explicit
lone-pair object~\cite{benko2003graph} (A2) but
enlarges the subgraph-isomorphism
input~\cite{GareyJohnson1979Computers,Ullmann1976SubgraphIsomorphism}. The compact
alternative keeps the atom--bond skeleton and requires \(\ell_N\ge1\) (A3).
Compactness here means graph-object count, not a runtime claim.  A natural
repair is instead to constrain the pattern by formal charge, ammonia being
neutral and ammonium cationic.  It excludes the unwanted match, but only by
trading one failure for its opposite.

\begin{figure*}[htbp]
  \centering
  \resizebox{\textwidth}{!}{
\begin{tikzpicture}[
  font=\small,
  bond/.style={line width=0.85pt, black!80},
  dbond/.style={line width=0.85pt, double distance=1.0pt, black!80},
  lploop/.style={line width=0.6pt, black!75},
  satom/.style={font=\normalsize\bfseries, text=cdkN},
  atom/.style={font=\normalsize},
  glab/.style={font=\footnotesize, inner sep=1pt},
  goodarrow/.style={-{Stealth[length=2mm]}, line width=0.9pt, ioGreen!55!black,
                    shorten <=3pt, shorten >=3pt},
  badarrow/.style={-{Stealth[length=2mm]}, line width=0.9pt, ioVermilion!85!black,
                   shorten <=3pt, shorten >=3pt},
  vlab/.style={font=\scriptsize},
  colhdr/.style={font=\bfseries\small, align=center}
]
\def\cW{5.3}
\def\rH{3.05}
\def\boxSW{-0.55,-0.66}
\def\boxNE{4.60,1.62}
\def\xL{0.55}\def\xG{3.30}\def\xGb{3.10}
\def\lpup#1{\fill[black!82] (#1)++(100:0.30) circle(0.033); \fill[black!82] (#1)++(80:0.30) circle(0.033);}
\def\lpdn#1{\fill[black!82] (#1)++(260:0.30) circle(0.033); \fill[black!82] (#1)++(280:0.30) circle(0.033);}


\begin{scope}[shift={(0,0)}]
  \ioboxbadge{\boxSW}{\boxNE}{A1}
  \node[satom] (a1N) at (\xL,0.51) {N};
  \node[glab] at (\xL,-0.34) {\(L\)};
  \node[satom] (a1Nc) at (\xG,0.51) {N};
  \node[atom] (a1Hnw) at (2.70,0.92) {H}; \node[atom] (a1Hne) at (3.90,0.92) {H};
  \node[atom] (a1Hsw) at (2.70,0.12) {H}; \node[atom] (a1Hse) at (3.90,0.12) {H};
  \node[vlab,text=ioVermilion!80!black] at (3.58,0.78) {\(+\)};
  \draw[bond] (a1Nc)--(a1Hnw); \draw[bond] (a1Nc)--(a1Hne);
  \draw[bond] (a1Nc)--(a1Hsw); \draw[bond] (a1Nc)--(a1Hse);
  \node[glab] at (\xG,-0.34) {\(G\)};
  \draw[goodarrow] (1.10,0.51) -- node[above,vlab] {\(\checkmark\)} (2.15,0.51);
\end{scope}
\begin{scope}[shift={(\cW,0)}]
  \ioboxbadge{\boxSW}{\boxNE}{A2}
  \node[satom] (a2N) at (\xL,0.51) {N};
  \draw[lploop] (a2N.120) to[out=120,in=60,looseness=8] (a2N.60);
  \node[glab] at (\xL,-0.34) {\(\widetilde L\)};
  \node[satom] (a2Nc) at (\xG,0.51) {N};
  \node[atom] (a2Hnw) at (2.70,0.92) {H}; \node[atom] (a2Hne) at (3.90,0.92) {H};
  \node[atom] (a2Hsw) at (2.70,0.12) {H}; \node[atom] (a2Hse) at (3.90,0.12) {H};
  \node[vlab,text=ioVermilion!80!black] at (3.58,0.78) {\(+\)};
  \draw[bond] (a2Nc)--(a2Hnw); \draw[bond] (a2Nc)--(a2Hne);
  \draw[bond] (a2Nc)--(a2Hsw); \draw[bond] (a2Nc)--(a2Hse);
  \node[glab] at (\xG,-0.34) {\(\widetilde G\)};
  \draw[badarrow] (1.10,0.51) -- node[above,vlab] {\(\times\)} (2.15,0.51);
\end{scope}
\begin{scope}[shift={(2*\cW,0)}]
  \ioboxbadge{\boxSW}{\boxNE}{A3}
  \node[satom] (a3N) at (\xL,0.51) {N};
  \lpup{a3N}
  \node[glab] at (\xL,-0.34) {\(\ell_N\!\ge\!1\)};
  \node[satom] (a3Nc) at (\xG,0.51) {N};
  \node[atom] (a3Hnw) at (2.70,0.92) {H}; \node[atom] (a3Hne) at (3.90,0.92) {H};
  \node[atom] (a3Hsw) at (2.70,0.12) {H}; \node[atom] (a3Hse) at (3.90,0.12) {H};
  \node[vlab,text=ioVermilion!80!black] at (3.58,0.78) {\(+\)};
  \draw[bond] (a3Nc)--(a3Hnw); \draw[bond] (a3Nc)--(a3Hne);
  \draw[bond] (a3Nc)--(a3Hsw); \draw[bond] (a3Nc)--(a3Hse);
  \node[glab] at (\xG,-0.34) {\(\ell_N{=}0\)};
  \draw[badarrow] (1.10,0.51) -- node[above,vlab] {\(\times\)} (2.15,0.51);
\end{scope}

\begin{scope}[shift={(0,-\rH)}]
  \ioboxbadge{\boxSW}{\boxNE}{B1}
  \node[satom] (b1N) at (\xL,0.51) {N};
  \node[glab] at (\xL,-0.34) {\(q{=}0\)};
  \node[atom] (b1Hl) at (2.35,0.51) {H};
  \node[satom] (b1Nc) at (\xGb,0.51) {N};
  \node[vlab,text=ioVermilion!80!black] at (3.42,0.88) {\(-\)};
  \node[atom] (b1Hr) at (3.85,0.51) {H};
  \draw[bond] (b1Hl)--(b1Nc); \draw[bond] (b1Nc)--(b1Hr);
  \node[glab] at (\xGb,-0.34) {\(q{=}{-}1\)};
  \draw[badarrow] (1.10,0.51) -- node[above,vlab] {\(\times\)} (1.95,0.51);
\end{scope}
\begin{scope}[shift={(\cW,-\rH)}]
  \ioboxbadge{\boxSW}{\boxNE}{B2}
  \node[satom] (b2N) at (\xL,0.51) {N};
  \draw[lploop] (b2N.120) to[out=120,in=60,looseness=8] (b2N.60);
  \node[glab] at (\xL,-0.34) {\(\widetilde L\)};
  \node[atom] (b2Hl) at (2.35,0.51) {H};
  \node[satom] (b2Nc) at (\xGb,0.51) {N};
  \node[atom] (b2Hr) at (3.85,0.51) {H};
  \draw[bond] (b2Hl)--(b2Nc); \draw[bond] (b2Nc)--(b2Hr);
  \draw[lploop] (b2Nc.120) to[out=120,in=60,looseness=8] (b2Nc.60);
  \draw[lploop] (b2Nc.245) to[out=245,in=295,looseness=7] (b2Nc.295);
  \node[glab] at (\xGb,-0.34) {\(\widetilde G\)};
  \draw[goodarrow] (1.10,0.51) -- node[above,vlab] {\(\checkmark\)} (1.95,0.51);
\end{scope}
\begin{scope}[shift={(2*\cW,-\rH)}]
  \ioboxbadge{\boxSW}{\boxNE}{B3}
  \node[satom] (b3N) at (\xL,0.51) {N};
  \lpup{b3N}
  \node[glab] at (\xL,-0.34) {\(\ell_N\!\ge\!1\)};
  \node[atom] (b3Hl) at (2.35,0.51) {H};
  \node[satom] (b3Nc) at (\xGb,0.51) {N};
  \node[atom] (b3Hr) at (3.85,0.51) {H};
  \draw[bond] (b3Hl)--(b3Nc); \draw[bond] (b3Nc)--(b3Hr);
  \lpup{b3Nc}\lpdn{b3Nc}
  \node[glab] at (\xGb,-0.34) {\(\ell_N{=}2\)};
  \draw[goodarrow] (1.10,0.51) -- node[above,vlab] {\(\checkmark\)} (1.95,0.51);
\end{scope}
\end{tikzpicture}}
  \caption{Two failure modes of purely structural patterns, each in three
  representations: \emph{structural graph} (col.~1), \emph{electron-locus
  multigraph} with lone pairs as self-loops (col.~2), and \emph{compact label
  fields} (col.~3, LLG).  \textbf{Row A (over-permissive):} a topology-only
  donor pattern matches saturated ammonium, which the lone-pair locus and
  \(\ell_N\ge1\) both reject.  \textbf{Row B (over-selective):} strict
  formal-charge matching rejects an anionic amide, which both resource
  conditions admit.  Green \(\checkmark\)/red \(\times\) mark match/no-match.}
  \label{fig:matching}
\end{figure*}

\paragraph{Over-selective matches}
That repair treats a derived bookkeeping value as a rigid, equality-checked
label.  Formal charge is the prime example, an accounting convention derived
from a Lewis assignment rather than an observed
resource~\cite{Hoffman2004, Parkin2006}.  A nucleophilic-substitution rule
written with neutral nitrogen now fails to match an anionic amide such as
\ce{H2N-}, though the anion carries exactly the pair the step needs
(Figure~\ref{fig:matching}, row B).  Charge-annotated
interfaces~\cite{Andersen2018RuleComposition} mitigate but do not resolve this,
since the condition is resource availability, not equality of a derived value.
Relaxing that equality to an inequality on charge rescues this example but not
the general one, because Equation~\eqref{eq:charge} sums lone pairs, radicals,
bonds, and hydrogens into one number and cannot isolate the pair a step
consumes.  Requiring \(\ell_N\ge1\) names that pair directly and settles all
three centers, admitting ammonia and the amide anion while rejecting ammonium.

\subsection{Related frameworks}
\label{sec:related}

Existing frameworks resolve these gaps only partly. \modfw~\cite{andersen2016,Andersen2018RuleComposition}
implements DPO rewriting over graphs labeled with opaque strings or first-order
terms (the category $\mathbf{Graph}_{\mathcal{L}}$).  Because its rule morphisms
match labels by equality, a resource inequality such as \(\ell_N\ge1\) is not
expressible in the label algebra: variation in oxidation state, radical
character, or charge must be enumerated as disjoint label values or pushed
outside the span into constraint code, and in neither case can the rewriting
layer add, subtract, or conserve the electron resource.  This concerns where the
bookkeeping lives, not whether \modfw can express the chemistry.
\code{CGRtools}~\cite{nugmanov2019cgrtools,nugmanov2022} condenses a reaction
into one graph with dynamic bond attributes and atom-level radical marks, but
these are state attributes, not matchable resources under localized conservation
laws.
\rdkit~\cite{rdkit} matches SMARTS templates applying local heuristic edits,
without span-based pushout semantics, serving perception rather than rule
identity.  \synkit~1.0~\cite{phan2025synkit} supported mapped graphs and ITS
representations but modeled no nonbonding electrons or radical resources.

Mechanism-generation systems address a complementary task.  \code{ELECTRO}
learns two-electron paths from mapped endpoints~\cite{Bradshaw2019Electro},
\code{MechFinder} applies expert-coded templates~\cite{Chen2023MechFinder}, and
\code{FlowER} models electron redistribution by flow
matching~\cite{Joung2025FlowER}, each enforcing conservation internally.  The
present work is not a generator but a deterministic and auditable acceptance
relation, so such a system can sit upstream of it without changing its
semantics.

The closest formal comparison is the half-edge graph (HEG) and electron-feature
graph (EFG) of~\citet{holzschuh2026halfedge}.  Their HEG is a tuple
$\mathcal{H}=(H,V,E,\psi)$ with half-edges $H$, atoms $V$, a partition $E$ of a
subset of $H$ into bond pairs and lone-pair loops, and an ownership map
$\psi\colon V\to 2^H$, on which they prove that any balanced mapped reaction is
explained by a sequence of single- and two-electron half-edge moves.  Dropping
electron identity gives the EFG, the category of ACSets over a schema with
sorts $\textsf{Atom}$, $\textsf{Pair}$, $\textsf{LP}$, $\textsf{Rad}$, reached
by a coarsening functor
$C\colon\mathbf{HEG}^{\mathrm{nf}}\to\mathbf{EFG}$.  In labeled form the EFG
already carries lone-pair and radical counts on atoms and a bond multiplicity
on bonds, and they show it to be an exact static quotient of half-edge graphs.
They further give double-pushout rules as spans of these ACSets, so electron
resources are themselves deleted, preserved, and created by the rule, and prove
these to be quotients of electron-anonymous half-edge rules.

Neither an explicit electron representation nor rewriting over one is therefore
what separates the present work, and three other things do.  Their results are
existence and quotient statements, and neither settles whether a supplied and
grouped annotation can execute from a given state and reach a declared
endpoint.  A single bond multiplicity leaves the reacting pair of a multiple
bond non-canonical, whereas separating \(\sigma\) from \(\pi\) makes
\(\pi_e\) a determinate transition locus.  Their pushes also carry no event
group whose availability is read from one common pre-state, which is what
forces coupled fishhooks to commit together.  The two encodings also place the
bookkeeping differently.  Matching one $\textsf{LP}$ object embeds into any atom
carrying at least one, so an ACSet rule obtains resource inequality for free at
the cost of object count, while the scalar encoding keeps the atom--bond
skeleton compact and expresses the same condition as a residual variable with an
application condition (Section~\ref{sec:rewriting}).  Neither dominates, and
their categorical relation is deferred to
Section~\ref{sec:efg-llg-relation}.

\section{Lewis-labeled graphs}
\label{sec:lsg}

\subsection{Definition and policy-valid states}

\begin{definition}
\label{def:lsg}
A \emph{Lewis-labeled graph} (LLG) is an attributed undirected molecular graph
\(G=(V,E,a,b)\), where each atom \(v\) has label
\(a(v)=(el_v,\ell_v,r_v)\in\mathrm{Symbol}\times\mathbb N_0^2\)
(element, lone pairs, and radical electrons) and each
edge \(e=\{u,v\}\) has label \(b(e)=(s_e,p_e)\in\mathbb N_0^2\) giving its
\(\sigma\)- and \(\pi\)-bond orders.  The neutral-atom valence-electron count
\(z_v\) is a cached function of \(el_v\), not an independent variable.  The total
Kekul\'e bond order is
\begin{equation}
  o_e = s_e + p_e.
  \label{eq:bond-order}
\end{equation}
For clear endpoints we write \(s_{uv},p_{uv},o_{uv}\). Thus ammonia has \(\ell_N=1\), a methyl radical has \(r_C=1\), and an ordinary double bond has \((s_e,p_e)=(1,1)\).
\end{definition}

All hydrogens are explicit vertices here.  An implementation may instead store
an optional implicit-hydrogen count as auxiliary state.

By modeling electron resources as scalar label fields rather than pseudo-atoms or electron vertices, the underlying graph handed to standard graph algorithms remains the compact atom--bond graph of Section~\ref{sec:prelim-graphs}.
Conventional two-center bonds carry \(s_e=1\) and \(p_e \in \{0,1,2\}\), representing single, double, and triple bonds respectively, and Figure~\ref{fig:llg-anatomy} shows an example with its derived quantities.
These same four fields are the electron \emph{loci} of the LLG, the sites between which the arrow-pushing calculus of Section~\ref{sec:arrows} transfers electrons.

\begin{figure}[htbp]
  \centering
  \resizebox{\textwidth}{!}{
\begin{tikzpicture}[
  font=\small,
  panel/.style={iopanel, line width=0.55pt},
  stored/.style={rounded corners=2.5pt, align=center, inner sep=5pt,
                 font=\scriptsize},
  vertex/.style={stored, draw=ioBlue!58!black, fill=ioBlue!6},
  edge/.style={stored, draw=ioPurple!65!black, fill=ioPurple!7},
  result/.style={rounded corners=2.5pt, draw=black!20, fill=white,
                 align=center, inner sep=4.5pt, minimum height=1.42cm,
                 font=\scriptsize},
  tag/.style={font=\scriptsize\bfseries, text=black!58},
  leadA/.style={draw=ioBlue!70!black, line width=0.5pt,
                dash pattern=on 1.5pt off 1.5pt, -{Stealth[length=3.2pt]}},
  leadB/.style={draw=ioPurple!72!black, line width=0.5pt,
                dash pattern=on 1.5pt off 1.5pt, -{Stealth[length=3.2pt]}},
  flow/.style={line width=0.9pt, -{Stealth[length=3.6pt,width=3.2pt]}},
  halo/.style={draw=ioBlue!60!black, line width=0.7pt, opacity=0.55}
]

\path[use as bounding box] (-6.25,-2.80) rectangle (6.25,2.60);

\draw[panel] (-6.10,-2.65) rectangle (2.75,2.45);
\draw[panel] ( 3.00,-2.65) rectangle (6.10,2.45);

\iosub{-5.75,2.13}{A}{Formaldehyde encoding}

\node[vertex, text width=2.72cm] (vf) at (-4.48,0.86) {%
  \textcolor{ioBlue!72!black}{\textbf{oxygen vertex}}\\
  \(a(v)=(\mathrm{el},\textcolor{ioBlue!80!black}{\ell},\textcolor{ioBlue!80!black}{r})\)\\
  \(a(\mathrm O)=(\mathrm O,\textcolor{ioBlue!80!black}{\mathbf 2},\textcolor{ioBlue!80!black}{\mathbf 0})\)
};

\node[catom, font=\normalsize\bfseries] (C) at (-1.72,0.90) {\aC};
\node[catom, font=\normalsize\bfseries] (O) at (-0.57,0.90) {\aO};
\node[catom] (H1) at (-2.47,1.39) {\aH};
\node[catom] (H2) at (-2.47,0.41) {\aH};
\draw[cbond] (C)--(H1);
\draw[cbond] (C)--(H2);

\node[halo, circle, minimum size=6.2mm] at (O) {};

\draw[line width=1.55pt, ioBlue!85!black]
  ($(C)+(0.20,0.085)$)--($(O)+(-0.20,0.085)$);
\draw[line width=1.55pt, ioPurple!82!black]
  ($(C)+(0.20,-0.085)$)--($(O)+(-0.20,-0.085)$);
\node[tag, text=ioBlue!75!black] at ($(C)!0.5!(O)+(0,0.31)$) {\(\sigma\)};
\node[tag, text=ioPurple!78!black] at ($(C)!0.5!(O)+(0,-0.31)$) {\(\pi\)};

\node[celp] at ($(O)+(0.28,0.30)$) {};
\node[celp] at ($(O)+(0.10,0.42)$) {};
\node[celp] at ($(O)+(0.28,-0.30)$) {};
\node[celp] at ($(O)+(0.10,-0.42)$) {};

\node[edge, text width=2.35cm] (ef) at (1.35,0.86) {%
  \textcolor{ioPurple!76!black}{\textbf{C--O edge}}\\
  \(b(e)=(\textcolor{ioPurple!80!black}{s},\textcolor{ioPurple!80!black}{p})\)\\
  \(b(\{\mathrm C,\mathrm O\})=(\textcolor{ioPurple!80!black}{\mathbf 1},\textcolor{ioPurple!80!black}{\mathbf 1})\)
};

\draw[leadA] (vf.north east) to[out=22,in=150] ($(O)+(-0.02,0.40)$);
\draw[leadB] (ef.south west) to[out=-150,in=-30] ($(C)!0.58!(O)+(0,-0.17)$);

\node[tag, text=black!45] at (-5.55,-0.34) {decode};
\draw[flow, ioPurple!70!black] (-4.78,-0.52) -- (-4.78,-0.66);   
\draw[flow, ioBlue!70!black]  (-2.15,-0.52) -- (-2.15,-0.66);   
\draw[flow, black!45]         ( 0.92,-0.52) -- ( 0.92,-0.66);   

\node[result, text width=1.85cm] at (-4.78,-1.42) {%
  \textcolor{ioPurple!78!black}{\textbf{bond order}}\\[1pt]
  \(o_{\mathrm{CO}}=s+p\)\\
  \(=1+1=2\)
};

\node[result, text width=2.62cm] at (-2.15,-1.42) {%
  \textcolor{ioBlue!75!black}{\textbf{nonbonding electrons}}\\[1pt]
  \(n_{\mathrm O}=2\ell+r\)\\
  \(=2(2)+0=4\)
};

\node[result, text width=2.68cm] at (0.92,-1.42) {%
  \textbf{formal charge}\\[1pt]
  \(q_{\mathrm O}=z_{\mathrm O}-n_{\mathrm O}-\sum o_e\)\\
  \(=6-4-2=0\)
};

\iosub{3.05,2.13}{B}{Methyl radical}

\node[catom, font=\normalsize\bfseries] (Rc) at (4.55,0.92) {\aC};
\node[crad] (rad) at ($(Rc)+(0,0.44)$) {};
\node[halo, draw=ioBlue!55!black, circle, minimum size=3.2mm] at (rad) {};
\node[catom] (Rh1) at (3.80,0.33) {\aH};
\node[catom] (Rh2) at (5.30,0.33) {\aH};
\node[catom] (Rh3) at (4.55,0.12) {\aH};
\draw[cbond] (Rc)--(Rh1);
\draw[cbond] (Rc)--(Rh2);
\draw[cbond] (Rc)--(Rh3);

\node[vertex, text width=2.62cm] (rb) at (4.55,-1.30) {%
  \textcolor{ioBlue!72!black}{\textbf{carbon vertex}}\\
  \(a(\mathrm C)=(\mathrm C,0,\underset{r}{\underline{\textcolor{ioBlue!85!black}{\mathbf 1}}})\)\\[2pt]
  \(n_{\mathrm C}=2\ell+r=2(0)+1=1\)\\
  \textcolor{black!55}{one unpaired electron}
};

\draw[leadA] (rb.north east) to[out=72,in=-18] ($(rad)+(0.14,-0.02)$);

\end{tikzpicture}}
  \caption{\textbf{(A)} Stored vertex and
  edge fields for formaldehyde, with three quantities decoded from them.
  \textbf{(B)} Radical occupancy represented as a vertex field.}
  \label{fig:llg-anatomy}
\end{figure}

\begin{definition}
\label{def:loci}
The \emph{electron loci} of an LLG \(G\) are
\begin{equation}
  \Loc(G) = \{\operatorname{lp}(v), \rho_v \colon v \in V\}
  \cup \{\sigma_{\{u,v\}}, \pi_{\{u,v\}} \colon \{u,v\}\in\tbinom{V}{2}\},
  \label{eq:locus-set}
\end{equation}
namely the lone-pair and radical loci \(\operatorname{lp}(v),\rho_v\) on each vertex and the
\(\sigma\)- and \(\pi\)-bond loci \(\sigma_{\{u,v\}},\pi_{\{u,v\}}\) on each
unordered pair (abbreviated \(\sigma_{uv},\pi_{uv}\)).  Their \emph{occupancy}
reads off the four label fields,
\[
  c_G(\operatorname{lp}(v))=\ell_v,\quad c_G(\rho_v)=r_v,\quad
  c_G(\sigma_{uv})=s_{uv},\quad c_G(\pi_{uv})=p_{uv},
\]
with \(c_G(\sigma_{uv})=c_G(\pi_{uv})=0\) when \(\{u,v\}\notin E(G)\).  Each locus
carries a \emph{sort} in \(\{\operatorname{lp},\rho,\sigma,\pi\}\), and these four
sorts are exactly the four resource coordinates \(J=\{\ell,r,s,p\}\) used for
rewriting in Section~\ref{sec:rewriting}.  Apart from \(\operatorname{lp}\),
each sort symbol is the Greek counterpart of its occupancy letter, so that
\(\rho\), \(\sigma\), and \(\pi\) carry \(r\), \(s\), and \(p\) respectively.  The locus view therefore adds no
state: it records only that the rewriting coordinates and the electron-flow
coordinates are one and the same.
\end{definition}

\begin{definition}
\label{def:policy-valid}
An LLG \(G\) is \emph{\(\Pi\)-valid} when all \(\ell_v,r_v,s_e,p_e\) are nonnegative integers and satisfy the local constraints selected by \(\Pi\). A main-group policy may require \(s_e\le1\), \(p_e>0\Rightarrow s_e=1\), and
  \[
    2\ell_v+r_v+2\sum_{e\sim v}(s_e+p_e)
    \le \kappa_{\Pi}(v).
  \]
Here \(\kappa_{\Pi}(v)\in\mathbb N_0\cup\{\infty\}\) is the
policy-selected upper bound on the valence-shell electron population at
\(v\).  For example, a duet/octet policy assigns 2 to hydrogen and 8 to an
octet-limited main-group atom, while \(\kappa_{\Pi}(v)=\infty\) disables this
bound.
A closed-shell policy also imposes \(r_v=0\), while open-shell, hypervalent, and transition-metal policies may relax the corresponding bounds.
\end{definition}

All formal results are parameterized by \(\Pi\).  The corpus experiments
instantiate it as
\begin{equation}
  \Pi_{\mathrm{exec}}:\quad
  \ell_v,r_v,s_e,p_e\in\mathbb N_0,\quad
  s_e\in\{0,1\},\quad
  p_e>0\implies s_e=1,\quad
  \kappa_{\Pi}\equiv\infty.
  \label{eq:pi-exec}
\end{equation}
Two relaxations are deliberate and independent: dropping the octet bound admits
hypervalent main-group structures, which appear throughout the input corpora,
and permitting \(r_v>0\) admits the open-shell states the radical corpus
requires.  Element-specific capacities and transition-metal bonding remain
outside the empirical claim.  Being this permissive,
\(\Pi_{\mathrm{exec}}\) rejects only non-integral or negative occupancies and
malformed \(\sigma/\pi\) combinations, so the substantive acceptance conditions
tested in Section~\ref{sec:experiments} are integrality, common-prestate
availability, and endpoint consistency rather than the policy itself.  We report
the permissive case so that no rejection is attributable to a contestable
valence convention.

\subsection{Derived quantities and conservation}
\label{sec:conservation}

\begin{definition}
\label{def:derived}
For an LLG \(G\) and \(v\in V(G)\), the \emph{nonbonding electron count}, the
\emph{degree}, and the derived \emph{formal charge} are
\begin{equation}
  n_v = 2\ell_v + r_v,
  \label{eq:nbe}
\end{equation}
\begin{equation}
  \deg(v) = n_v + \sum_{e \sim v} o_e,
  \label{eq:degree}
\end{equation}
\begin{equation}
 q_v = z_v - \deg(v),
 \label{eq:charge}
\end{equation}
where \(e\sim v\) denotes incidence.  The degree collects the electrons \(v\)
owns, all of its nonbonding electrons together with one per unit of bond order,
as in the standard Lewis convention~\cite{Parkin2006}.  It is the incident half-edge count of
Ref.~\cite{holzschuh2026halfedge}, so \(q_v=\operatorname{ch}(v)\) there and the
two models yield identical charges, the LLG computing them from scalar counts
rather than by partitioning half-edges.
\end{definition}

Boundary formal charges are redundant: derived directly from label fields,
they serve as consistency checks rather than independent variables.
Define the \emph{represented electron inventory} and
\emph{neutral valence pool} by
\begin{equation}
  \Ie(G) = \sum_{v \in V(G)} \deg(v),
  \label{eq:inventory-def}
\end{equation}
\begin{equation}
  \ValG(G) = \sum_{v \in V(G)} z_v.
  \label{eq:valence-pool-def}
\end{equation}
Since each edge is incident with two vertices, expanding
Equation~\eqref{eq:degree} gives
\(\Ie(G)=\sum_{v}n_v+2\sum_{e}o_e\).

\begin{lemma}
\label{lem:inventory}
For any LLG \(G\), let \(Q(G)=\sum_{v\in V(G)}q_v\).  Then
\begin{equation}
  Q(G) = \ValG(G) - \Ie(G).
  \label{eq:charge-inventory-identity}
\end{equation}
\end{lemma}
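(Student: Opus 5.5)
The plan is to sum the defining identity for formal charge, Equation~\eqref{eq:charge}, over all vertices and then recognize the two resulting sums as the pool \(\ValG(G)\) and the inventory \(\Ie(G)\). Because \(V(G)\) is finite, every sum involved is a finite sum of integers. Linearity of summation therefore applies with no convergence or reindexing issues.

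First, I would write
\[
Q(G)=\sum_{v\in V(G)} q_v=\sum_{v\in V(G)}\bigl(z_v-\deg(v)\bigr)=\sum_{v\in V(G)} z_v-\sum_{v\in V(G)}\deg(v).
\]
Second, I would identify the first sum with \(\ValG(G)\) by Equation~\eqref{eq:valence-pool-def} and the second with \(\Ie(G)\) by Equation~\eqref{eq:inventory-def}. This already gives Equation~\eqref{eq:charge-inventory-identity}. The point worth making explicit is that \(z_v\) is a function of \(el_v\) alone, as stated in Definition~\ref{def:lsg}, so it is a well-defined integer. Likewise \(\deg(v)\) is well defined from Equation~\eqref{eq:degree} for any LLG, independently of \(\Pi\)-validity, so the identity holds for every LLG and not only for policy-valid ones.

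For completeness, I would also record the expanded form using the handshake count already noted before the lemma, \(\Ie(G)=\sum_v n_v+2\sum_e o_e\). This gives
\[
Q(G)=\ValG(G)-\sum_v(2\ell_v+r_v)-2\sum_e(s_e+p_e).
\]
This form is the one later conservation arguments will want. It follows by exchanging the double sum \(\sum_v\sum_{e\sim v}o_e\) into \(\sum_e 2o_e\), since each edge has exactly two endpoints because graphs are simple. The proof has no real obstacle. The only care needed is to state that the definition is being used verbatim and that finiteness justifies splitting the sum.
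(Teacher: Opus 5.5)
Your proposal is correct and is the paper's own argument: sum Equation~\eqref{eq:charge} over the finite vertex set, split the sum, and identify the two parts with \(\ValG(G)\) and \(\Ie(G)\). The expanded form via the handshake count is an optional extra that the paper already records just before the lemma.
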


\begin{proof}
Summing Equation~\eqref{eq:charge} over \(V(G)\) gives
\(Q(G)=\sum_{v}z_v-\sum_{v}\deg(v)=\ValG(G)-\Ie(G)\).
\end{proof}

\begin{proposition}
\label{prop:conservation}
Let \(G\to H\) be atom balanced: there exists an element-preserving bijection
\(\alpha\colon V(G)\to V(H)\).  Then
\begin{equation}
  Q(H) - Q(G) = -\bigl(\Ie(H) - \Ie(G)\bigr),
  \label{eq:charge-inventory-diff}
\end{equation}
so the represented electron inventory is conserved if and only if total
formal charge is conserved.
\end{proposition}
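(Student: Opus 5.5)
The plan is to apply Lemma~\ref{lem:inventory} to both endpoints and subtract, so that everything reduces to showing the neutral valence pool is the same on both sides. By Lemma~\ref{lem:inventory}, $Q(G)=\ValG(G)-\Ie(G)$ and $Q(H)=\ValG(H)-\Ie(H)$. Subtracting gives
\[
  Q(H)-Q(G) = \bigl(\ValG(H)-\ValG(G)\bigr) - \bigl(\Ie(H)-\Ie(G)\bigr).
\]
Equation~\eqref{eq:charge-inventory-diff} therefore follows once $\ValG(H)=\ValG(G)$ is established.

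This equality is where atom balance enters, and it is the only step with real content. Since $z_v$ is a cached function of the element symbol alone (Definition~\ref{def:lsg}), and $\alpha$ preserves elements, we have $el_{\alpha v}=el_v$ and hence $z_{\alpha v}=z_v$ for every $v\in V(G)$. Because $\alpha$ is a bijection $V(G)\to V(H)$, reindexing the finite sum gives
\[
  \ValG(H)=\sum_{w\in V(H)}z_w=\sum_{v\in V(G)}z_{\alpha v}=\sum_{v\in V(G)}z_v=\ValG(G).
\]

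The biconditional is then immediate, since the displayed identity says $\Delta Q=-\Delta\Ie$. In particular, $\Delta Q=0$ holds exactly when $\Delta\Ie=0$.

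No step is a genuine obstacle. The only point to state carefully is that $z_v$ depends on $el_v$ and on nothing else, such as lone-pair or radical fields. This is what makes $\ValG$ invariant under any element-preserving bijection, regardless of how the electron fields change between $G$ and $H$.
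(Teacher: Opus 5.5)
Your proposal is correct and follows the paper's proof exactly: element preservation under the bijection $\alpha$ gives $\ValG(G)=\ValG(H)$, and subtracting the two instances of Lemma~\ref{lem:inventory} yields the identity and hence the biconditional. Your explicit reindexing of the sum over $V(H)$, with the remark that $z_v$ depends only on $el_v$, just spells out the step the paper states in one line.
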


\begin{proof}
The element-preserving bijection gives
\(\ValG(G)=\ValG(H)\).  Subtracting the two instances of
Lemma~\ref{lem:inventory} proves the claim.
\end{proof}

This conservation condition is intentionally weaker than chemical feasibility:
it captures only algebraic stoichiometry and formal redox balance~\cite{dugundji1973algebraic,silverstein2011oxidation},
providing a baseline invariant check while kinetic and thermodynamic
constraints remain outside the transition semantics.

\section{Resource-constrained DPO rewriting of LLGs}
\label{sec:rewriting}

The construction separates permission, availability, and validity: undefined
interface attributes specify which coordinates a rule may change, residual terms
test whether a host has the required resources and instantiate the rule at a
chosen occurrence, ordinary DPO gluing constructs the successor, and the
application predicate with policy \(\Pi\) decides admission.

\subsection{Resource terms and partial attributes}

We use ordinary DPO rewriting with symbolic resource labels and explicit
application conditions~\cite{orejas2011symbolic,habel2009correctness}, every
instantiated attributed match remaining exact.  Let
\(J=\{\ell,r,s,p\}\) index the four resource coordinates, equivalently the four
locus sorts \(\{\operatorname{lp},\rho,\sigma,\pi\}\) of
Definition~\ref{def:loci}, and let
\[
  A=\bigl(A_j,+_j,0_j\bigr)_{j\in J},
  \qquad A_j=\mathbb N_0,
\]
be the \(J\)-sorted commutative-monoid algebra.  Sorting prevents substitution
between unlike resources, and addition separates what a rule consumes or
produces from the residual retained from the host.  For a finite sorted
variable set \(X\), let \(T_A(X)\) be the corresponding term algebra generated
by \(X\), constants in \(A\), and sort-respecting addition.  Every
sort-preserving valuation \(\theta:X\to A\) extends uniquely to the evaluation
homomorphism \(\widehat\theta:T_A(X)\to A\).

Let \(\Sigma\) be the many-sorted attribute signature of
Definition~\ref{def:lsg}, with fixed carriers for the non-resource sorts and
resource carriers supplied by an algebra \(B\).  Write
\(\mathsf{PAttr}_{\Sigma}(B)\) for the category of finite graphs with
partial \(\Sigma\)-attributes.  Its morphisms preserve incidence and every
attribute defined at the domain~\cite{habel2002relabelling}.  Host graphs lie in
the totally labeled subcategory
\(\mathbf{LLG}_{\Pi}^{=}\) of \(\Pi\)-valid LLGs with exact morphisms.
Rule schemata instead use \(B=T_A(X)\) and partial interfaces.  We restrict
rule legs and matches to injective relabeling morphisms.  Natural pushout
complements along this class exist uniquely in the setting used
below~\cite{habel2002relabelling,lack2004adhesive,ehrig2014madhesive}.

\subsection{Rule schemata and partial relabeling interfaces}

\begin{definition}
\label{def:partial-relabeling-interface}
Let \(L\) and \(R\) be totally labeled graphs over \(T_A(X)\).  A graph \(K\) in a span
\(L\xleftarrow{l}K\xrightarrow{r}R\) is a \emph{partial relabeling interface}
when it contains exactly the persistent vertices and edges and, for every
persistent object \(q\) and resource coordinate \(j\),
\begin{equation}
 c_K^j(q)\mathbin{\downarrow}
 \quad\Longrightarrow\quad
 c_L^j(l(q))=c_K^j(q)=c_R^j(r(q)).
 \label{eq:defined-interface-coordinate}
\end{equation}
A defined coordinate is therefore preserved by both rule legs.  The value
\(c_K^j(q)=\bot\) removes that equality constraint.  It permits relabeling but
does not supply the new value.
\end{definition}

\begin{definition}
\label{def:resource-constrained-llg-rule}
A \emph{resource-constrained LLG rule schema} is a span of monomorphisms
\begin{equation}
  p=\bigl(L\xleftarrow{\,l\,}K\xrightarrow{\,r\,}R\bigr)
  \quad\text{in }\mathsf{PAttr}_{\Sigma}\bigl(T_A(X)\bigr),
  \label{eq:resource-constrained-rule-span}
\end{equation}
together with a finite sorted variable set \(X\) and an application predicate
\(\Phi_p(G,m,\theta)\).  Here \(L,R\) are total and \(K\) is a partial
relabeling interface.  Every changed coordinate \(x\) on a persistent carrier
has the residual form
\begin{equation}
  c_L(x)=u_x+d_x^{-},\qquad c_R(x)=u_x+d_x^{+},
  \qquad u_x\in X,\quad d_x^{-},d_x^{+}\in\mathbb N_0,
  \label{eq:residual-rule-label}
\end{equation}
where \(u_x\) is fresh for \(x\).  It denotes the resource left after the rule
consumes \(d_x^{-}\), before the rule contributes \(d_x^{+}\).  Whenever
\(d_x^{-}\neq d_x^{+}\) the two labels differ, so
Equation~\eqref{eq:defined-interface-coordinate} forces \(c_K(x)=\bot\): a
changed coordinate is necessarily undefined in the interface.  Created and
deleted edges are represented by \(R\setminus K\) and \(L\setminus K\),
respectively.  Element identity is defined in \(K\), while charge and total
bond order remain derived.  All rules considered below are atom-preserving,
so \(V(K)=V(L)=V(R)\).
\end{definition}

\begin{example}
\label{ex:beta-scission-lsg}
For neopentoxyl-radical \(\beta\)-scission,
\[
  \ce{(CH3)3C-CH2-O^{.} -> (CH3)3C^{.} + CH2=O},
\]
the interface leaves the changing radical and \(\pi\) coordinates undefined
and deletes the \(c_1{-}c_2\) edge structurally
(Figure~\ref{fig:beta-scission-dpo}).  Its deltas balance,
\(1(-1)+2(+1)+2(-1)+1(+1)=0\), so the rule conserves \(\Ie\) and hence, by
Proposition~\ref{prop:conservation}, total charge.
Definition~\ref{def:resource-constrained-llg-rule} does not force that balance
in general.  Section~\ref{sec:arrows} obtains it for every rule induced by an
event group and factors the same delta into three fishhooks.
\end{example}

\begin{figure*}[htbp]
  \centering
  \resizebox{\textwidth}{!}{
\begin{tikzpicture}[
  font=\small,
  catom/.append style={font=\normalsize, inner sep=1.8pt},
  el/.style={inner sep=1pt, font=\scriptsize},
  vname/.style={font=\scriptsize\itshape, text=black!55, inner sep=0.4pt},
  ccap/.style={font=\footnotesize\itshape, text=black!72, align=center},
  mor/.style={-{Stealth[length=2.2mm]}, line width=0.7pt, draw=black!55},
  ml/.style={font=\itshape, text=black!70, inner sep=1pt},
  free/.style={font=\footnotesize, text=ioVermilion!85!black},
  dltn/.style={font=\scriptsize, text=black!60, align=center}
]
\def\cA{0}\def\cB{5.4}\def\cC{10.8}
\def\rT{0}\def\rB{-4.6}

\begin{scope}[shift={(\cA,\rT)}]
  \node[catom](Lo)  at (-1.25,0){\aO};  \node[crad] at ($(Lo)+(0,0.42)$){};
  \node[catom](Lc20) at (0,0){\aC};
  \node[catom](Lc21) at (1.25,0){\aC};
  \draw[cbond](Lo)--(Lc20); \draw[cbond](Lc20)--(Lc21);
  \node[vname] at ($(Lo)+(0,-0.42)$){$o$};
  \node[vname] at ($(Lc20)+(0,-0.42)$){$c_1$};
  \node[vname] at ($(Lc21)+(0,-0.42)$){$c_2$};
  \node[ccap] at (0,-1.2){$L$};
\end{scope}
\begin{scope}[shift={(\cB,\rT)}]
  \node[catom](Ko)  at (-1.25,0){\aO};
  \node[catom](Kc20) at (0,0){\aC};
  \node[catom](Kc21) at (1.25,0){\aC};
  \draw[cbond](Ko)--(Kc20);
  \draw[dash pattern=on 1.6pt off 1.6pt, draw=black!30, line width=0.7pt] (Kc20)--(Kc21);
  \node[free] at ($(Ko)+(0,0.42)$){$\bot$};
  \node[free] at ($(Ko)!0.5!(Kc20)+(0,0.30)$){$\bot$};
  \node[free] at ($(Kc21)+(0,0.42)$){$\bot$};
  \node[vname] at ($(Ko)+(0,-0.42)$){$o$};
  \node[vname] at ($(Kc20)+(0,-0.42)$){$c_1$};
  \node[vname] at ($(Kc21)+(0,-0.42)$){$c_2$};
  \node[ccap] at (0,-1.2){$K$};
\end{scope}
\begin{scope}[shift={(\cC,\rT)}]
  \node[catom](Ro)  at (-1.25,0){\aO};
  \node[catom](Rc20) at (0,0){\aC};
  \node[catom](Rc21) at (1.25,0){\aC};  \node[crad] at ($(Rc21)+(0,0.42)$){};
  \draw[cbond]($(Ro)+(0.20,0.06)$)--($(Rc20)+(-0.20,0.06)$);
  \draw[cbond]($(Ro)+(0.20,-0.06)$)--($(Rc20)+(-0.20,-0.06)$);
  \node[vname] at ($(Ro)+(0,-0.42)$){$o$};
  \node[vname] at ($(Rc20)+(0,-0.42)$){$c_1$};
  \node[vname] at ($(Rc21)+(0,-0.42)$){$c_2$};
  \node[ccap] at (0,-1.2){$R$};
\end{scope}

\begin{scope}[shift={(\cA,\rB)}]
  \begin{pgfonlayer}{background}
    \fill[ioBlue!8, rounded corners=3pt] (-2.15,-0.34) rectangle (1.05,0.66);
  \end{pgfonlayer}
  \node[catom](Go)  at (-1.65,0){\aO};  \node[crad] at ($(Go)+(0,0.42)$){};
  \node[catom](Gc20) at (-0.45,0){\aC};
  \node[catom](Gc21) at (0.75,0){\aC};
  \draw[cbond](Go)--(Gc20); \draw[cbond](Gc20)--(Gc21);
  \draw[cbond](Gc20)--($(Gc20)+(0,0.52)$); \node[el] at ($(Gc20)+(0,0.68)$){\aH};
  \draw[cbond](Gc20)--($(Gc20)+(0,-0.52)$); \node[el] at ($(Gc20)+(0,-0.68)$){\aH};
  \draw[cbond]($(Gc21)+(0.14,0.06)$)--++(0.52,0.5);
  \draw[cbond]($(Gc21)+(0.14,-0.06)$)--++(0.52,-0.5);
  \draw[cbond]($(Gc21)+(0.20,0)$)--++(0.7,0);
  \node[vname] at ($(Go)+(-0.02,-0.42)$){$o$};
  \node[vname] at ($(Gc21)+(-0.05,-0.42)$){$c_2$};
  \node[ccap] at (-0.4,-1.28){$G$};
\end{scope}
\begin{scope}[shift={(\cB,\rB)}]
  \node[catom](Do)  at (-1.7,0){\aO};  \node[crad] at ($(Do)+(0,0.42)$){};
  \node[catom](Dc20) at (-0.55,0){\aC};
  \draw[cbond](Do)--(Dc20);
  \draw[cbond](Dc20)--($(Dc20)+(0,0.52)$); \node[el] at ($(Dc20)+(0,0.68)$){\aH};
  \draw[cbond](Dc20)--($(Dc20)+(0,-0.52)$); \node[el] at ($(Dc20)+(0,-0.68)$){\aH};
  \node[catom](Dc21) at (0.95,0){\aC};
  \draw[cbond]($(Dc21)+(0.14,0.06)$)--++(0.52,0.5);
  \draw[cbond]($(Dc21)+(0.14,-0.06)$)--++(0.52,-0.5);
  \draw[cbond]($(Dc21)+(-0.14,0.06)$)--++(-0.5,0.5);
  \node[vname] at ($(Do)+(-0.02,-0.42)$){$o$};
  \node[vname] at ($(Dc21)+(-0.05,-0.42)$){$c_2$};
  \node[ccap] at (0,-1.28){$D$};
\end{scope}
\begin{scope}[shift={(\cC,\rB)}]
  \node[catom](Ho)  at (-1.7,0){\aO};
  \node[catom](Hc20) at (-0.55,0){\aC};
  \draw[cbond]($(Ho)+(0.20,0.06)$)--($(Hc20)+(-0.20,0.06)$);
  \draw[cbond]($(Ho)+(0.20,-0.06)$)--($(Hc20)+(-0.20,-0.06)$);
  \draw[cbond](Hc20)--($(Hc20)+(0,0.52)$); \node[el] at ($(Hc20)+(0,0.68)$){\aH};
  \draw[cbond](Hc20)--($(Hc20)+(0,-0.52)$); \node[el] at ($(Hc20)+(0,-0.68)$){\aH};
  \node[el, text=black!55] at (0.15,0){$+$};
  \node[catom](Hc21) at (0.95,0){\aC};  \node[crad] at ($(Hc21)+(0,0.44)$){};
  \draw[cbond]($(Hc21)+(0.14,-0.06)$)--++(0.52,-0.5);
  \draw[cbond]($(Hc21)+(-0.14,-0.06)$)--++(-0.5,-0.5);
  \draw[cbond]($(Hc21)+(0.20,0.02)$)--++(0.68,0.16);
  \node[vname] at ($(Ho)+(-0.02,-0.42)$){$o$};
  \node[vname] at ($(Hc21)+(-0.05,-0.42)$){$c_2$};
  \node[ccap] at (0,-1.28){$H$};
\end{scope}

\draw[mor] (\cB-2.15,\rT+0.05) -- (\cA+2.15,\rT+0.05)
  node[ml, midway, above=0pt]{$l$};
\draw[mor] (\cB+2.15,\rT+0.05) -- (\cC-2.15,\rT+0.05)
  node[ml, midway, above=0pt]{$r$};
\draw[mor] (\cB-2.15,\rB+0.05) -- (\cA+2.15,\rB+0.05)
  node[ml, midway, above=0pt]{$g$};
\draw[mor] (\cB+2.15,\rB+0.05) -- (\cC-2.15,\rB+0.05)
  node[ml, midway, above=0pt]{$h$};
\draw[mor] (\cA,\rT-1.7) -- (\cA,\rB+1.55) node[ml, midway, right=1pt]{$m$};
\draw[mor] (\cB,\rT-1.7) -- (\cB,\rB+1.55) node[ml, midway, right=1pt]{$m'$};
\draw[mor] (\cC,\rT-1.7) -- (\cC,\rB+1.55) node[ml, midway, right=1pt]{$m''$};

\end{tikzpicture}}
  \caption{Resource-constrained DPO derivation for the neopentoxyl-radical
  \(\beta\)-scission of Example~\ref{ex:beta-scission-lsg}.
  Top: the rule span \(L\xleftarrow{l}K\xrightarrow{r}R\), whose partial
  interface leaves \(\bot\)-marked coordinates free and deletes the dashed
  \(c_1{-}c_2\) edge. Bottom: match \(m\) maps the rule into the shaded region
  of host \(G\), producing \(H\) via pushout complement \(D\).}
  \label{fig:beta-scission-dpo}
\end{figure*}

\subsection{Admissible matches and rule instantiation}

A structural match first tests resource availability and then fixes the
residuals.  These are distinct operations because \(\bot\) belongs to the rule
interface, while a residual's value depends on the selected host occurrence.
Let \(|Q|\) denote \(Q\)
without its resource labels but with incidence and non-resource labels retained.

\begin{definition}
\label{def:admissible-dpo-match}
Let \(p\) be a resource-constrained LLG rule schema and let \(G\) be a
\(\Pi\)-valid LLG.  A \emph{structural match} is a monomorphism
\(\mu:|L|\hookrightarrow|G|\).  It is \emph{resource-admissible} when, for
every defined resource label,
\begin{equation}
  \widehat\theta\!\left(c_L^j(q)\right)
  =c_G^j\!\left(\mu(q)\right)
  \qquad\text{for all \(q,j\) with \(c_L^j(q)\mathbin{\downarrow}\)}
  \label{eq:valuation-constraint-system}
\end{equation}
has a sort-preserving solution \(\theta:X\to A\).  For a fresh residual label
\(u_x+d_x^-\), this is equivalent to
\begin{equation}
  c_G(\mu(x))\ge d_x^-.
  \label{eq:resource-admissible-inequality}
\end{equation}
Ground labels still match exactly.  If a schema deliberately reuses a
variable, all corresponding equations must assign it the same value.
\end{definition}

Equation~\eqref{eq:resource-admissible-inequality} is an application condition
on \(\mu\), not a weakened attributed-graph morphism.

\begin{definition}
\label{def:rule-instantiation}
For the fresh-residual form of
Equation~\eqref{eq:residual-rule-label}, a resource-admissible match induces
the unique \emph{instantiation}
\begin{equation}
  \theta_\mu(u_x)=c_G(\mu(x))-d_x^-.
  \label{eq:residual-valuation}
\end{equation}
For \(Q\in\{L,K,R\}\), write \(Q\theta_\mu\) for pointwise evaluation of all
defined resource labels.  This gives the concrete span
\(p\theta_\mu=(L\theta_\mu\leftarrow K\theta_\mu\rightarrow R\theta_\mu)\)
and lifts \(\mu\) to the exact match \(m:L\theta_\mu\hookrightarrow G\).
Incidence, non-resource labels, and \(\bot\) are unchanged.
\end{definition}

\begin{lemma}
\label{lem:evaluation}
For a sort-preserving valuation \(\theta:X\to A\), pointwise evaluation
\(Q\mapsto Q\theta\) is a functor
\(\mathsf{PAttr}_{\Sigma}(T_A(X))\to\mathsf{PAttr}_{\Sigma}(A)\) that fixes the
underlying graph, every non-resource label, and every undefined coordinate.  It
sends the rule schema
\(p=(L\xleftarrow{l}K\xrightarrow{r}R)\) to a span
\(p\theta=(L\theta\xleftarrow{l\theta}K\theta\xrightarrow{r\theta}R\theta)\) of
monomorphisms in \(\mathsf{PAttr}_{\Sigma}(A)\) whose interface \(K\theta\) is
again a partial relabeling interface.
\end{lemma}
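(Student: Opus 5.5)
The plan is to define pointwise evaluation on objects and morphisms and then check, in order, functoriality, preservation of monomorphisms, and the interface condition \eqref{eq:defined-interface-coordinate}.

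\emph{Objects.} For \(Q\) in \(\mathsf{PAttr}_{\Sigma}(T_A(X))\), the graph \(Q\theta\) has the same vertices, edges and incidence as \(Q\), and the same non-resource labels. Each resource coordinate of \(Q\theta\) is \(\widehat\theta(c_Q^j(q))\) when \(c_Q^j(q)\mathbin{\downarrow}\) and \(\bot\) otherwise. Because \(\theta\) is sort-preserving and \(\widehat\theta\) is the unique evaluation homomorphism, a term of sort \(j\) is sent to an element of \(A_j\). Hence \(Q\theta\) is a well-sorted object of \(\mathsf{PAttr}_{\Sigma}(A)\).

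\emph{Morphisms.} For \(f\colon Q\to Q'\), let \(f\theta\) be the same underlying graph map. It preserves incidence because the underlying graphs are unchanged. It preserves every attribute defined at its domain: for a resource coordinate with \(c_Q^j(q)\mathbin{\downarrow}\), the morphism \(f\) gives \(c_{Q'}^j(f q)=c_Q^j(q)\) as terms, so applying the function \(\widehat\theta\) to both sides yields equality in \(Q'\theta\). The defined coordinates of \(Q\theta\) are exactly those of \(Q\), so no new obligation arises. Non-resource labels are untouched. Identities and composites are computed on underlying maps, which are unchanged, so the assignment is a functor. It fixes the underlying graph, the non-resource labels and every \(\bot\) by construction.

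\emph{Monomorphisms and the span.} Monomorphisms in this category are the morphisms that are injective on vertices and edges, and evaluation leaves the underlying maps unchanged. So if \(l\) and \(r\) are monomorphisms, then \(l\theta\) and \(r\theta\) are monomorphisms, and \(p\theta\) is a span of monomorphisms. If the category's monomorphisms are only required to be injective relabeling morphisms, as in the restriction stated earlier, injectivity is again inherited verbatim. The relabeling property is attribute preservation, which the previous paragraph already established.

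\emph{Interface.} \(K\theta\) contains exactly the persistent objects of \(K\), so it contains exactly the persistent objects of the evaluated span. For a persistent \(q\) with \(c_{K\theta}^j(q)\mathbin{\downarrow}\), the coordinate \(c_K^j(q)\) is defined. Then \eqref{eq:defined-interface-coordinate} gives equality of the three terms \(c_L^j(l q)=c_K^j(q)=c_R^j(r q)\), and applying \(\widehat\theta\) gives the required equality in \(L\theta,K\theta,R\theta\).

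The main obstacle is a subtlety with \(L\theta\) and \(R\theta\): they are total over \(A\), but evaluation can identify terms that were distinct, for example \(u_x+d^-\) and \(u_x+d^+\) when \(d^-=d^+\). This never makes the interface condition fail, because that condition is only an implication from definedness. It does mean \(K\theta\) may leave undefined a coordinate that is now unchanged, and that is permitted. The only real care needed is to confirm that the morphism notion fixed earlier ("preserve every attribute defined at the domain") imposes nothing on coordinates that are undefined in the domain.
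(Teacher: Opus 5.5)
Your proposal is correct and follows the paper's proof: evaluation leaves the underlying graph maps unchanged, so identities, composition, and injectivity carry over; applying \(\widehat\theta\) to equal terms preserves every defined attribute equality, including those of Equation~\eqref{eq:defined-interface-coordinate}; and \(\bot\) stays undefined. One small slip in your closing remark is that \(u_x+d_x^-\) and \(u_x+d_x^+\) with \(d_x^-=d_x^+\) are already the same term, and for \(d_x^-\neq d_x^+\) cancellativity of \(\mathbb N_0\) keeps them distinct after evaluation, but this does not affect your argument because the interface condition is only an implication from definedness.
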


\begin{proof}
\(\widehat\theta\) is a sorted algebra homomorphism, so it preserves label
equalities, identities, and composition.  Evaluation fixes graph maps and
therefore preserves monomorphisms.  Applying it to
Equation~\eqref{eq:defined-interface-coordinate} preserves every interface
equality, while \(\bot\) remains undefined.
\end{proof}

\begin{example}
\label{ex:protonation-residual}
For nitrogen protonation, the reacting lone-pair count changes from
\(u_\ell+1\) to \(u_\ell\), while the N--H bond is created structurally.  At
\(\ce{H2N^-}\), where \(\ell_N=2\), Equation~\eqref{eq:residual-valuation}
gives \(\theta_\mu(u_\ell)=1\).  The rule forms neutral \(\ce{NH3}\) and leaves
one lone pair.  At ammonia, the same schema gives \(u_\ell=0\).  Thus one
schema covers both hosts (Figure~\ref{fig:rule-instantiation}).
\end{example}

\begin{figure}[!ht]
  \centering
  \resizebox{\textwidth}{!}{
\begin{tikzpicture}[
  font=\scriptsize,
  lbl/.style={font=\tiny, text=black!72, align=center},
  free/.style={font=\scriptsize, text=ioVermilion!88!black},
  mor/.style={-{Stealth[length=2.0mm]}, line width=0.7pt, draw=black!55},
  ml/.style={font=\scriptsize\itshape, text=black!70, inner sep=1pt},
  note/.style={font=\scriptsize, text=black!70, align=left},
  stage/.style={rounded corners=2.4pt, draw=ioBlue!55!black, fill=ioBlue!6,
                align=center, inner sep=3.4pt, font=\scriptsize},
  stagek/.style={stage, draw=ioGreen!58!black, fill=ioGreen!7},
  stagelbl/.style={font=\scriptsize\itshape, text=black!68, inner sep=1pt},
  pstep/.style={-{Latex[length=1.9mm]}, line width=0.7pt, draw=black!52,
               shorten >=1pt, shorten <=1pt},
]

\begin{scope}[shift={(0,0)}]
  \iosub{-0.15,0.78}{A}{}
  \node[catom](LN) at (0.75,0){\aN};
  \node[celp] at ($(LN)+(-0.02,0.42)$){}; \node[celp] at ($(LN)+(0.18,0.42)$){};
  \node[catom](LH) at (1.85,0){\aH}; \node[cchg] at ($(LH)+(0.22,0.18)$){\chplus};
  \node[lbl] at (1.3,-0.66){\(\ell_N=u_\ell+1\)};
  \node[catom](KN) at (4.35,0){\aN}; \node[free] at ($(KN)+(0.04,0.44)$){\(\bot\)};
  \node[catom](KH) at (5.45,0){\aH};
  \node[lbl] at (4.9,-0.66){\(\ell_N=\bot\)};
  \node[catom](RN) at (7.95,0){\aN}; \node[celp] at ($(RN)+(-0.42,0.06)$){}; \node[celp] at ($(RN)+(-0.42,-0.14)$){};
  \node[catom](RH) at (9.05,0){\aH}; \draw[cbond](RN)--(RH);
  \node[lbl] at (8.5,-0.66){\(\ell_N=u_\ell\)};
  \draw[mor](3.55,0.12)--(2.55,0.12) node[ml,midway,above=-1pt]{\(l\)};
  \draw[mor](6.15,0.12)--(7.15,0.12) node[ml,midway,above=-1pt]{\(r\)};
\end{scope}

\begin{scope}[shift={(0,-2.00)}]
  \iosub{-0.15,0.72}{B}{}
  \node[catom](GN) at (1.4,0.34){\aN};
  \node[celp] at ($(GN)+(-0.02,0.46)$){}; \node[celp] at ($(GN)+(0.18,0.46)$){};
  \node[celp] at ($(GN)+(-0.46,0.10)$){}; \node[celp] at ($(GN)+(-0.46,-0.10)$){};
  \node[cchg] at ($(GN)+(0.46,0.06)$){\chminus};
  \node[catom](GH1) at (0.75,-0.34){\aH}; \draw[cbond](GN)--(GH1);
  \node[catom](GH2) at (2.05,-0.34){\aH}; \draw[cbond](GN)--(GH2);
  \node[catom](GH3) at (3.05,0.55){\aH}; \node[cchg] at ($(GH3)+(0.22,0.18)$){\chplus};
  \node[lbl] at (1.5,-0.88){\(\ell_N=2,\ q=-1\)};
\end{scope}

\begin{scope}[shift={(4.9,-2.00)}]
  \iosub{-0.05,0.72}{C}{}
  \fill[ioBlue!12, rounded corners=3pt] (0.5,-0.34) rectangle (2.55,0.82);
  \node[catom](CN) at (1.4,0.34){\aN};
  \node[celp] at ($(CN)+(-0.02,0.46)$){}; \node[celp] at ($(CN)+(0.18,0.46)$){};
  \node[celp] at ($(CN)+(-0.46,0.10)$){}; \node[celp] at ($(CN)+(-0.46,-0.10)$){};
  \node[catom](CH) at (2.05,-0.34){\aH}; \draw[cbond](CN)--(CH);
  \node[catom](CH0) at (0.75,-0.34){\aH}; \draw[cbond](CN)--(CH0);
  \node[catom](CHp) at (3.05,0.55){\aH}; \node[cchg] at ($(CHp)+(0.22,0.18)$){\chplus};
  \node[lbl] at (2.0,-0.72){\(\ell_N=2\ge d_\ell^-=1\)\\[1pt]\(\theta_\mu(u_\ell)=1\)};
\end{scope}

\begin{scope}[shift={(0,-4.15)}]
  \iosub{-0.15,0.72}{D}{}
  \node[catom](DLN) at (0.75,0){\aN};
  \node[celp] at ($(DLN)+(-0.02,0.42)$){}; \node[celp] at ($(DLN)+(0.18,0.42)$){};
  \node[celp] at ($(DLN)+(-0.42,0.06)$){}; \node[celp] at ($(DLN)+(-0.42,-0.14)$){};
  \node[catom](DLH) at (1.85,0){\aH}; \node[cchg] at ($(DLH)+(0.22,0.18)$){\chplus};
  \node[lbl] at (1.3,-0.72){\(\ell_N=2\)};
  \node[catom](DKN) at (4.35,0){\aN}; \node[free] at ($(DKN)+(0.04,0.44)$){\(\bot\)};
  \node[catom](DKH) at (5.45,0){\aH};
  \node[lbl] at (4.9,-0.72){\(\ell_N=\bot\)};
  \node[catom](DRN) at (7.95,0){\aN}; \node[celp] at ($(DRN)+(-0.42,0.06)$){}; \node[celp] at ($(DRN)+(-0.42,-0.14)$){};
  \node[catom](DRH) at (9.05,0){\aH}; \draw[cbond](DRN)--(DRH);
  \node[lbl] at (8.5,-0.72){\(\ell_N=1\)};
  \draw[mor](3.55,0.12)--(2.55,0.12) node[ml,midway,above=-1pt]{\(l\theta\)};
  \draw[mor](6.15,0.12)--(7.15,0.12) node[ml,midway,above=-1pt]{\(r\theta\)};
\end{scope}

\begin{pgfonlayer}{background}
  \draw[iopanel] (-0.3,-0.90) rectangle (9.6,0.98);           
  \draw[iopanel] (-0.3,-3.20) rectangle (4.05,-1.28);         
  \draw[iopanel] (4.75,-3.20) rectangle (9.6,-1.28);          
  \draw[iopanel] (-0.3,-5.10) rectangle (9.6,-3.43);          
\end{pgfonlayer}
\end{tikzpicture}}
  \caption{Rule instantiation in double-pushout form
  (Example~\ref{ex:protonation-residual}). \textbf{(A)} Schema, where \(\bot\)
  permits relabeling. \textbf{(B)} Host. \textbf{(C)} The match gives
  \(\theta_\mu(u_\ell)=1\). \textbf{(D)} Concrete span. Application also
  requires \(\Phi_p\) and a \(\Pi\)-valid successor.}
  \label{fig:rule-instantiation}
\end{figure}

\subsection{Direct derivations and well-definedness}
Write \(\theta=\theta_\mu\).  The instantiated rule and host now lie in
\(\mathsf{PAttr}_{\Sigma}(A)\).  A direct derivation is the two-square
construction
\begin{equation}
\begin{tikzcd}[column sep=large,row sep=large]
L\theta \arrow[d,"m"'] & K\theta \arrow[l,"l\theta"'] \arrow[r,"r\theta"] \arrow[d,"k"] &
R\theta \arrow[d,"n"] \\
G & D \arrow[l,"g"] \arrow[r,"h"'] & H .
\end{tikzcd}
\label{eq:resource-constrained-dpo-diagram}
\end{equation}
The left square is a pushout complement that removes
\(L\theta\setminus K\theta\).  The right square is a pushout that adds
\(R\theta\setminus K\theta\).

\begin{definition}
\label{def:admissible-dpo-derivation}
An instantiated exact match \((\theta_\mu,m)\) is \emph{DPO-applicable} when
the identification and dangling conditions hold.  In addition, after deleting
the matched image of \(L\theta_\mu\setminus K\theta_\mu\), the host must contain
none of the simple edges represented by
\(R\theta_\mu\setminus K\theta_\mu\).  This no-created-edge condition prevents
the second pushout from identifying a new edge with an existing one.

Whenever the two squares exist, they define a direct derivation.  It is
\emph{admissible} when \(\Phi_p(G,m,\theta_\mu)\) holds and the complete
successor satisfies \(H\models\Pi\).  Let \(\bar m\) denote the induced map
from persistent rule carriers into \(H\).  Each changed coordinate satisfies
\begin{equation}
  c_H(\bar m(x))
  =c_G(m(x))-d_x^{-}+d_x^{+}.
  \label{eq:resource-constrained-update}
\end{equation}
The policy is checked on \(H\), not only on \(R\theta_\mu\), because preserved
context can affect the validity of a rewritten atom.
\end{definition}

\begin{proposition}
\label{prop:welldef}
Let \(p\) be a resource-constrained LLG rule schema with \(V(K)=V(L)=V(R)\), and
let a resource-admissible structural match \(\mu\) induce
\((\theta_\mu,m)\).  If this exact match is DPO-applicable, then both squares in
diagram~\eqref{eq:resource-constrained-dpo-diagram} exist and are unique up to
isomorphism.  Equation~\eqref{eq:resource-constrained-update} gives each
changed coordinate, every defined interface coordinate is preserved, and all
coordinates outside the rule image agree with \(G\).  Hence the match and its
induced valuation determine \(H\) uniquely up to isomorphism.
\end{proposition}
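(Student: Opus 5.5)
The plan is to build both squares of diagram~\eqref{eq:resource-constrained-dpo-diagram} explicitly in \(\mathsf{PAttr}_{\Sigma}(A)\) and then to read off the coordinates. First, Lemma~\ref{lem:evaluation} gives that \(p\theta_\mu\) is a span of monomorphisms whose interface \(K\theta_\mu\) is again a partial relabeling interface. Definition~\ref{def:rule-instantiation} gives that \(m\) is an exact injective match, since Equation~\eqref{eq:residual-valuation} makes every evaluated left label \(\theta_\mu(u_x)+d_x^-\) equal to \(c_G(m(x))\), and ground labels match by resource-admissibility. We are therefore in the class of injective relabeling morphisms for which the cited results~\cite{habel2002relabelling,lack2004adhesive,ehrig2014madhesive} give natural pushout complements that exist, given the identification and dangling conditions, and are unique up to isomorphism.

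For the left square, I would define \(D\) concretely. Its vertex set is \(V(G)\), because \(V(K)=V(L)\), so no vertex is deleted and the dangling condition is only needed for edges. Its edges are \(E(G)\setminus m(E(L)\setminus E(K))\). Labels are inherited from \(G\), except that every coordinate which is \(\bot\) in \(K\theta_\mu\) becomes undefined on its image. I would check that \(D\xrightarrow{g}G\) and \(K\theta_\mu\xrightarrow{k}D\) are morphisms and that the square is a pushout, which amounts to a componentwise check on vertices, edges and each defined attribute. Uniqueness up to isomorphism then follows from the cited uniqueness of natural complements. This is the point to argue carefully: in partially attributed graphs an ordinary pushout complement need not be unique, because attributes can be left defined or undefined in several ways. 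I would therefore insist on the natural complement, in which \(D\) is undefined exactly on the \(\bot\) images. Injectivity of \(m\) rules out the conflicting identifications.

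For the right square, I would form the pushout of \(D\leftarrow K\theta_\mu\rightarrow R\theta_\mu\). Its vertices are again \(V(G)\). Its edges are \(E(D)\) together with the images of \(E(R)\setminus E(K)\), and the no-created-edge condition of Definition~\ref{def:admissible-dpo-derivation} guarantees that the result is still simple, with no collapsed parallel edge. Each attribute undefined in \(D\) takes its value from \(R\theta_\mu\), and all other attributes come from \(D\). Pushouts along injective relabelings are unique up to isomorphism.

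With both squares fixed, the coordinate claims are bookkeeping. A changed coordinate \(x\) is \(\bot\) in \(K\), so its value in \(H\) is \(c_{R\theta_\mu}(x)=\theta_\mu(u_x)+d_x^+=c_G(m(x))-d_x^-+d_x^+\), which is Equation~\eqref{eq:resource-constrained-update}. A defined interface coordinate is carried through \(D\) unchanged and, by Equation~\eqref{eq:defined-interface-coordinate}, agrees with both \(L\theta_\mu\) and \(R\theta_\mu\). Anything outside \(m(L)\) is untouched in both squares, so it agrees with \(G\). Since \(\theta_\mu\) is forced by \(\mu\) through Equation~\eqref{eq:residual-valuation}, and both squares are unique up to isomorphism, \(H\) is determined by the match and its induced valuation. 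The main obstacle I anticipate is the uniqueness of the left square in the partial-attribute setting, and I would handle it by appeal to the natural pushout complement result.
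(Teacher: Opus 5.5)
Your proposal is correct and follows essentially the paper's own argument. For the left square you take the natural pushout complement along the injective maps $l\theta_\mu$ and $m$: the host with the matched deleted edges removed and the changed coordinates left undefined. For the right pushout you use the no-created-edge condition, and you obtain Equation~\eqref{eq:resource-constrained-update} from $\theta_\mu(u_x)+d_x^{+}=c_G(m(x))-d_x^{-}+d_x^{+}$. Your explicit construction of $D$ and $H$ only spells out what the paper states tersely.

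One small wording slip: since $V(K)=V(L)$ no vertex is deleted, so the dangling condition is vacuous, not ``needed for edges,'' as the paper notes.
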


\begin{proof}
Since \(V(K)=V(L)\), the left square deletes only the edges of
\(L\theta\setminus K\theta\).  Hence dangling is vacuous, while injectivity of
\(m\) gives identification.  For injective \(l\theta\) and \(m\), the natural
pushout complement exists uniquely~\cite{habel2002relabelling}.  It is the host
with the matched deleted edges removed and the changed interface coordinates
left undefined.  Lemma~\ref{lem:evaluation} preserves every defined interface
label.  The no-created-edge condition then makes the second gluing a pushout in
finite simple graphs.  A changed coordinate is undefined throughout \(K\theta\)
and \(D\), so the left square does not transport its value.  The instantiation
fixes \(\theta(u_x)=c_G(m(x))-d_x^{-}\), so \(R\theta\) carries
\(\theta(u_x)+d_x^{+}\), which the second pushout writes into \(H\).  This
proves Equation~\eqref{eq:resource-constrained-update}.  Both squares preserve
the remaining context.
\end{proof}

\section{Electron-flow transitions on LLGs}
\label{sec:arrows}

\subsection{Locus-sorted moves and event groups}

Arrow pushing becomes a multiset of transfers between the four locus sorts
\(\operatorname{lp},\rho,\sigma,\pi\), checked against one pre-state and
committed as a single net edit.  A two-electron move may form a singleton
group, whereas fishhooks on pair-valued loci must be coupled
(Figure~\ref{fig:group-gates}A--C).

\begin{figure}[!htb]
  \centering
  \resizebox{\textwidth}{!}{
\begin{tikzpicture}[
  font=\scriptsize,
  free/.style={font=\tiny, text=ioVermilion!88!black},
  mor/.style={-{Stealth[length=1.8mm]}, line width=0.6pt, draw=black!55},
  ml/.style={font=\tiny\itshape, text=black!68, inner sep=1pt},
  del/.style={line width=0.75pt, draw=black!35, dash pattern=on 1.5pt off 1.5pt},
  ind/.style={-{Stealth[length=2.4mm]}, line width=0.9pt, draw=ioBlue!72!black},
  bad/.style={-{Stealth[length=2.4mm]}, line width=0.9pt, draw=ioVermilion!75!black},
  sub/.style={font=\tiny, text=black!55},
  wrong/.style={font=\tiny, text=ioVermilion!85!black},
  loc/.style={font=\scriptsize, text=black!88},
  ok/.style={font=\scriptsize\bfseries, text=ioGreen!48!black},
  no/.style={font=\scriptsize\bfseries, text=ioVermilion!80!black},
  empty/.style={draw=black!45, circle, inner sep=1.4pt, line width=0.5pt},
]

\begin{scope}[shift={(0,0)}]
  \iosub{-0.15,0.72}{A}{}
  \node[catom](en) at (0.7,-0.1){\aN};
  \node[celp] at ($(en)+(-0.03,0.36)$){}; \node[celp] at ($(en)+(0.17,0.36)$){};
  \node[catom](ec) at (1.8,-0.1){\aC}; \node[cchg] at ($(ec)+(0.24,0.2)$){\chplus};
  \draw[cbond](en)--(ec);
  \draw[cpair] ($(en)+(0.08,0.38)$) to[bend left=24] ($(en)!0.5!(ec)+(0,0.24)$);
  \draw[ind] (2.55,-0.1) -- (3.4,-0.1) node[ml,midway,above=0pt]{induces};
  \node[catom](ln) at (4.05,-0.1){\aN};
  \node[celp] at ($(ln)+(-0.10,0.36)$){}; \node[celp] at ($(ln)+(0.10,0.36)$){};
  \node[catom](lc) at (4.8,-0.1){\aC}; \node[cchg] at ($(lc)+(0.22,0.18)$){\chplus};
  \draw[cbond](ln)--(lc);
  \node[catom](kn) at (6.2,-0.1){\aN}; \node[free] at ($(kn)+(0.03,0.34)$){\(\bot\)};
  \node[catom](kc) at (6.95,-0.1){\aC}; \draw[cbond](kn)--(kc);
  \node[free] at ($(kn)!0.5!(kc)+(0.13,0.34)$){\(\bot\)};
  \node[catom](rn) at (8.35,-0.1){\aN}; \node[cchg] at ($(rn)+(-0.22,0.22)$){\chplus};
  \node[catom](rc) at (9.1,-0.1){\aC};
  \draw[cbond]($(rn)+(0.18,0.05)$)--($(rc)+(-0.18,0.05)$);
  \draw[cbond]($(rn)+(0.18,-0.05)$)--($(rc)+(-0.18,-0.05)$);
  \draw[mor](5.6,0.0)--(5.1,0.0) node[ml,midway,above=-2pt]{\(l\)};
  \draw[mor](7.55,0.0)--(8.05,0.0) node[ml,midway,above=-2pt]{\(r\)};
  \node[sub] at (4.42,-0.56){\(L\)}; \node[sub] at (6.57,-0.56){\(K\)}; \node[sub] at (8.72,-0.56){\(R\)};
\end{scope}

\begin{scope}[shift={(0,-1.55)}]
  \iosub{-0.15,0.72}{B}{}
  \node[catom](fx) at (0.85,-0.1){X}; \node[catom](fy) at (2.05,-0.1){Y};
  \draw[cbond](fx)--(fy);
  \draw[cfish] ($(fx)!0.5!(fy)+(0,0.09)$) to[bend left=46] ($(fx)+(0.02,0.42)$);
  \draw[bad] (2.6,-0.1) -- (3.45,-0.1);
  \node[wrong, font=\normalsize] at (3.85,-0.08){\(\times\)};
  \node[wrong] at (6.2,0.05){\(\Delta\sigma_{xy}=-\tfrac12\notin\mathbb Z\)};
  \node[wrong] at (6.2,-0.42){no valid rule};
\end{scope}

\begin{scope}[shift={(0,-3.10)}]
  \iosub{-0.15,0.72}{C}{}
  \node[catom](gx) at (0.85,-0.1){X}; \node[catom](gy) at (2.05,-0.1){Y};
  \draw[cbond](gx)--(gy);
  \draw[cfish] ($(gx)!0.5!(gy)+(-0.04,0.09)$) to[bend left=48] ($(gx)+(0.02,0.42)$);
  \draw[cfish] ($(gx)!0.5!(gy)+(0.04,0.09)$) to[bend right=48] ($(gy)-(0.02,-0.42)$);
  \draw[ind] (2.6,-0.1) -- (3.45,-0.1) node[ml,midway,above=0pt]{induces};
  \node[catom](clx) at (4.1,-0.1){X}; \node[catom](cly) at (4.95,-0.1){Y};
  \draw[cbond](clx)--(cly);
  \node[catom](ckx) at (6.25,-0.1){X}; \node[catom](cky) at (7.1,-0.1){Y};
  \draw[del](ckx)--(cky);
  \node[free] at ($(ckx)+(0,0.34)$){\(\bot\)}; \node[free] at ($(cky)+(0,0.34)$){\(\bot\)};
  \node[catom](crx) at (8.4,-0.1){X}; \node[crad] at ($(crx)+(0,0.36)$){};
  \node[catom](cry) at (9.25,-0.1){Y}; \node[crad] at ($(cry)+(0,0.36)$){};
  \draw[mor](5.75,0.0)--(5.25,0.0) node[ml,midway,above=-2pt]{\(l\)};
  \draw[mor](7.7,0.0)--(8.2,0.0) node[ml,midway,above=-2pt]{\(r\)};
  \node[sub] at (4.52,-0.44){\(L\)}; \node[sub] at (6.67,-0.44){\(K\)}; \node[sub] at (8.82,-0.44){\(R\)};
\end{scope}

\begin{scope}[shift={(0.55,-4.55)}]
  \iosub{-0.65,0.72}{D}{}
  \node[loc](az) at (0.75,0){\(z\)};
  \node[loc](ax) at (2.15,0){\(x\)};
  \node[loc](ay) at (3.55,0){\(y\)};
  \node[celp] at ($(ax)+(0,0.44)$){};            
  \draw[cfish] ($(az)+(0.2,0.07)$) to[bend left=30] ($(ax)+(-0.2,0.07)$);
  \draw[cfish] ($(ax)+(0.2,0.07)$) to[bend left=30] ($(ay)+(-0.2,0.07)$);
  \node[sub] at (2.15,-0.46){\(c_G(x)=1\)};
  \node[ok] at (2.15,-0.88){\(\checkmark\) commits};
\end{scope}

\begin{scope}[shift={(5.65,-4.55)}]
  \iosub{-0.6,0.72}{E}{}
  \node[loc](bz) at (0.75,0){\(z\)};
  \node[loc](bx) at (2.15,0){\(x\)};
  \node[loc](by) at (3.55,0){\(y\)};
  \node[empty] at ($(bx)+(0,0.44)$){};           
  \draw[cfish] ($(bz)+(0.2,0.07)$) to[bend left=30] ($(bx)+(-0.2,0.07)$);
  \draw[cfish] ($(bx)+(0.2,0.07)$) to[bend left=30] ($(by)+(-0.2,0.07)$);
  \node[sub] at (2.15,-0.46){\(c_G(x)=0\)};
  \node[no] at (2.15,-0.88){\(\times\) rejected};
\end{scope}

\begin{pgfonlayer}{background}
  \draw[iopanel] (-0.3,-0.72) rectangle (9.8,0.86);           
  \draw[iopanel] (-0.3,-2.18) rectangle (9.8,-0.78);          
  \draw[iopanel] (-0.3,-3.73) rectangle (9.8,-2.33);          
  \draw[iopanel] (-0.3,-5.62) rectangle (4.65,-3.83);         
  \draw[iopanel] (4.85,-5.62) rectangle (9.8,-3.83);          
\end{pgfonlayer}
\end{tikzpicture}}
  \caption{Integrality and availability gates for event groups.
  \textbf{(A)} A two-electron move is integral and induces a rule.
  \textbf{(B)} One fishhook is half-integral and does not.
  \textbf{(C)} Coupled fishhooks induce an integral rule.
  \textbf{(D--E)} Equal net deltas can differ in admissibility because gross
  demand is evaluated on the common pre-state.}
  \label{fig:group-gates}
\end{figure}

Use the zero extension \(c_G(\sigma_{uv})=c_G(\pi_{uv})=0\) on non-edges, quotas
\(w(\operatorname{lp})=w(\sigma)=w(\pi)=2\) and \(w(\rho)=1\), sort \(t(x)\)
with \(w(x)=w(t(x))\), and display aliases
\(\boldsymbol{\cdot}(v),\sigma(u,v),\pi(u,v)\).  Supports are
\(\operatorname{supp}(\operatorname{lp}(v))=\operatorname{supp}(\rho_v)=\{v\}\)
and \(\operatorname{supp}(\sigma_{uv})=\operatorname{supp}(\pi_{uv})=\{u,v\}\).
The locality predicate \(\Psi_{t(x)\to t(y)}(x,y\mid G)\) requires intersecting
supports.  The move \(\operatorname{lp}(u)\to\operatorname{lp}(v)\) instead requires
\(\{u,v\}\in E(G)\), and named radical macros obey Table~\ref{tab:macros}.

For an integral delta \(\Delta\) with \(c_G+\Delta\ge0\), write
\(G\oplus\Delta\) for the graph with occupancy \(c_G+\Delta\), refreshed
derived labels, and zero-order edges removed.

\begin{definition}
\label{def:moves}
An \emph{electron move} is \(\mu=(x\xrightarrow{k}y)\), where \(x\ne y\) are
loci and \(k\in\{1,2\}\) is the number of transferred electrons.  Its
occupancy delta is
\begin{equation}
  \Delta_{\mu}(z) = \begin{cases} -k / w(x) & \text{if } z = x, \\ +k / w(y) & \text{if } z = y, \\ 0 & \text{otherwise.} \end{cases}
  \label{eq:move-delta-def}
\end{equation}
For a finite multiset \(M\), define its net delta together with its gross
outflow and inflow at each locus by
\begin{equation}
 \Delta_M=\sum_{\mu\in M}\Delta_\mu,
 \quad
 \gout_M(x)=\!\!\sum_{\mu=(x\xrightarrow{k}y)\in M}\!\!\frac{k}{w(x)},
 \quad
 \gin_M(x)=\!\!\sum_{\mu=(y\xrightarrow{k}x)\in M}\!\!\frac{k}{w(x)},
 \label{eq:group-demand}
\end{equation}
so that \(\Delta_M=\gin_M-\gout_M\).  These are group quantities, distinct from
the rule-level amounts \(d_x^{\pm}\) of
Equation~\eqref{eq:residual-rule-label}, which record a net change.  Such a
finite multiset is an \emph{event group}, and it is \emph{admissible} in \(G\)
when (1) every move satisfies $\Psi_{t(x)\to t(y)}(x,y\mid G)$,
(2) $\gout_M(x)\le c_G(x)$ for all loci $x$,
(3) $\Delta_M\in\mathbb Z^{\Loc(G)}$, and (4) $G\oplus\Delta_M$ satisfies $\Pi$.

Condition (2) implies $c_G+\Delta_M \ge 0$, well-defining the atomically committed state $G' = G \oplus \Delta_M$. Because availability is evaluated strictly against the pre-state, resources produced elsewhere in $M$ cannot fund consumption.
\end{definition}

\paragraph{Group boundary and endpoint conformance}
The supplied annotation fixes the group boundary, which the transition relation
neither splits nor merges.  Conditions (1)--(4) attach to it: every move reads
\(G\), and \(\Pi\) is checked once on \(G_M=G\oplus\Delta_M\), never on an
ordering of internal moves.  A singleton polar group may therefore produce a
valid intermediate that a larger group would bypass.

For a declared mapped endpoint \(H_\star\), call \(M\)
\emph{endpoint-conformant} when it is admissible and
\begin{equation}
  G\oplus\Delta_M\cong H_\star
  \quad\text{under the supplied atom correspondence.}
  \label{eq:endpoint-conformance}
\end{equation}
Policy validity is necessary for execution and does not imply conformance,
since it guards the state representation rather than mechanistic plausibility.
Under \(\Pi_{\mathrm{exec}}\) the gap is substantive: \(\kappa_\Pi=\infty\)
admits a hypervalent singleton successor even when that state is an artificial
intermediate, while non-integral or negative occupancies and a \(\pi\)
component without its \(\sigma\) component are still rejected.  For a sequence
\((M_1,\ldots,M_n)\), Equation~\eqref{eq:endpoint-conformance} applies to each
declared intermediate or only to the terminal product, according to the record
schema.

\begin{proposition}
\label{prop:induced-rule}
Let \(M\) be a finite event group in a \(\Pi\)-valid LLG \(G\) with integral
\(\Delta_M\).  On its move supports, form the rule
\(p_M\) whose changed coordinate \(x\) carries \(d_x^-=\max(-\Delta_M(x),0)\) and
\(d_x^+=\max(\Delta_M(x),0)\) in the sense of
Equation~\eqref{eq:residual-rule-label}.  Edge deletion and creation follow
zero-order normalization in \(G\oplus\Delta_M\).  Its application predicate is
\begin{equation}
\begin{aligned}
\Phi_{p_M}\iff{}&
\left[\bigwedge_{\mu=(x\to y)\in M}
\Psi_{t(x)\to t(y)}(x,y\mid G)\right]\\
&\land\left[\gout_M(x)\le c_G(x)\ \text{for every locus }x\right]\\
&\land\left[\text{the declared macro incidence pattern holds}\right].
\end{aligned}
  \label{eq:gross-application-condition}
\end{equation}
The resulting \(p_M\) is the \emph{induced rule}.  Its direct derivation
\(G\derives{(p_M,m)}H\) is admissible exactly when \(M\) satisfies the other
three conditions of Definition~\ref{def:moves}.  Then
\(H\cong G\oplus\Delta_M\) and \(c_H=c_G+\Delta_M\).
\end{proposition}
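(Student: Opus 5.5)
The plan is to reduce the claim to Proposition~\ref{prop:welldef}, after checking that \(p_M\) is a legitimate resource-constrained LLG rule schema and that the structural match supplied by \(M\) is resource-admissible and DPO-applicable.

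First I verify that \(p_M\) fits Definition~\ref{def:resource-constrained-llg-rule}. Take \(V(K)=V(L)=V(R)\) to be the union of the move supports, with the identity on vertices. For every locus \(x\) with \(\Delta_M(x)\neq0\), set \(c_L(x)=u_x+d_x^-\) and \(c_R(x)=u_x+d_x^+\) with \(u_x\) fresh, and \(c_K(x)=\bot\). Integrality of \(\Delta_M\) makes \(d_x^\pm\in\mathbb N_0\), and \(d_x^+-d_x^-=\Delta_M(x)\) by construction. Edges follow the zero-order normalization: an edge lies in \(L\) iff \(o>0\) in \(G\), lies in \(R\) iff \(o>0\) in \(G\oplus\Delta_M\), and lies in \(K\) iff both hold. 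Coordinates that do not change stay defined in \(K\), so Equation~\eqref{eq:defined-interface-coordinate} holds trivially and \(K\) is a partial relabeling interface.

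Next I treat the match. The inclusion of the support into \(G\) is a monomorphism \(\mu:|L|\hookrightarrow|G|\). Resource-admissibility reduces, through Equation~\eqref{eq:resource-admissible-inequality}, to \(c_G(x)\ge d_x^-\). This follows from condition~(2) of Definition~\ref{def:moves}, because \(d_x^-=\max(-\Delta_M(x),0)\le\gout_M(x)\le c_G(x)\) when \(\Delta_M=\gin_M-\gout_M\) with \(\gin_M\ge0\). Identification holds by injectivity, and dangling is vacuous because no vertices are deleted.

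The no-created-edge condition is the step I expect to be the main obstacle. An edge in \(R\setminus K\) has \(o=0\) in \(G\), so by the zero extension it is absent from \(G\). Created edges are therefore genuinely new and not merely relabeled, but this must be argued carefully for edges whose \(\sigma\) rises from zero.

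Proposition~\ref{prop:welldef} then yields \(H\) uniquely. Equation~\eqref{eq:resource-constrained-update} gives \(c_H(x)=c_G(x)-d_x^-+d_x^+=c_G(x)+\Delta_M(x)\) on changed loci, and context agrees with \(G\) off the image. Hence \(c_H=c_G+\Delta_M\), and since edge presence is read from nonzero order in both graphs, \(H\cong G\oplus\Delta_M\).

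For the equivalence of admissibility, the derivation is admissible iff \(\Phi_{p_M}\) holds and \(H\models\Pi\). By Equation~\eqref{eq:gross-application-condition}, \(\Phi_{p_M}\) is exactly conditions~(1) and~(2) together with the macro pattern, which is part of the locality predicate through Table~\ref{tab:macros}. The condition \(H\models\Pi\) is condition~(4), because \(H\cong G\oplus\Delta_M\). Condition~(3) is the standing integrality hypothesis. Both directions are therefore immediate.

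Conversely, if the derivation is admissible, then \(\Phi_{p_M}\) returns (1) and (2), and \(H\models\Pi\) returns (4). A subtle point in this direction is that \(p_M\) depends only on \(\Delta_M\), whereas gross availability depends on \(M\). The predicate carries \(\gout_M\) explicitly for exactly this reason, which is what separates cases D and E of Figure~\ref{fig:group-gates}.
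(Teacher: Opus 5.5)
Your proposal is correct and follows essentially the same route as the paper's proof. Both reduce to Proposition~\ref{prop:welldef}, read \(c_H=c_G-d_x^-+d_x^+=c_G+\Delta_M\) off Equation~\eqref{eq:resource-constrained-update}, identify \(\Phi_{p_M}\) with conditions (1)--(2) plus macro incidence and the policy check on \(H\) with condition (4), and dispose of the no-created-edge condition you flagged in one line, since \(p_M\) creates an edge only where both host bond occupancies vanish. You are more explicit than the paper in deriving resource-admissibility from \(d_x^-=\max(-\Delta_M(x),0)\le\gout_M(x)\le c_G(x)\), which the paper leaves implicit.
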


\noindent
Integrality is required to form \(p_M\): a half-integral \(\Delta_M\) yields no
\(d_x^\pm\in\mathbb N_0\) (Figure~\ref{fig:group-gates}B).

\begin{proof}
\(\Delta_M=d_x^+-d_x^-\), so
Equation~\eqref{eq:resource-constrained-update} gives \(c_H=c_G+\Delta_M\), and
normalization fixes the edge set.  \(\Phi_{p_M}\) is precisely locality, macro
incidence, and common-prestate availability, gross demand included because a
net delta can hide simultaneous consumption and replenishment, while the final
DPO check supplies condition~4.  DPO-applicability adds nothing, since
\(p_M\) creates an edge only where both host bond occupancies vanish and
Proposition~\ref{prop:welldef} covers the rest.
\end{proof}

For \(G\derives{(p,m)}H\), let
\(\delta_{p,m}(x)=c_H(x)-c_G(x)\), extended by zero outside the rule support.
An event group \(M\) \emph{realizes} the derivation when
\(\Delta_M=\delta_{p,m}\).

\begin{proposition}
\label{prop:rule-flow-coherence}
If an admissible event group \(M\) realizes a derivation \(G\derives{(p,m)}H\),
then its induced derivation produces \(H'\cong H\).  The rule \(p_M\) depends
only on the multiset \(M\), produces one successor when all conditions hold,
and none otherwise.
\end{proposition}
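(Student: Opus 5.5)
The argument runs through Proposition~\ref{prop:induced-rule} and Proposition~\ref{prop:welldef}, comparing occupancy vectors rather than rules. Since \(M\) is admissible, condition~(3) of Definition~\ref{def:moves} makes \(\Delta_M\) integral, so the induced rule \(p_M\) is defined. Its application predicate~\eqref{eq:gross-application-condition} holds by conditions~(1)--(2), and condition~(4) gives \(\Pi\)-validity of the successor. Proposition~\ref{prop:induced-rule} then yields an admissible derivation \(G\derives{(p_M,m)}H'\) with \(c_{H'}=c_G+\Delta_M\). The realization hypothesis \(\Delta_M=\delta_{p,m}\) gives \(c_{H'}=c_G+\delta_{p,m}=c_H\) on every locus, including loci outside either rule support, where both deltas vanish.

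The next step is to convert equality of occupancies into an isomorphism \(H'\cong H\). Both derivations are atom-preserving, so \(V(H)\) and \(V(H')\) are identified with \(V(G)\) through the tracks of the two DPO squares. Element labels are preserved because they are defined in \(K\). The edge set of an LLG is determined by its occupancy function: an edge is present exactly when its bond order is nonzero. For \(H'\) this follows from zero-order normalization. For \(H\) I would argue it from \(\Pi\)-validity together with the rule's structural edge deletion and creation. Derived labels such as charge and bond order are functions of the four fields, so the identity on vertices is an isomorphism. I expect this to be the main obstacle, because an arbitrary rule \(p\) could in principle leave a persistent edge with \(s=p=0\), and then \(H\) would carry a zero-order edge that \(H'\) lacks. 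I would first try to rule this out with the main-group constraint \(p_e>0\Rightarrow s_e=1\) and the convention that edges exist only with \(s_e\ge1\). If that fails, I would state the result up to zero-order normalization, which is how \(G\oplus\Delta\) is defined.

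For the second claim, the construction of \(p_M\) reads only \(\Delta_M\), the supports of the moves, and the sorts and incidence patterns of the moves. The residual amounts \(d_x^\pm=\max(\pm\Delta_M(x),0)\) and the interface, with changed coordinates set to \(\bot\), are all functions of the multiset \(M\), so the rule schema is independent of any ordering of the moves. By Proposition~\ref{prop:welldef}, the match and its induced valuation determine the successor uniquely up to isomorphism, which gives the single-successor claim. When any of conditions~(1)--(4) fails, there are three cases. If integrality fails, no \(d_x^\pm\in\mathbb N_0\) exists and \(p_M\) is not formed. If locality, availability, or macro incidence fails, \(\Phi_{p_M}\) is false. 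If the policy fails, the derivation is not admissible under Definition~\ref{def:admissible-dpo-derivation}. In each case no successor is produced, which completes the ``none otherwise'' half.
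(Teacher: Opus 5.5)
Your argument is essentially the paper's: both successors carry occupancy \(c_G+\Delta_M=c_G+\delta_{p,m}\) and the same normalized incidence, uniqueness comes from Proposition~\ref{prop:welldef}, and order independence comes from permutation invariance of \(\gout_M,\gin_M,\Delta_M\) and \(\Phi_{p_M}\). The paper does not rule out a persistent zero-order edge in \(H\); it simply asserts matching \emph{normalized} incidence, which is exactly your fallback. One small addition: \(\Phi_{p_M}\) reads the gross outflow \(\gout_M\), not only \(\Delta_M\), so that quantity belongs in your list of data determined by the multiset.
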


\begin{proof}
Both successors have occupancy \(c_G+\Delta_M=c_G+\delta_{p,m}\) and the same
normalized incidence, hence all derived fields agree.
Proposition~\ref{prop:welldef} gives uniqueness, while \(\gout_M,\gin_M,\Delta_M\),
and \(\Phi_{p_M}\) are permutation-invariant.
\end{proof}

\begin{theorem}[Three-level coherence]
\label{thm:coherence}
Let \(G\) be a \(\Pi\)-valid LLG and \(M\) a finite multiset of electron moves
with \(\Delta_M\in\mathbb Z^{\Loc(G)}\).  The following are equivalent.
\begin{enumerate}
\item[(i)] \(M\) is an admissible event group in \(G\)
  (Definition~\ref{def:moves}).
\item[(ii)] The induced rule \(p_M\) admits an admissible direct derivation
  \(G\derives{(p_M,m)}H\) (Definition~\ref{def:admissible-dpo-derivation}).
\item[(iii)] \(G\oplus\Delta_M\) is defined and \(\Pi\)-valid, every move of
  \(M\) is local, and \(\gout_M(x)\le c_G(x)\) at every locus.
\end{enumerate}
When these hold, \(H\cong G\oplus\Delta_M\) is unique up to isomorphism,
order-independent, has \(c_H=c_G+\Delta_M\), and conserves \(\Ie\) and \(Q\).
If any clause of (iii) fails, no successor is produced.
\end{theorem}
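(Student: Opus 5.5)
The plan is to prove the cycle (i)\(\Rightarrow\)(iii)\(\Rightarrow\)(ii)\(\Rightarrow\)(i), using Proposition~\ref{prop:induced-rule} as the bridge to the rewriting level. After that, the supplementary claims (uniqueness, order independence, occupancy update, conservation) follow from Propositions~\ref{prop:welldef}, \ref{prop:rule-flow-coherence} and~\ref{prop:conservation}.

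For (i)\(\Rightarrow\)(iii), I unfold Definition~\ref{def:moves}. Condition~(1) gives locality and condition~(2) gives the gross availability bound. Condition~(2) also gives \(c_G+\Delta_M\ge c_G-\gout_M\ge0\), since \(\Delta_M=\gin_M-\gout_M\) and \(\gin_M\ge0\). With the integrality hypothesis, this makes \(G\oplus\Delta_M\) defined, and condition~(4) makes it \(\Pi\)-valid. The converse (iii)\(\Rightarrow\)(i) is the same reading in reverse, with condition~(3) supplied by the standing hypothesis \(\Delta_M\in\mathbb Z^{\Loc(G)}\). Hence (i)\(\Leftrightarrow\)(iii).

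For (i)\(\Leftrightarrow\)(ii), I invoke Proposition~\ref{prop:induced-rule}. Integrality makes \(d_x^\pm=\max(\pm\Delta_M(x),0)\in\mathbb N_0\), so \(p_M\) exists. Its application predicate~\eqref{eq:gross-application-condition} is exactly conditions (1)--(2) together with macro incidence, and the final check \(H\models\Pi\) in Definition~\ref{def:admissible-dpo-derivation} is condition~(4). The first step I need is that the matched structural morphism is resource-admissible. The rule is built on the supports of \(M\) in \(G\) itself, so the identity embedding serves as \(\mu\). Inequality~\eqref{eq:resource-admissible-inequality} then reads \(c_G(x)\ge d_x^-\), which follows from \(d_x^-\le\gout_M(x)\le c_G(x)\). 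The second step I need is DPO-applicability. Since \(V(K)=V(L)\), the dangling condition is vacuous, and injectivity of \(m\) gives the identification condition. The no-created-edge condition holds because \(p_M\) creates an edge only where both occupancies of the pair vanish in \(G\). In the direction (ii)\(\Rightarrow\)(i), admissibility of the derivation yields \(\Phi_{p_M}\) and \(H\models\Pi\), hence (1), (2) and (4); (3) is assumed.

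The main obstacle is the macro-incidence conjunct of \(\Phi_{p_M}\), which has no counterpart in (iii) or in Definition~\ref{def:moves}. I would first try to treat it as part of the locality predicate \(\Psi\): for named radical macros, Table~\ref{tab:macros} is stated as a locality constraint, so declared incidence is implied by (1). If that reading fails, I would restrict the equivalence to groups in which macro incidence is folded into \(\Psi\), and say so explicitly.

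For the closing claims, Proposition~\ref{prop:induced-rule} gives \(c_H=c_G+\Delta_M\) and \(H\cong G\oplus\Delta_M\). Proposition~\ref{prop:welldef} gives uniqueness up to isomorphism. Proposition~\ref{prop:rule-flow-coherence} gives order independence, since \(\gout_M\), \(\gin_M\) and \(\Delta_M\) are multiset sums. For conservation, \(\Ie=\sum_v(2\ell_v+r_v)+2\sum_e(s_e+p_e)=\sum_{x}w(x)\,c(x)\) over loci. Each move changes this sum by \(-w(x)\cdot k/w(x)+w(y)\cdot k/w(y)=0\), so \(\Ie(H)=\Ie(G)\). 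Proposition~\ref{prop:conservation}, applied with the identity bijection, then gives \(Q(H)=Q(G)\). Finally, if any clause of (iii) fails, the equivalence shows (ii) fails, so no admissible derivation, and hence no successor, exists.
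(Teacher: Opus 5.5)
Your proposal is correct and follows the paper's proof: (i)\(\Leftrightarrow\)(iii) by unfolding Definition~\ref{def:moves} with integrality supplied by the hypothesis, and (i)\(\Leftrightarrow\)(ii) via Proposition~\ref{prop:induced-rule}. Uniqueness and order independence come from Propositions~\ref{prop:welldef} and~\ref{prop:rule-flow-coherence}, and conservation from the weighted-sum identity that the paper packages as Lemma~\ref{lem:group-conservation}. Your macro-incidence worry is settled by your first reading, since the paper already defines the locality predicate \(\Psi\) to require that named radical macros obey Table~\ref{tab:macros}; no fallback restriction is needed.
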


\begin{proof}
(i)\(\Leftrightarrow\)(iii) is Definition~\ref{def:moves} with integrality
supplied by hypothesis.  (i)\(\Leftrightarrow\)(ii) and the identification
\(H\cong G\oplus\Delta_M\) are Proposition~\ref{prop:induced-rule}.  Uniqueness
up to isomorphism is Proposition~\ref{prop:welldef}.  Order independence and the
all-or-nothing outcome are Proposition~\ref{prop:rule-flow-coherence}.  The two
conservation identities are Lemma~\ref{lem:group-conservation}.
\end{proof}

\begin{remark}
The three levels are the \emph{same} decision: a mapped reaction supplies
\(\delta_{p,m}\), a rule \(d_x^\pm\), and an annotation \(M\), all on the four
coordinates of Definition~\ref{def:lsg}.  Any disagreement is therefore an
occupancy mismatch at a named locus, grounding the diagnostic codes of
Section~\ref{sec:experiments}.
\end{remark}

In field terms the commit reads \(\ell'_v=\ell_v+\Delta_M(\operatorname{lp}(v))\)
and likewise for \(r_v\), \(s_e\), and \(p_e\), so the successor fields are
integral exactly when \(\Delta_M\) is and nonnegative under condition~(2).
Figure~\ref{fig:group-gates}D--E shows why integrality and common-prestate
availability are separate checks.
Table~\ref{tab:electron-flow-schema} gives the serialized fields.

\subsection{Conservation and atomicity}

\begin{lemma}
\label{lem:group-conservation}
For any event group \(M\),
\[
  \sum_{x \in \Loc(G)} w(x) \Delta_M(x) = 0.
\]
Consequently, any admissible commit conserves the represented electron inventory, \(\Ie(G') = \Ie(G)\), and preserves total formal charge by Proposition~\ref{prop:conservation}.
\end{lemma}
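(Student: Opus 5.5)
The plan is to prove the weighted identity move by move and then sum, using linearity of \(\Delta_M\). For a single move \(\mu=(x\xrightarrow{k}y)\), Equation~\eqref{eq:move-delta-def} gives \(w(x)\Delta_\mu(x)=-k\) and \(w(y)\Delta_\mu(y)=+k\). Since \(x\ne y\) and \(\Delta_\mu\) vanishes elsewhere, \(\sum_z w(z)\Delta_\mu(z)=0\). Because \(\Delta_M=\sum_{\mu\in M}\Delta_\mu\) is a finite sum counted with multiplicity, and \(w\) depends only on the locus, the weighted sum of \(\Delta_M\) is the sum of the per-move weighted sums. It therefore vanishes. This part needs neither admissibility nor integrality.

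For the consequence, I would express \(\Ie\) as a weighted locus sum. By the expansion after Equation~\eqref{eq:inventory-def},
\[
\Ie(G)=\sum_v(2\ell_v+r_v)+2\sum_e(s_e+p_e)=\sum_{x\in\Loc(G)}w(x)\,c_G(x).
\]
Here the quotas are \(w(\operatorname{lp})=w(\sigma)=w(\pi)=2\) and \(w(\rho)=1\), and non-edges contribute zero by the zero extension. For an admissible commit \(G'=G\oplus\Delta_M\), condition (2) gives \(c_{G'}=c_G+\Delta_M\ge0\), and condition (3) makes it integral, so \(G'\) is a genuine LLG on the same vertex set.

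The step I expect to need care is the identification \(\Loc(G')=\Loc(G)\) together with this occupancy formula, because \(G\oplus\Delta_M\) may create or remove edges. I would handle it as follows. \(\Loc\) is indexed by all unordered vertex pairs \(\tbinom{V}{2}\), not by \(E(G)\), so it depends only on \(V\), which is unchanged. Removing zero-order edges only deletes terms whose occupancy is zero, and new edges correspond to loci that previously had occupancy zero. Hence
\[
\Ie(G')=\sum_x w(x)\bigl(c_G(x)+\Delta_M(x)\bigr)=\Ie(G)+0.
\]

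Finally, the identity map on vertices is element preserving, since \(\oplus\) changes no element labels. Proposition~\ref{prop:conservation} then converts \(\Ie(G')=\Ie(G)\) into \(Q(G')=Q(G)\).
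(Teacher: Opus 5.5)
Your proposal is correct and follows the paper's proof. Each move contributes $w(x)\Delta_\mu(x)+w(y)\Delta_\mu(y)=-k+k=0$; summing over $M$ identifies $\Ie(G')-\Ie(G)$ with $\sum_x w(x)\Delta_M(x)=0$, and Proposition~\ref{prop:conservation} then gives charge conservation. You also make explicit two points the paper's short proof leaves implicit: that $\Ie(G)=\sum_x w(x)c_G(x)$, and that $\Loc$ is indexed by all vertex pairs, so creating edges or removing zero-order edges does not change the index set.
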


\begin{proof}
Each move contributes
\[
  w(x)\Bigl(-\frac{k}{w(x)}\Bigr) + w(y)\Bigl(+\frac{k}{w(y)}\Bigr) = -k + k = 0.
\]
Summing over \(M\) gives
\(\Ie(G')-\Ie(G)=\sum_x w(x)\Delta_M(x)=0\).  Atom preservation gives charge
conservation by Proposition~\ref{prop:conservation}.
\end{proof}

\begin{remark}
\label{rem:fishhooks}
A single homolytic fishhook gives
\(\Delta(\sigma_{uv})=-\frac12\) and is rejected.  The coupled pair
\[
  \sigma(u,v) \xrightarrow{1e^-} \boldsymbol{\cdot}(u), \quad \sigma(u,v) \xrightarrow{1e^-} \boldsymbol{\cdot}(v)
\]
gives \(\Delta(\sigma_{uv})=-1\) and
\(\Delta(\rho_u)=\Delta(\rho_v)=1\), so it is integral.
\end{remark}

\begin{example}
\label{ex:beta-scission-group}
The \(\beta\)-scission of Example~\ref{ex:beta-scission-lsg} is
\[
  \boldsymbol{\cdot}(o) \xrightarrow{1e^-} \pi(o,c_1), \quad \sigma(c_1,c_2) \xrightarrow{1e^-} \pi(o,c_1), \quad \sigma(c_1,c_2) \xrightarrow{1e^-} \boldsymbol{\cdot}(c_2).
\]
Its nonzero delta is
\(\Delta(\rho_o,\pi_{o,c_1},\sigma_{c_1,c_2},\rho_{c_2})
=(-1,1,-1,1)\).  The full group is admissible, whereas every nonempty proper
subset is half-integral.
\end{example}

\begin{remark}
\label{rem:composition}
Proposition~\ref{prop:induced-rule} turns each admissible event group \(M_i\)
into a rule \(G_{i-1}\derives{p_{M_i}}G_i\), so a mechanism
\((M_1,\dots,M_n)\) is the sequence
\(G_0\derives{p_{M_1}}G_1\Longrightarrow\cdots\derives{p_{M_n}}G_n\).
Its net change is \(\sum_i\Delta_{M_i}\), with \(\Ie\) conserved at each step.
This is weaker than DPO rule composition, which also preserves matches,
dependencies, structural edits, and conditions~\cite{Andersen2018RuleComposition}.

For \(S_N2\) (Figure~\ref{fig:mechanism-composition}), one group containing bond
formation and chloride expulsion commits without an internal intermediate,
whereas two singleton groups make the pentavalent state an explicit successor
of the first.  An octet-bounded \(\Pi\) rejects it and the permissive
\(\Pi_{\mathrm{exec}}\) of Equation~\eqref{eq:pi-exec} may admit it, so
atomicity does not choose the group boundary and endpoint conformance, not
policy validity, separates that successor from the declared product.  Building
the grouped composite as a \emph{rule} needs the concurrency theorem of
Section~\ref{sec:conclusion}.
\end{remark}

\subsection{A locus-sorted grammar for event groups}
\label{sec:grammar}

Admissibility gives algebraic consistency.  The grammar organizes two-electron
moves into polar classes and fishhook groups into radical macros.

\paragraph{Polar curly-arrow classes}
A polar class is the ordered source--target sort pair
\((t_i,t_j)\in\{\operatorname{lp},\sigma,\pi\}^2\) of one two-electron arrow.
The bond-free diagonal \(\operatorname{lp}\to\operatorname{lp}\) is excluded:
moving a whole pair between adjacent atoms changes both charges by two units
without editing a bond and has no elementary realization.  Its one-electron
surface shape reappears only in the alpha-resonance macro below.  The remaining
eight bond-coupled pairs (Table~\ref{tab:polar-classes},
Figure~\ref{fig:polar-grammar}) exhaust the polar classes and recast the
\texttt{PMechDB} taxonomy~\cite{Tavakoli2024}.  They classify arrows, not event
groups.  Their \(k=2\) deltas are integral, but locality, gross availability,
and group-successor validity remain independent.

\begin{figure*}[htbp]
  \centering
  \resizebox{\textwidth}{!}{
\begin{tikzpicture}[
  font=\Large,
  iocbadge/.append style={font=\bfseries\normalsize},
  iotitle/.append style={font=\bfseries\large},
  mov/.style={font=\normalsize, align=center, text=black!70, text width=40mm}
]
\def\cW{5.0}   
\def\rH{3.35}  

\begin{scope}[shift={(0,0)}]
  \iosub{-2.05,1.55}{C1}{$\operatorname{lp}\!\to\!\pi$}
  \node[catom] (n) at (-0.85,0.15) {\aN};
  \node[catom] (c) at ( 0.75,0.15) {\aC};
  \node[cchg] at ($(c)+(0.32,0.32)$) {\chplus};
  \node[celp] at ($(n)+(-0.07,0.42)$){}; \node[celp] at ($(n)+(0.13,0.42)$){};
  \draw[cbond] (n)--(c);
  \draw[cpair] ($(n)+(0.03,0.44)$) to[bend left=22] ($(n)!0.5!(c)+(0,0.34)$);
  \node[mov] at (0,-1.05)
    {mesomeric donation};
\end{scope}

\begin{scope}[shift={(\cW,0)}]
  \iosub{-2.05,1.55}{C2}{$\operatorname{lp}\!\to\!\sigma$}
  \node[catom] (b) at (-1.4,0.15) {B};
  \node[cchg] at ($(b)+(0.30,0.34)$) {\chminus};
  \node[celp] at ($(b)+(-0.07,0.42)$){}; \node[celp] at ($(b)+(0.13,0.42)$){};
  \node[catom] (h) at (0.2,0.15) {\aH};
  \node[catom] (a) at (1.4,0.15) {A};
  \draw[cbond] (h)--(a);
  \draw[cpair] ($(b)+(0.06,0.44)$) to[bend left=20] ($(b)!0.68!(h)+(0,0.28)$);
  \node[mov] at (0,-1.05)
    {$S_N2$ / proton transfer};
\end{scope}

\begin{scope}[shift={(2*\cW,0)}]
  \iosub{-2.05,1.55}{C3}{$\pi\!\to\!\operatorname{lp}$}
  \node[catom] (c) at (-0.45,0.15) {\aC};
  \node[catom] (o) at ( 0.95,0.15) {\aO};
  \draw[cbond] ($(c)+(0.22,0.07)$)--($(o)+(-0.20,0.07)$);
  \draw[cbond] ($(c)+(0.22,-0.07)$)--($(o)+(-0.20,-0.07)$);
  \node[celp] at ($(o)+(0.30,0.11)$){}; \node[celp] at ($(o)+(0.30,-0.11)$){};
  \draw[cpair] ($(c)!0.5!(o)+(0,0.22)$) to[bend left=26] ($(o)+(0.10,0.52)$);
  \node[celp] at ($(o)+(-0.04,0.60)$){}; \node[celp] at ($(o)+(0.18,0.62)$){};
  \node[cchg] at ($(c)+(-0.04,0.54)$) {\chplus};
  \node[cchg] at ($(o)+(0.40,0.46)$) {\chminus};
  \node[mov] at (0,-1.05)
    {carbonyl ionization};
\end{scope}

\begin{scope}[shift={(3*\cW,0)}]
  \iosub{-2.05,1.55}{C4}{$\pi\!\to\!\pi$}
  \node[catom] (c1) at (-1.35,0.15) {\aC};
  \node[catom] (c2) at (-0.05,0.15) {\aC};
  \node[catom] (c3) at ( 1.25,0.15) {\aC};
  \draw[cbond] ($(c1)+(0.22,0.07)$)--($(c2)+(-0.22,0.07)$);
  \draw[cbond] ($(c1)+(0.22,-0.07)$)--($(c2)+(-0.22,-0.07)$);
  \draw[cbond] (c2)--(c3);
  \node[cchg] at ($(c3)+(0.30,0.32)$) {\chplus};
  \draw[cpair] ($(c1)!0.5!(c2)+(0,0.24)$) to[bend left=24] ($(c2)!0.5!(c3)+(0,0.24)$);
  \node[mov] at (0,-1.05)
    {allylic resonance};
\end{scope}

\begin{scope}[shift={(0,-\rH)}]
  \iosub{-2.05,1.55}{C5}{$\pi\!\to\!\sigma$}
  \node[catom] (c1) at (-1.55,0.15) {\aC};
  \node[catom] (c2) at (-0.45,0.15) {\aC};
  \draw[cbond] ($(c1)+(0.22,0.07)$)--($(c2)+(-0.22,0.07)$);
  \draw[cbond] ($(c1)+(0.22,-0.07)$)--($(c2)+(-0.22,-0.07)$);
  \node[catom] (br1) at (0.8,0.55) {\aBr};
  \node[catom] (br2) at (1.7,-0.18) {\aBr};
  \draw[cbond] (br1)--(br2);
  \draw[cpair] ($(c1)!0.5!(c2)+(0,0.22)$) to[bend left=16] ($(br1)+(-0.18,-0.16)$);
  \node[mov] at (0,-1.05)
    {electrophilic addition};
\end{scope}

\begin{scope}[shift={(\cW,-\rH)}]
  \iosub{-2.05,1.55}{C6}{$\sigma\!\to\!\operatorname{lp}$}
  \node[catom] (c) at (-0.7,0.15) {\aC};
  \node[catom] (br) at (0.9,0.15) {\aBr};
  \draw[cbond] (c)--(br);
  \node[cchg] at ($(c)+(-0.05,0.42)$) {\chplus};
  \node[celp] at ($(br)+(0.34,0.11)$){}; \node[celp] at ($(br)+(0.34,-0.11)$){};
  \draw[cpair] ($(c)!0.55!(br)+(0,0.14)$) to[bend left=24] ($(br)+(0.05,0.54)$);
  \node[celp] at ($(br)+(-0.06,0.62)$){}; \node[celp] at ($(br)+(0.16,0.62)$){};
  \node[cchg] at ($(br)+(0.40,0.46)$) {\chminus};
  \node[mov] at (0,-1.05)
    {ionization / E1};
\end{scope}

\begin{scope}[shift={(2*\cW,-\rH)}]
  \iosub{-2.05,1.55}{C7}{$\sigma\!\to\!\pi$}
  \node[catom] (h)  at (-1.7,0.78) {\aH};
  \node[catom] (c1) at (-0.7,0.15) {\aC};
  \node[catom] (c2) at ( 0.55,0.15) {\aC};
  \node[catom] (lg) at ( 1.7,0.15) {LG};
  \draw[cbond] (h)--(c1); \draw[cbond] (c1)--(c2); \draw[cbond] (c2)--(lg);
  \draw[cpair] ($(h)!0.5!(c1)+(0.02,0.05)$) to[bend left=30] ($(c1)!0.5!(c2)+(0,0.30)$);
  \node[mov] at (0,-1.05)
    {$\beta$-elimination (E2)};
\end{scope}

\begin{scope}[shift={(3*\cW,-\rH)}]
  \iosub{-2.05,1.55}{C8}{$\sigma\!\to\!\sigma$}
  \node[catom] (h)  at (-0.75,0.95) {\aH};
  \node[catom] (c1) at (-0.75,0.12) {\aC};
  \node[catom] (c2) at ( 0.9,0.12) {\aC};
  \node[cchg] at ($(c2)+(0.32,0.32)$) {\chplus};
  \draw[cbond] (h)--(c1); \draw[cbond] (c1)--(c2);
  \draw[cpair] ($(h)!0.55!(c1)+(0.06,0)$) to[bend left=44] ($(c2)+(-0.02,0.42)$);
  \node[mov] at (0,-1.05)
    {1,2-shift};
\end{scope}

\begin{pgfonlayer}{background}
  \foreach \i in {0,1,2,3}{
    \draw[iopanel] ($(\i*\cW,0)+(-2.05,-1.42)$) rectangle ($(\i*\cW,0)+(2.15,1.85)$);
    \draw[iopanel] ($(\i*\cW,-\rH)+(-2.05,-1.42)$) rectangle ($(\i*\cW,-\rH)+(2.15,1.85)$);
  }
\end{pgfonlayer}
\end{tikzpicture}}
  \caption{The eight bond-coupled polar transitions of Table~\ref{tab:polar-classes}, each the single fundamental
  source\(\to\)target move on its prototype, labeled by arrow pattern
  \(t_i\!\to\!t_j\) and a representative step.  Blue curly arrows are
  two-electron moves, filled dots lone pairs, and \(\oplus/\ominus\) the exposed
  charges.  Heteroatoms use CDK element colors, with placeholders for a
  Br\o{}nsted base (B), acid (A), and leaving group (LG).  The gallery is
  complete: every polar class has a bond endpoint.}
  \label{fig:polar-grammar}
\end{figure*}

\paragraph{Radical fishhook macros}
Figure~\ref{fig:radical-grammar} and Table~\ref{tab:macros}
define six bond-centered fishhook groups and one alpha-resonance
relocation covering the seven \texttt{RMechDB} classes~\cite{Tavakoli2023}.
They are a template library, not a decomposition of every admissible group.
Each bond-centered macro couples at least two fishhooks into an integral delta.
Homolysis/recombination and radical addition/\(\beta\)-scission are inverse
pairs.  The other three macros are self-inverse, enabling reverse replay without
duplicate templates.  Hydrogen transfer uses an explicit mapped H vertex.

Alpha resonance (M7) is the one macro that is a \emph{declared shorthand} rather
than a move multiset.  Its compact source annotation sends one electron from a
lone-pair donor \(d\) to a radical acceptor \(a\), but literal
\(\operatorname{lp}(d)\to\operatorname{lp}(a)\) would be half-integral.  The
executable meaning is the three-move group
\begin{equation}
 M_{d\rightsquigarrow a}=\bigl\{\,
   \operatorname{lp}(d)\xrightarrow{1}\rho_d,\quad
   \operatorname{lp}(d)\xrightarrow{1}\operatorname{lp}(a),\quad
   \rho_a\xrightarrow{1}\operatorname{lp}(a)
 \,\bigr\},
 \label{eq:lp-radical-moves}
\end{equation}
whose net delta is integral,
\begin{equation}
 \Delta_{M_{d\rightsquigarrow a}}
 =-\mathbf 1_{\operatorname{lp}(d)}+\mathbf 1_{\rho_d}
  -\mathbf 1_{\rho_a}+\mathbf 1_{\operatorname{lp}(a)}.
 \label{eq:lp-radical-expansion}
\end{equation}
The two departures from \(\operatorname{lp}(d)\) sum to \(-1\) and the two
arrivals at \(\operatorname{lp}(a)\) to \(+1\), with weighted sum
\(-2+1-1+2=0\).  The pre-state must have \(\ell_d,r_a\ge1\) and
\(\{d,a\}\in E(G)\), and execution commits this expansion rather than the
surface arrow.  M7 therefore reuses the excluded
\(\operatorname{lp}\to\operatorname{lp}\) shape only as one-electron notation,
so the ``eight plus seven'' templates do not double-count a move class.

\begin{figure*}[htbp]
  \centering
  \resizebox{\textwidth}{!}{
\begin{tikzpicture}[
  font=\Large,
  iocbadge/.append style={font=\bfseries\normalsize},
  iotitle/.append style={font=\bfseries\large},
  atom/.style={font=\LARGE, inner sep=1.2pt},
  bond/.style={line width=0.9pt, black!80},
  fish/.style={-{Latex[harpoon,length=2.0mm]}, thick, ioVermilion!85!black}, 
  rad/.style={circle, fill=ioVermilion, inner sep=1.0pt},     
  mov/.style={font=\normalsize, align=center, text=black!70, text width=40mm}
]
\def\cW{5.0}   
\def\rH{3.55}  

\begin{scope}[shift={(0,0)}]
  \iosub{-1.90,1.55}{M1}{homolysis}
  \node[atom] (x) at (-0.9,0.15) {X};
  \node[atom] (y) at ( 0.9,0.15) {Y};
  \draw[bond] (x)--(y);
  \draw[fish] (-0.16,0.30) to[out=150,in=30]  ($(x)+(0.22,0.42)$);
  \draw[fish] ( 0.16,0.30) to[out=30,in=150]  ($(y)+(-0.22,0.42)$);
  \node[mov] at (0,-1.05)
    {$\sigma(X,Y)\!\to\!\boldsymbol{\cdot}(X),\ \boldsymbol{\cdot}(Y)$};
\end{scope}

\begin{scope}[shift={(\cW,0)}]
  \iosub{-1.90,1.55}{M2}{recombination}
  \node[atom] (x) at (-0.9,0.15) {X};
  \node[atom] (y) at ( 0.9,0.15) {Y};
  \node[rad] at ($(x)+(0.34,0.14)$) {};
  \node[rad] at ($(y)+(-0.34,0.14)$) {};
  \draw[fish] ($(x)+(0.42,0.20)$) to[out=25,in=155] (-0.10,0.36);
  \draw[fish] ($(y)+(-0.42,0.20)$) to[out=155,in=25] ( 0.10,0.36);
  \node[mov] at (0,-1.05)
    {$\boldsymbol{\cdot}(X),\ \boldsymbol{\cdot}(Y)\!\to\!\sigma(X,Y)$};
\end{scope}

\begin{scope}[shift={(2*\cW,0)}]
  \iosub{-1.90,1.55}{M3}{radical addition}
  \node[atom] (r)  at (-1.55,0.15) {R};
  \node[atom] (c1) at (-0.15,0.15) {C};
  \node[atom] (c2) at ( 1.15,0.15) {C};
  \node[rad] at ($(r)+(0.33,0.13)$) {};
  \draw[bond] ($(c1)+(0.2, 0.06)$)--($(c2)+(-0.2, 0.06)$);
  \draw[bond] ($(c1)+(0.2,-0.06)$)--($(c2)+(-0.2,-0.06)$);
  \draw[fish] ($(r)+(0.40,0.16)$)  to[out=25,in=155]  ($(c1)+(-0.28,0.28)$); 
  \draw[fish] ($(c1)+(0.28,0.24)$) to[out=150,in=30]  ($(c1)+(-0.06,0.46)$); 
  \draw[fish] ($(c1)!0.5!(c2)+(0.05,0.22)$) to[out=40,in=150] ($(c2)+(0.24,0.36)$); 
  \node[mov] at (0,-1.05)
    {$\boldsymbol{\cdot}\!\to\!\sigma,\ \pi\!\to\!\sigma,\ \pi\!\to\!\boldsymbol{\cdot}$};
\end{scope}

\begin{scope}[shift={(3*\cW,0)}]
  \iosub{-1.90,1.55}{M4}{$\beta$-scission}
  \node[atom] (c1) at (-1.25,0.15) {C};
  \node[atom] (c2) at ( 0.05,0.15) {C};
  \node[atom] (x)  at ( 1.35,0.15) {X};
  \node[rad] at ($(c1)+(-0.33,0.13)$) {};
  \draw[bond] (c1)--(c2);
  \draw[bond] (c2)--(x);
  \draw[fish] ($(c1)+(-0.27,0.18)$) to[out=80,in=155] ($(c1)!0.45!(c2)+(-0.02,0.36)$); 
  \draw[fish] ($(c2)!0.5!(x)+(-0.02,0.22)$) to[out=150,in=35] ($(c1)!0.6!(c2)+(0.06,0.34)$); 
  \draw[fish] ($(c2)!0.55!(x)+(0.06,-0.16)$) to[out=-30,in=210] ($(x)+(0.24,-0.30)$); 
  \node[mov] at (0,-1.05)
    {$\boldsymbol{\cdot}\!\to\!\pi,\ \sigma\!\to\!\pi,\ \sigma\!\to\!\boldsymbol{\cdot}$};
\end{scope}

\begin{scope}[shift={(0.5*\cW,-\rH)}]
  \iosub{-1.90,1.55}{M5}{radical resonance}
  \node[atom] (c1) at (-1.25,0.15) {C};
  \node[atom] (c2) at ( 0.02,0.15) {C};
  \node[atom] (c3) at ( 1.3,0.15) {C};
  \node[rad] at ($(c1)+(-0.33,0.13)$) {};
  \draw[bond] (c1)--(c2);
  \draw[bond] ($(c2)+(0.2, 0.06)$)--($(c3)+(-0.2, 0.06)$);
  \draw[bond] ($(c2)+(0.2,-0.06)$)--($(c3)+(-0.2,-0.06)$);
  \draw[fish] ($(c1)+(-0.27,0.18)$) to[out=80,in=155] ($(c1)!0.45!(c2)+(-0.02,0.36)$); 
  \draw[fish] ($(c2)!0.5!(c3)+(-0.02,0.24)$) to[out=150,in=35] ($(c1)!0.62!(c2)+(0.06,0.34)$); 
  \draw[fish] ($(c2)!0.55!(c3)+(0.06,0.22)$) to[out=40,in=150] ($(c3)+(0.24,0.36)$); 
  \node[mov] at (0,-1.05)
    {$\boldsymbol{\cdot}\!\to\!\pi,\ \pi\!\to\!\pi,\ \pi\!\to\!\boldsymbol{\cdot}$};
\end{scope}

\begin{scope}[shift={(1.5*\cW,-\rH)}]
  \iosub{-1.90,1.55}{M6}{H abstraction}
  \node[atom] (r) at (-1.55,0.15) {R};
  \node[atom] (h) at (-0.15,0.15) {H};
  \node[atom] (c) at ( 1.05,0.15) {C};
  \node[rad] at ($(r)+(0.33,0.13)$) {};
  \draw[bond] (h)--(c);
  \draw[fish] ($(r)+(0.40,0.16)$)  to[out=25,in=155] ($(h)+(-0.24,0.28)$); 
  \draw[fish] ($(h)+(0.22,0.24)$)  to[out=150,in=30] ($(h)+(-0.04,0.46)$); 
  \draw[fish] ($(h)!0.5!(c)+(0.05,0.22)$) to[out=40,in=150] ($(c)+(0.22,0.36)$); 
  \node[mov] at (0,-1.05)
    {$\boldsymbol{\cdot}\!\to\!\sigma,\ \sigma\!\to\!\sigma,\ \sigma\!\to\!\boldsymbol{\cdot}$};
\end{scope}

\begin{scope}[shift={(2.5*\cW,-\rH)}]
  \iosub{-1.90,1.55}{M7}{alpha resonance}
  \node[atom] (o) at (-0.75,0.15) {O};
  \node[atom] (c) at ( 0.75,0.15) {C};
  \draw[bond] (o)--(c);
  \fill[black!80] ($(o)+(-0.20,0.34)$) circle (1.05pt);
  \fill[black!80] ($(o)+(-0.08,0.37)$) circle (1.05pt);
  \node[rad] at ($(c)+(0.31,0.15)$) {};
  \draw[fish] ($(o)+(-0.06,0.39)$) to[out=35,in=145] ($(c)+(0.24,0.36)$);
  \node[mov] at (0,-1.05)
    {$\operatorname{lp}(O),\boldsymbol{\cdot}(C)
      \rightsquigarrow\boldsymbol{\cdot}(O),\operatorname{lp}(C)$};
\end{scope}

\begin{pgfonlayer}{background}
  \foreach \i in {0,1,2,3}{
    \draw[iopanel] ($(\i*\cW,0)+(-2.05,-1.45)$) rectangle ($(\i*\cW,0)+(2.65,1.85)$);
  }
  \foreach \i in {0.5,1.5,2.5}{
    \draw[iopanel] ($(\i*\cW,-\rH)+(-2.05,-1.45)$) rectangle ($(\i*\cW,-\rH)+(2.65,1.85)$);
  }
\end{pgfonlayer}
\end{tikzpicture}}
  \caption{The seven radical macros of Table~\ref{tab:macros} (M1--M7), as minimal arrow-pushing schematics with
  their locus-sorted move multisets.  M1--M6 are bond-centered coupled groups,
  each needing at least two fishhooks that commit as one integral group
  (Remark~\ref{rem:fishhooks}).  M7 is the explicitly expanded alpha-resonance
  relocation, a shorthand with integral four-coordinate delta
  Equation~\eqref{eq:lp-radical-expansion} that does not edit the bond.
  Vermilion single-barbed fishhooks are \(1e^-\) moves, vermilion dots
  radicals.  R is a radical partner and X, Y, C bond atoms.  Representative
  steps: M1 peroxide O--O homolysis, M2 radical--radical coupling, M3 alkene
  propagation, M4 alkoxyl \(\beta\)-scission, M5 allyl resonance, M6
  hydrogen-atom transfer, M7 alpha resonance.  Class names follow the
  \texttt{RMechDB} taxonomy \cite{Tavakoli2023}.  The drawings and executable
  semantics are ours, and the involution pairing is the table's last column.}
  \label{fig:radical-grammar}
\end{figure*}

\subsection{Formal relation to half-edge and electron-feature graphs}
\label{sec:efg-llg-relation}

HEGs retain electron identity, EFGs retain anonymous feature objects, and LLGs
store feature-fiber cardinalities while distinguishing \(\sigma\) from \(\pi\)
\cite{holzschuh2026halfedge}.  Refining \textsf{Pair} into
\(\textsf{Pair}_{\sigma},\textsf{Pair}_{\pi}\) and restricting to fiber-bijective
morphisms gives a span
\(\mathbf{EFG}\xleftarrow{F}\mathbf{EFG}^{\mathrm{fib},\Pi}_{\sigma/\pi}
\xrightarrow{K}\mathbf{LLG}_{\Pi}^{=}\)
in which \(F\) forgets bond kind and \(K\) forgets feature identity.  Because they
discard incomparable information, neither target refines the other and the
\(\sigma/\pi\) split must precede condensation.  This makes precise the two
differences stated in Section~\ref{sec:related}, the determinate \(\pi_e\)
transition locus and the executable rather than existential reading of an
annotation.  Section~\ref{si:efg-llg} gives the correspondence table, functor
details, and a worked condensation.

\begin{figure*}[htbp]
  \centering
  \resizebox{\textwidth}{!}{
\begin{tikzpicture}[
  font=\tiny,
  atom/.style={circle, draw, fill=white, minimum size=4.9mm, inner sep=0.4pt},
  hatom/.style={circle, draw, fill=white, minimum size=3.8mm, inner sep=0.4pt,
                font=\tiny},
  bond/.style={line width=0.7pt, black!80},
  eh/.style={circle, fill=ioBlue, inner sep=1.0pt},
  lpdot/.style={circle, fill=ioBlue, inner sep=0.7pt},
  featbox/.style={draw, rounded corners=1.5pt, inner sep=1.6pt, font=\tiny,
                  minimum height=3.4mm},
  pairbox/.style={featbox, fill=ioPurple!14, draw=ioPurple!65!black},
  lpbox/.style={featbox, fill=ioGreen!16, draw=ioGreen!55!black},
  fld/.style={font=\tiny},
  arr/.style={-{Latex[length=3mm]}, line width=1.3pt, ioVermilion!85!black},
  move/.style={-{Latex[length=2.6mm]}, very thick, ioPurple!80!black},
  tie/.style={line width=0.5pt, black!45},
  psi/.style={line width=0.45pt, black!40, dash pattern=on 1pt off 1pt}
]
\def\ax{0}    \def\ay{3.55}
\def\bx{4.80} \def\by{3.55}
\def\cx{4.80} \def\cy{0}
\def\dx{0}    \def\dy{0}

\begin{scope}[shift={(\ax,\ay)}]
  \iosub{-1.65,1.50}{A}{}
  \node[atom] (aC) at (-0.5,0) {C};
  \node[atom] (aO) at ( 0.95,0) {\textcolor{cdkO}{O}};
  \node[hatom] (aH1) at (-1.72, 0.74) {H};
  \node[hatom] (aH2) at (-1.72,-0.74) {H};
  \foreach \Hy in {aH1,aH2}{
    \coordinate (p) at ($(aC)!0.42!(\Hy)$);
    \coordinate (q) at ($(aC)!0.66!(\Hy)$);
    \draw[psi] (aC)--(p); \draw[psi] (\Hy)--(q);
    \draw[bond] (p)--(q);
    \node[eh] at (p) {}; \node[eh] at (q) {};
  }
  \foreach \dyy in {0.11,-0.11}{
    \coordinate (ca) at ($(aC)+(0.52,\dyy)$);
    \coordinate (oa) at ($(aO)+(-0.52,\dyy)$);
    \draw[psi] (aC)--(ca); \draw[psi] (aO)--(oa);
    \draw[bond] (ca)--(oa);
    \node[eh] at (ca) {}; \node[eh] at (oa) {};
  }
  \coordinate (aLPu1) at ($(aO)+(0.520,0.345)$);
  \coordinate (aLPu2) at ($(aO)+(0.250,0.560)$);
  \draw[psi] (aO)--(aLPu1); \draw[psi] (aO)--(aLPu2);
  \draw[bond] (aLPu1) to[bend left=48] (aLPu2);
  \node[eh] at (aLPu1) {}; \node[eh] at (aLPu2) {};
  \coordinate (aLPd1) at ($(aO)+(0.520,-0.345)$);
  \coordinate (aLPd2) at ($(aO)+(0.250,-0.560)$);
  \draw[psi] (aO)--(aLPd1); \draw[psi] (aO)--(aLPd2);
  \draw[bond] (aLPd1) to[bend right=48] (aLPd2);
  \node[eh] at (aLPd1) {}; \node[eh] at (aLPd2) {};
\end{scope}

\begin{scope}[shift={(\bx,\by)}]
  \iosub{-1.65,1.50}{B}{}
  \node[atom] (bC) at (-0.5,0) {C};
  \node[atom] (bO) at ( 0.90,0) {\textcolor{cdkO}{O}};
  \node[hatom] (bH1) at (-1.78, 0.82) {H};
  \node[hatom] (bH2) at (-1.78,-0.82) {H};
  \node[pairbox] (bPH1) at ($(bC)!0.54!(bH1)$) {Pair};
  \node[pairbox] (bPH2) at ($(bC)!0.54!(bH2)$) {Pair};
  \draw[tie] (bC)--(bPH1); \draw[tie] (bPH1)--(bH1);
  \draw[tie] (bC)--(bPH2); \draw[tie] (bPH2)--(bH2);
  \node[pairbox] (bP1) at (0.3,0.55) {Pair};
  \node[pairbox] (bP2) at (0.3,-0.55) {Pair};
  \draw[tie] (bC)--(bP1); \draw[tie] (bP1)--(bO);
  \draw[tie] (bC)--(bP2); \draw[tie] (bP2)--(bO);
  \node[lpbox] (bL1) at (1.46,0.48) {LP};
  \node[lpbox] (bL2) at (1.46,-0.48) {LP};
  \draw[tie] (bO)--(bL1); \draw[tie] (bO)--(bL2);
\end{scope}

\begin{scope}[shift={(\cx,\cy)}]
  \iosub{-1.65,1.50}{C}{}
  \node[atom] (cC) at (-0.6,0) {C};
  \node[atom] (cO) at ( 1.00,0) {\textcolor{cdkO}{O}};
  \node[hatom] (cH1) at (-1.7, 0.68) {H};
  \node[hatom] (cH2) at (-1.7,-0.68) {H};
  \draw[bond,black!45] (cC)--(cH1); \draw[bond,black!45] (cC)--(cH2);
  \draw[line width=1.3pt,ioBlue!80!black]   ($(cC)+(0.36,0.11)$)--($(cO)+(-0.34,0.11)$);
  \draw[line width=1.3pt,ioPurple!80!black] ($(cC)+(0.36,-0.11)$)--($(cO)+(-0.34,-0.11)$);
  \node[fld,ioBlue!75!black]   at (0.22,0.66) {$s_e{=}1$};
  \node[fld,ioPurple!75!black] at (0.22,-0.72) {$p_e{=}1$};
  \node[fld,ioGreen!45!black] at (1.48,0.66) {$\ell_O{=}2$};
  \node[fld,black!55] at (1.45,-0.6) {$q_O{=}0$};
  \node[fld,black!55] at (0.22,1.06) {$o_e{=}s_e{+}p_e{=}2$};
\end{scope}

\begin{scope}[shift={(\dx,\dy)}]
  \iosub{-1.65,1.50}{D}{}
  \node[atom] (dC) at (-0.45,0) {C};
  \node[atom] (dO) at ( 1.10,0) {\textcolor{cdkO}{O}};
  \draw[line width=1.3pt,ioBlue!80!black]   ($(dC)+(0.36,0.11)$)--($(dO)+(-0.34,0.11)$);
  \draw[line width=1.3pt,ioPurple!35!black,dash pattern=on 2.4pt off 1.8pt]
        ($(dC)+(0.36,-0.11)$)--($(dO)+(-0.34,-0.11)$);
  \draw[move] (0.5,-0.3) to[out=-40,in=210] ($(dO)+(-0.05,-0.46)$);
  \node[lpdot] at ($(dO)+(0.46,-0.36)$) {}; \node[lpdot] at ($(dO)+(0.62,-0.24)$) {};
  \node[fld,black!60] at (0.35,1.06) {$\pi(\mathrm{C,O})\!\to\!\mathrm{lp}(\mathrm{O})$};
  \node[fld,black!60] at (0.15,-1.05) {$\Delta p_{\mathrm{C,O}}{=}{-}1,\ \Delta\ell_O{=}{+}1$};
\end{scope}

\draw[arr] ($(\ax,\ay)+(1.98,0.15)$) -- ($(\bx,\by)+(-2.05,0.15)$);
\node[font=\scriptsize\bfseries] at ($({0.5*(\ax+\bx)},\ay)+(0,0.72)$) {(1)};
\node[align=center,font=\tiny,text width=10mm]
  at ($({0.5*(\ax+1.98+\bx-2.05)},\ay)+(0,-0.58)$)
  {forget\\ identity};

\draw[arr] ($(\bx,\by)+(0,-1.30)$) -- ($(\cx,\cy)+(0,1.60)$);
\node[font=\scriptsize\bfseries] at ($(\bx,{0.5*(\by+\cy)})+(-0.30,0.12)$) {(2)};
\node[align=left,font=\tiny,text width=17mm] at ($(\bx,{0.5*(\by+\cy)})+(1.02,0.12)$)
  {condense\\ split $\sigma$/$\pi$};

\draw[arr] ($(\cx,\cy)+(-2.05,0.15)$) -- ($(\dx,\dy)+(1.98,0.15)$);
\node[font=\scriptsize\bfseries] at ($({0.5*(\cx+\dx)},\cy)+(0,0.80)$) {(3)};
\node[align=center,font=\tiny] at ($({0.5*(\cx+\dx)},\cy)+(0,-0.52)$)
  {execute};

\begin{pgfonlayer}{background}
  \foreach \px/\py in {\ax/\ay, \bx/\by, \cx/\cy, \dx/\dy}{
    \draw[iopanel] ($(\px,\py)+(-2.05,-1.30)$) rectangle ($(\px,\py)+(1.98,1.60)$);
  }
\end{pgfonlayer}
\end{tikzpicture}}
  \caption{From explicit electrons to an executable LLG state, on formaldehyde,
  following the numbered transitions clockwise from~(A).
  \textbf{(A)} The HEG retains individual electrons.  Dashed lines are the
  ownership map \(\psi\).  \textbf{(B)} Coarsening drops electron identity but
  keeps anonymous \textsf{Pair}/\textsf{LP} objects, so the two carbonyl pairs
  become indistinguishable.  \textbf{(C)} The span of
  Equation~\eqref{eq:span-efg-llg} condenses feature fibers to scalar node and
  edge labels, separating \(\sigma\)- and \(\pi\)-components.  \textbf{(D)} A
  locus-sorted event consumes the carbonyl \(\pi\) and creates an oxygen lone
  pair.  The delta \(\Delta_M\) is validated before committing
  \(c_{G'}=c_G+\Delta_M\).}
  \label{fig:abstraction-transition}
\end{figure*}

\section{Experiments and results}
\label{sec:experiments}

\subsection{Graph rewriting experiments}

\subsubsection{Data and protocol}

These experiments use 39,732 balanced reactions from
Ref.~\cite{Laffitte2026}.  Each has paired partial and reference-complete AAMs,
which support rule replay and independent ITS comparison.  The expansion
benchmark includes only balanced reactions with valid atom
maps~\cite{gonzalezlaffitte2024}.  The 5,426 \texttt{RMechDB}
records~\cite{Tavakoli2023} additionally test radical rule replay and
map-expansion coverage.

\begin{enumerate}
    \item[--] \textbf{Rule replay.} Extract one rule from each fully mapped
    reference and enumerate its forward and inverse applications.  Success
    requires exact recovery of the standardized reference endpoint.  We compare
    unique ITS outcomes.

    \item[--] \textbf{Partial-AAM expansion.} Expand each partial map with the
    method of Ref.~\cite{Laffitte2026}.  On the general corpus, success requires
    the induced ITS to be isomorphic to the independent reference ITS. \texttt{RMechDB} lacks independent full maps, so its result measures
    completion coverage, not mapping accuracy.
\end{enumerate}

\subsubsection{Rule replay}
\label{sec:res-rewriting}

Replay is paired: each rule is applied exhaustively, without per-direction
timeout, to its own originating reactant and product, so the two
representations differ only in label algebra.  Both recover the standardized
reference endpoint on all 39,732 records in both directions
(Figure~\ref{fig:benchmarks}A) and differ only in the admitted outcome set,
where LLG removes 96 forward (0.10\%) and 818 inverse (0.69\%) unique ITS
outcomes, an exclusion rate about sevenfold higher inverse, on baselines of
2.42 and 2.99 outcomes per record.  Since the rule sets are identical outside
the resource coordinates, every exclusion is necessarily a resource-availability
or \(\Pi\)-validity refusal rather than a structural one.  Whether each is
chemically correct is undetermined, as the excluded outcomes carry no
independent validity labels.

\begin{figure*}[htbp]
\centering
\resizebox{\textwidth}{!}{
\providecommand{\runboxplot}[9]{%
  \pgfmathsetmacro{\minx}{2.00+(#2-1.75)*5.176470588}
  \pgfmathsetmacro{\qonex}{2.00+(#3-1.75)*5.176470588}
  \pgfmathsetmacro{\medianx}{2.00+(#4-1.75)*5.176470588}
  \pgfmathsetmacro{\qthreex}{2.00+(#5-1.75)*5.176470588}
  \pgfmathsetmacro{\maxx}{2.00+(#6-1.75)*5.176470588}
  \foreach \mean/\sd in {#7}{
    \pgfmathsetmacro{\meanx}{2.00+(\mean-1.75)*5.176470588}
    \draw[draw=#9, line width=0.75pt] (\minx,#1)--(\maxx,#1);
    \draw[draw=#9, line width=0.65pt]
      (\minx,{#1-0.055})--(\minx,{#1+0.055})
      (\maxx,{#1-0.055})--(\maxx,{#1+0.055});
    \filldraw[fill=#9!16, draw=#9, line width=0.65pt]
      (\qonex,{#1-0.080}) rectangle (\qthreex,{#1+0.080});
    \draw[draw=#9, line width=0.9pt]
      (\medianx,{#1-0.080})--(\medianx,{#1+0.080});
    \foreach \runtime/\offset in {#8}{
      \pgfmathsetmacro{\runx}{2.00+(\runtime-1.75)*5.176470588}
      \filldraw[fill=#9!45, draw=#9, line width=0.35pt, opacity=0.78]
        (\runx,{#1+\offset}) circle[radius=0.031];
    }
    \filldraw[fill=#9, draw=white, line width=0.55pt]
      (\meanx,#1) circle[radius=0.060];
    \node[val, anchor=south, text=#9] at (\meanx,{#1+0.135})
      {\(\pgfmathprintnumber[fixed,fixed zerofill,precision=3]{\mean}
      \mathbin{\pm}
      \pgfmathprintnumber[fixed,fixed zerofill,precision=3]{\sd}\)};
  }
}
\begin{tikzpicture}[
  font=\small,
  row/.style={font=\scriptsize, anchor=east, text=black!72},
  tick/.style={font=\tiny, text=black!48},
  tickbig/.style={font=\scriptsize, text=black!55},
  val/.style={font=\tiny\bfseries, inner sep=0pt},
  valbig/.style={font=\scriptsize\bfseries, inner sep=0pt},
  guide/.style={draw=black!8, line width=0.45pt},
  axis/.style={draw=black!20, line width=0.5pt},
  band/.style={fill=black!4, rounded corners=1.6pt},
  stem/.style={line width=2.6pt, line cap=round},
  method/.style={circle, draw=white, line width=0.6pt,
    minimum size=3.3mm, inner sep=0pt},
]

\iosub{0.05,6.30}{A}{Unique ITS under single-rule replay}

\fill[band] (0.62,4.82) rectangle (12.45,5.30);
\draw[guide] (0.62,5.47)--(12.45,5.47);
\draw[guide] (0.62,4.25)--(12.45,4.25);

\node[tickbig, anchor=west] at (0.85,5.72) {Direction};
\node[tickbig, anchor=east, text=ioBlue!72!black] at (4.60,5.72)
  {\code{LLG}};
\node[tickbig, anchor=east, text=ioOrange!82!black] at (7.30,5.72)
  {\code{AtomBond}};
\node[tickbig, anchor=east] at (9.20,5.72) {\(\Delta\) count};
\node[tickbig, anchor=east] at (12.10,5.72) {\% removed};

\draw[axis] (12.10,4.34)--(12.10,5.30);

\node[row, anchor=west] at (0.85,5.06) {Forward};
\node[valbig, anchor=east, text=ioBlue!72!black] at (4.60,5.06) {95{,}916};
\node[valbig, anchor=east, text=ioOrange!82!black] at (7.30,5.06) {96{,}012};
\node[valbig, anchor=east, text=black!62] at (9.20,5.06) {\(-96\)};
\draw[ioBlue!55, line width=2.4pt, line cap=round] (12.10,5.06)--(11.85,5.06);
\node[val, anchor=east, text=ioBlue!72!black] at (11.70,5.06) {\(-0.10\%\)};

\node[row, anchor=west] at (0.85,4.53) {Inverse};
\node[valbig, anchor=east, text=ioBlue!72!black] at (4.60,4.53) {118{,}028};
\node[valbig, anchor=east, text=ioOrange!82!black] at (7.30,4.53) {118{,}846};
\node[valbig, anchor=east, text=black!62] at (9.20,4.53) {\(-818\)};
\draw[ioBlue!55, line width=2.4pt, line cap=round] (12.10,4.53)--(10.40,4.53);
\node[val, anchor=east, text=ioBlue!72!black] at (10.25,4.53) {\(-0.69\%\)};

\node[tickbig, anchor=west] at (0.85,3.98) {Reference recovery};
\node[valbig, anchor=east, text=ioGreen!45!black] at (12.10,3.98)
  {39{,}732/39{,}732 for both};

\iosub{0.05,2.98}{B}{Generation time per input (ms)}

\foreach \lab/\x in {1.8/2.259,2.0/3.294,2.2/4.329,2.4/5.365,2.6/6.400}{
  \draw[guide] (\x,0.24)--(\x,2.46);
  \node[tick, anchor=north] at (\x,0.16) {\lab};
}
\draw[axis] (2.00,0.24)--(6.40,0.24);

\node[row] at (1.55,2.22) {\texttt{SynKit}};
\node[row] at (1.55,1.65) {\texttt{RB1}};
\node[row] at (1.55,1.08) {\texttt{RB2}};
\node[row] at (1.55,0.51) {\texttt{GM}};

\runboxplot{2.22}
  {1.7903683897}{1.8216719691}{1.8263382844}{1.8276346832}{1.9023535011}
  {1.8336733655/0.0413115740}
  {1.9023535011/-0.070,1.8276346832/-0.035,1.8263382844/0,
   1.8216719691/0.035,1.7903683897/0.070}
  {ioBlue}
\runboxplot{1.65}
  {2.0862292070}{2.0884055751}{2.0901184532}{2.1415064890}{2.1476139488}
  {2.1107747346/0.0309479911}
  {2.0884055751/-0.070,2.0862292070/-0.035,2.0901184532/0,
   2.1476139488/0.035,2.1415064890/0.070}
  {ioOrange}
\runboxplot{1.08}
  {2.0782304679}{2.0948104113}{2.0980970131}{2.1490105925}{2.1723800391}
  {2.1185057048/0.0401032056}
  {2.0948104113/-0.070,2.0782304679/-0.035,2.0980970131/0,
   2.1490105925/0.035,2.1723800391/0.070}
  {ioPurple}
\runboxplot{0.51}
  {2.4466387733}{2.4632253237}{2.5168177655}{2.5242681186}{2.5450829157}
  {2.4992065793/0.0421334127}
  {2.5242681186/-0.070,2.4632253237/-0.035,2.4466387733/0,
   2.5168177655/0.035,2.5450829157/0.070}
  {ioGreen}

\draw[draw=black!12, line width=0.5pt] (7.15,0.24)--(7.15,2.70);

\iosub{7.35,2.98}{C}{Valid radical completion}

\foreach \lab/\x in {70/8.15,80/9.52,90/10.88,100/12.25}{
  \draw[guide] (\x,0.24)--(\x,2.46);
  \node[tickbig, anchor=north] at (\x,0.16) {\lab\%};
}
\draw[axis] (8.15,0.24)--(12.25,0.24);
\draw[dash pattern=on 2pt off 1.6pt, draw=black!35]
  (12.25,0.24)--(12.25,2.52);

\foreach \y in {2.22,1.08}{ \fill[band] (7.95,\y-0.27) rectangle (12.45,\y+0.27); }

\draw[stem, ioBlue!30]   (8.15,2.22)--(12.03,2.22);
\node[method, fill=ioBlue] at (12.03,2.22){};
\node[valbig, anchor=east, text=ioBlue!72!black] at (11.83,2.22) {98.4\%};
\draw[stem, ioOrange!35]  (8.15,1.65)--(12.00,1.65);
\node[method, fill=ioOrange] at (12.00,1.65){};
\node[valbig, anchor=east, text=ioOrange!82!black] at (11.80,1.65) {98.2\%};
\draw[stem, ioPurple!30]  (8.15,1.08)--(12.00,1.08);
\node[method, fill=ioPurple] at (12.00,1.08){};
\node[valbig, anchor=east, text=ioPurple!78!black] at (11.80,1.08) {98.2\%};
\draw[stem, ioGreen!45!black!40] (8.15,0.51)--(8.33,0.51);
\node[method, fill=ioGreen] at (8.33,0.51){};
\node[valbig, anchor=west, text=ioGreen!55!black] at (8.53,0.51) {71.3\%};

\draw[draw=black!15, line width=0.5pt] (-0.10,3.62)--(13.10,3.62);
\end{tikzpicture}}
\caption{\textbf{(A)} Unique ITS counts from single-rule forward and inverse
replay.  Columns compare \code{LLG}, \code{AtomBond}, and their difference.
Both recover all 39,732 references.  \textbf{(B)} Mean AAM-expansion generation
time per attempted input in each of five runs.
Boxes span the 25th--75th percentiles, center lines show medians, and whiskers
show the full observed range.  Small points show all five run-level means,
large points show their across-run means, and labels report mean \(\pm\) sample
s.d.  All methods recover every reference ITS.  \textbf{(C)} Fraction
of 5,426 \texttt{RMechDB} inputs yielding valid total AAMs.  The dashed line
marks 100\%, and \texttt{SynKit} yields 5,340 (98.4\%).}
\label{fig:benchmarks}
\end{figure*}

Radical replay has no legacy baseline, since atom--bond labels cannot encode
side-specific radical and lone-pair states.  Under the same stereo-independent
normalization it recovers all 5,426 references in both directions.

\subsubsection{Partial atom-map expansion}
\label{sec:res-aam}

All four methods recover the independent ITS for all 39,732 general records
(Figure~\ref{fig:benchmarks}B).  Across five runs, mean generation times are
$1.834\pm0.041$~ms per input for \texttt{SynKit}, $2.499\pm0.042$ for
\texttt{GM}, $2.111\pm0.031$ for \texttt{RB1}, and $2.119\pm0.040$ for
\texttt{RB2} (mean \(\pm\) sample s.d.).  Because the external methods run in a
reconstructed historical environment, these timings are descriptive rather
than a controlled speedup.

On the radical corpus, the minimal \texttt{SynKit} expansion returns 5,341 outputs, of
which 5,340 are valid total AAMs (98.4\%) (Figure~\ref{fig:benchmarks}C).
Coverage is 71.3\% for \texttt{GM} and 98.2\% for both \texttt{RB1} and
\texttt{RB2}.  These are coverage rates, not mapping accuracies, because no
independent full AAM exists.  Radical runtime is omitted because only
\texttt{SynKit} transports the required radical attribute.  The separate
conversion used for transition construction returns a total AAM for all 5,426
inputs, which is likewise a coverage result outside the minimal-expansion
comparison.

\subsection{Arrow-pushing experiments}

\subsubsection{Data and protocol}
\label{sec:arrow-protocol}

These experiments use 95,888 polar \texttt{PMechDB}
records~\cite{Tavakoli2024} and the 5,426 radical \texttt{RMechDB} records.
The controlled set contains 80 reviewed records from each corpus.  Both tasks
evaluate supplied electron-flow annotations as synchronized event groups rather
than by reaction-class or macro labels.

\begin{itemize}
    \item[--] \textbf{Controlled conformance.} Replay the 160 reviewed records
    and seven deterministic variants of each, giving 1,120 generated negative
    tests.  Every reviewed record must execute and reach its declared endpoint.
    Every generated variant must be rejected with its prescribed diagnostic.

    \item[--] \textbf{Transition construction.} Expand the AAM and resolve each
    published arrow endpoint to an LLG locus.  Success requires an admissible
    event group satisfying endpoint conformance,
    \(G\oplus\Delta_M\cong H_\star\).
\end{itemize}

\subsubsection{Controlled conformance}

The reviewed set contains 160 base cases, with 80 polar and 80 radical
elementary steps.  Strict replay accepts all 160 and reaches every declared
product.  For each base case, seven deterministic operators generate one
negative variant apiece by altering electron count, fishhook pairing, locus
existence or locality, source availability, update order, endpoint state, or
the declared product.  The checker rejects all synthetic tests (\(160\times7=1{,}120\)) generated
variants with their prescribed diagnostics
(see Figure~\ref{fig:transition-conformance-controls}).

\begin{figure*}[htbp]
\centering
\resizebox{\textwidth}{!}{
\begin{tikzpicture}[
  font=\small,
  atom/.style={catom, font=\normalsize},
  bond/.style={cbond},
  map/.style={font=\tiny, text=black!55, inner sep=0pt},
  delta/.style={font=\scriptsize, text=black!78, align=center},
  accept/.style={rounded corners=2pt, draw=ioGreen!58!black,
    fill=ioGreen!9, text=ioGreen!42!black, font=\scriptsize\bfseries,
    inner xsep=5pt, inner ysep=2.5pt},
  reject/.style={rounded corners=2pt, draw=ioVermilion!72!black,
    fill=ioVermilion!7, text=ioVermilion!76!black,
    font=\scriptsize\bfseries, inner xsep=5pt, inner ysep=2.5pt},
  okchip/.style={rounded corners=1.6pt, fill=ioGreen!52!black, text=white,
    font=\scriptsize\bfseries, inner sep=0pt, minimum size=3.6mm},
  nochip/.style={rounded corners=1.6pt, fill=ioVermilion!70!black, text=white,
    font=\scriptsize\bfseries, inner sep=0pt, minimum size=3.6mm},
  dupmap/.style={rounded corners=1pt, draw=ioVermilion!65,
    fill=ioVermilion!8, text=ioVermilion!82!black,
    font=\tiny\bfseries, inner sep=1.2pt},
]

\begin{scope}[shift={(0,0)}]
  \iosub{0.05,3.00}{A}{Polar: \(2e^-\)}
  \node[okchip] at (6.05,3.00) {\(\checkmark\)};
  \node[atom] (an) at (1.45,1.95) {\aN};
  \node[map] at ($(an)+(-0.27,0.29)$) {1};
  \node[atom] (ah1) at (0.72,1.40) {\aH};
  \node[atom] (ah2) at (2.18,1.40) {\aH};
  \node[atom] (ah3) at (1.45,0.98) {\aH};
  \draw[bond] (an)--(ah1);
  \draw[bond] (an)--(ah2);
  \draw[bond] (an)--(ah3);
  \node[celp] (alp1) at ($(an)+(-0.11,0.43)$) {};
  \node[celp] (alp2) at ($(an)+(0.11,0.43)$) {};
  \node[atom] at (2.65,1.75) {\(+\)};
  \node[atom] (ap) at (3.22,1.75) {\aH\(^{+}\)};
  \node[map] at ($(ap)+(0,-0.27)$) {4};
  \draw[cpair] ($(alp1)!0.5!(alp2)+(0,0.06)$)
    to[out=55,in=125] ($(ap)+(-0.14,0.27)$);
  \node[atom] at (4.03,1.75) {\(\longrightarrow\)};
  \node[atom] (anp) at (5.05,1.75) {\aN\aH\( _4^{+}\)};
  \node[delta] at (3.20,0.80)
    {\(\operatorname{lp}(1)\xrightarrow{\,2e^-\,}\sigma_{1,4}\)\\[-1pt]
     \(\Delta\ell_1=-1,\quad\Delta s_{1,4}=+1\)};
\end{scope}

\begin{scope}[shift={(6.75,0)}]
  \iosub{0.05,3.00}{B}{Polar: \(1e^-\)}
  \node[nochip] at (6.05,3.00) {\(\times\)};
  \node[atom] (bn) at (1.45,1.95) {\aN};
  \node[map] at ($(bn)+(-0.27,0.29)$) {1};
  \node[atom] (bh1) at (0.72,1.40) {\aH};
  \node[atom] (bh2) at (2.18,1.40) {\aH};
  \node[atom] (bh3) at (1.45,0.98) {\aH};
  \draw[bond] (bn)--(bh1);
  \draw[bond] (bn)--(bh2);
  \draw[bond] (bn)--(bh3);
  \node[celp] (blp1) at ($(bn)+(-0.11,0.43)$) {};
  \node[celp] (blp2) at ($(bn)+(0.11,0.43)$) {};
  \node[atom] at (2.65,1.75) {\(+\)};
  \node[atom] (bp) at (3.22,1.75) {\aH\(^{+}\)};
  \node[map] at ($(bp)+(0,-0.27)$) {4};
  \draw[cfish] ($(blp1)!0.5!(blp2)+(0,0.06)$)
    to[out=55,in=125] ($(bp)+(-0.14,0.27)$);
  \node[atom] at (4.03,1.75) {\(\nrightarrow\)};
  \node[delta] at (3.20,0.80)
    {\(\operatorname{lp}(1)\xrightarrow{\,1e^-\,}\sigma_{1,4}\)\\[-1pt]
     \(\Delta\ell_1=-\tfrac12,\quad\Delta s_{1,4}=+\tfrac12\)};
\end{scope}

\begin{scope}[shift={(0,-3.22)}]
  \iosub{0.05,3.00}{C}{Radical: paired}
  \node[okchip] at (6.05,3.00) {\(\checkmark\)};
  \node[atom] (cr1) at (0.62,1.75) {R};
  \node[atom] (co1) at (1.55,1.75) {\aO};
  \node[map] at ($(co1)+(0,-0.28)$) {1};
  \node[atom] (co2) at (2.72,1.75) {\aO};
  \node[map] at ($(co2)+(0,-0.28)$) {2};
  \node[atom] (cr2) at (3.65,1.75) {R};
  \draw[bond] (cr1)--(co1);
  \draw[bond] (co1)--(co2);
  \draw[bond] (co2)--(cr2);
  \coordinate (cmid) at ($(co1)!0.5!(co2)$);
  \draw[cfish] ($(cmid)+(0,0.03)$)
    to[out=118,in=35] ($(co1)+(0.08,0.33)$);
  \draw[cfish] ($(cmid)+(0,0.03)$)
    to[out=62,in=145] ($(co2)+(-0.08,0.33)$);
  \node[atom] at (4.25,1.75) {\(\longrightarrow\)};
  \node[atom] (cpo1) at (4.93,1.75) {R--\aO};
  \node[crad] at ($(cpo1)+(0.47,0.20)$) {};
  \node[atom] at (5.62,1.75) {\(+\)};
  \node[atom] (cpo2) at (6.18,1.75) {\aO--R};
  \node[crad] at ($(cpo2)+(-0.43,0.20)$) {};
  \node[delta] at (3.20,0.80)
    {\(\sigma_{1,2}\xrightarrow{\,1e^-\,}\rho_1,\quad
       \sigma_{1,2}\xrightarrow{\,1e^-\,}\rho_2\)\\[-1pt]
     \(\Delta s_{1,2}=-1,\quad\Delta r_1=\Delta r_2=+1\)};
\end{scope}

\begin{scope}[shift={(6.75,-3.22)}]
  \iosub{0.05,3.00}{D}{Radical: unpaired}
  \node[nochip] at (6.05,3.00) {\(\times\)};
  \node[atom] (dr1) at (0.62,1.75) {R};
  \node[atom] (do1) at (1.55,1.75) {\aO};
  \node[map] at ($(do1)+(0,-0.28)$) {1};
  \node[atom] (do2) at (2.72,1.75) {\aO};
  \node[map] at ($(do2)+(0,-0.28)$) {2};
  \node[atom] (dr2) at (3.65,1.75) {R};
  \draw[bond] (dr1)--(do1);
  \draw[bond] (do1)--(do2);
  \draw[bond] (do2)--(dr2);
  \coordinate (dmid) at ($(do1)!0.5!(do2)$);
  \draw[cfish] ($(dmid)+(0,0.03)$)
    to[out=118,in=35] ($(do1)+(0.08,0.33)$);
  \draw[ioVermilion!65, line width=0.8pt]
    ($(dmid)+(0.22,0.48)$)--($(dmid)+(0.55,0.81)$);
  \draw[ioVermilion!65, line width=0.8pt]
    ($(dmid)+(0.55,0.48)$)--($(dmid)+(0.22,0.81)$);
  \node[atom] at (4.25,1.75) {\(\nrightarrow\)};
  \node[delta] at (3.20,0.80)
    {\(\sigma_{1,2}\xrightarrow{\,1e^-\,}\rho_1\)\\[-1pt]
     \(\Delta s_{1,2}=-\tfrac12,\quad\Delta r_1=+1\)};
\end{scope}

\begin{scope}[shift={(0,-7.29)}]
  \iosub{0.05,3.88}{E}{Polar \(\cdot\) ID 49{,}985}
  \node[okchip] at (6.05,3.88) {\(\checkmark\)};
  \begin{scope}[yshift=-0.38cm]
  \node[atom] (b10)   at (1.95,2.41) {CH};      
  \node[atom] (b11)   at (2.75,2.87) {\aC};     
  \node[atom] (bn82a) at (3.55,2.41) {\aN H};   
  \node[atom] (b73)   at (3.55,1.49) {\aC};     
  \node[atom] (bn82b) at (2.75,1.03) {\aN H};   
  \node[atom] (b61)   at (1.95,1.49) {\aC};     

  \draw[cbond, double=white, double distance=1.5pt] (b10)--(b11);
  \draw[cbond] (b11)--(bn82a);
  \draw[cbond] (bn82a)--(b73);
  \draw[cbond] (b73)--(bn82b);
  \draw[cbond] (bn82b)--(b61);
  \draw[cbond] (b61)--(b10);

  \node[atom] (bo12) at (2.75,3.68) {\aO\(^{-}\)};
  \draw[cbond] (b11)--(bo12);
  \node[atom] (bo74) at (4.42,1.00) {\aO};
  \draw[cbond, double=white, double distance=1.5pt] (b73)--(bo74);
  \node[atom] (bo62) at (1.08,1.00) {\aO};
  \draw[cbond, double=white, double distance=1.5pt] (b61)--(bo62);

  \node[map] at ($(b10)+(-0.30,0.14)$) {10};
  \node[map] at ($(b11)+(-0.34,0.02)$) {11};
  \node[dupmap] at ($(bn82a)+(0.34,0.24)$) {82};
  \node[map] at ($(b73)+(0.34,0.14)$) {73};
  \node[dupmap] at ($(bn82b)+(0.02,-0.42)$) {82};
  \node[map] at ($(b61)+(-0.34,0.06)$) {61};
  \node[map] at ($(bo12)+(-0.36,0.02)$) {12};
  \node[map] at ($(bo74)+(0.30,0.02)$) {74};
  \node[map] at ($(bo62)+(-0.30,0.02)$) {62};
  \end{scope}
\end{scope}

\begin{scope}[shift={(6.75,-7.29)}]
  \iosub{0.05,3.88}{F}{Radical \(\cdot\) ID 4{,}852}
  \node[nochip] at (6.05,3.88) {\(\times\)};
  \node[atom] (ccr) at (0.96,2.57) {\aC};
  \node[map] at ($(ccr)+(0,-0.32)$) {21};
  \draw[cbond] (ccr)--(0.35,3.05);
  \draw[cbond] (ccr)--(0.35,2.10);
  \node[font=\scriptsize] at (0.19,3.13) {R};
  \node[font=\scriptsize] at (0.19,2.02) {R};
  \node[atom] (ch20) at (1.78,2.57) {\aH};
  \node[map] at ($(ch20)+(0,-0.32)$) {20};
  \draw[cbond] (ccr)--(ch20);
  \node[font=\normalsize] at (2.18,2.57) {\(+\)};
  \node[atom] (coh) at (2.72,2.57) {\(\boldsymbol{\cdot}\)\aO H};
  \node[map] at ($(coh)+(0,-0.32)$) {10};
  \node[font=\large] at (3.36,2.57) {\(\longrightarrow\)};

  \node[atom] (ccp) at (4.08,2.57) {\aC\(\boldsymbol{\cdot}\)};
  \node[map] at ($(ccp)+(0,-0.32)$) {21};
  \draw[cbond] (ccp)--(3.57,3.05);
  \draw[cbond] (ccp)--(3.57,2.10);
  \node[font=\scriptsize] at (3.43,3.13) {R};
  \node[font=\scriptsize] at (3.43,2.02) {R};
  \node[font=\normalsize] at (4.62,2.57) {\(+\)};
  \node[atom] (cwh) at (5.03,2.57) {\aH};
  \node[map] at ($(cwh)+(0,-0.32)$) {20};
  \node[atom] (cwo) at (5.78,2.57) {\aO H};
  \node[map] at ($(cwo)+(0,-0.32)$) {10};
  \draw[cbond] (cwh)--(cwo);

  \node[font=\scriptsize, text=ioVermilion!78!black] at (1.90,1.10)
    {\(\boldsymbol{(20,21)\rightharpoonup2}\)};
  \node[font=\scriptsize, text=black!38] at (3.40,1.10)
    {\(\Longrightarrow\)};
  \node[font=\scriptsize, text=ioGreen!45!black] at (4.90,1.10)
    {\(\boldsymbol{(20,21)\rightharpoonup21}\)};
\end{scope}

\begin{pgfonlayer}{background}
  \draw[iopanel] (-0.25,0.30) rectangle (6.55,3.35);
  \draw[iopanel] (6.50,0.30) rectangle (13.30,3.35);
  \draw[iopanel] (-0.25,-2.92) rectangle (6.55,0.13);
  \draw[iopanel] (6.50,-2.92) rectangle (13.30,0.13);
  \draw[iopanel] (-0.25,-7.31) rectangle (6.55,-3.09);
  \draw[iopanel] (6.50,-7.31) rectangle (13.30,-3.09);
\end{pgfonlayer}
\end{tikzpicture}}
\caption{Conformance controls (A--D) and construction-audit source defects
(E--F).  In A--D, columns compare base cases and generated variants, while rows
distinguish polar and radical events.  \textbf{(A--B)} Two-electron protonation
is integral.  Changing it to \(1e^-\) is not.  \textbf{(C--D)} Paired homolysis
fishhooks are integral.  Deleting one is not.  \textbf{(E)} Polar ID 49,985 duplicates map 82
on symmetry-equivalent nitrogens, resolved by preprocessing without endpoint
edits.  \textbf{(F)} Radical ID 4,852 targets absent map 2 and requires manual reassignment to map 21.
Panels E--F reproduce source annotations.  Table~\ref{tab:radical-arrow-review} gives the full review.}
\label{fig:transition-conformance-controls}
\end{figure*}

\subsubsection{Transition construction}
\label{sec:res-construction}

Construction proceeds in two stages: partial-AAM expansion, then resolution of
each published arrow endpoint to an LLG locus and assembly into event groups,
scored by the criterion of Section~\ref{sec:arrow-protocol}.  Expansion here uses
the guarded conversion of Section~\ref{sec:res-aam}, not the minimal path
benchmarked there, and succeeds on all 101,314 records.

In 3,274 polar records the annotation reuses one positive map on two
symmetry-equivalent atoms, an assignment that is unambiguous precisely because
the positions are equivalent
(Figure~\ref{fig:transition-conformance-controls}E).  Preprocessing removes the
duplicate label and lets expansion assign a fresh identifier, leaving both
endpoint structures unchanged, after which all 95,888 construct.  In the radical
corpus, direct resolution fails for 11 records.  Ten require manual annotation
corrections: six map-locus edits, two endpoint-orbit alignments, one fishhook
replaced by a paired arrow, and one mixed paired/fishhook group
(Table~\ref{tab:radical-arrow-review}).  Applying them raises construction to
5,425/5,426 radical and 101,313/101,314 total records.

ID 2,207 instead requires an atom-map correction.  Its fishhooks break
\((12,13)\), form \((9,13)\), and leave the radical at map 12, whereas the
endpoint maps H15 to the new O9--H bond.  Transferring H15 would break
O14--H15 and leave the radical at map 14.  Thus no arrow edit can satisfy the
current endpoint; H13 must map to O9 and O14--H15 remain intact.  This proposed
map correction is not applied.  Figure~\ref{fig:si-radical-arrow-review}
contrasts it with the arrow correction for ID 2,300.

\section{Conclusion, limitations, and outlook}
\label{sec:conclusion}

The Lewis-labeled graph supplies the state space missing between molecular
graphs, rewrite rules, and arrow-pushing annotations.  Its four resource fields
\((\ell_v,r_v,s_e,p_e)\) determine bond order, formal charge, and represented
electron inventory.  Rule matching tests the resources a transformation
consumes, and an electron-flow annotation becomes a factorization of the
endpoint change into admissible deltas on those same fields.
Theorem~\ref{thm:coherence} makes the continuity precise: the three levels are
one decision on one set of coordinates, not three procedures that happen to
agree.  The construction tests resource availability without demanding equality
of derived formal charge, makes open-shell state explicit without expanding the
atom--bond skeleton, and derives the atomic coupling of bond-centered fishhooks
from integrality rather than from drawing convention.

The experiments bound these consequences.  Electron-aware rules retain complete
bidirectional reference recovery while admitting slightly fewer unique ITS
outcomes in paired replay.  Partial-map expansion recovers every independent
reference ITS.  Ten radical annotations require manual correction, after which
101,313/101,314 records yield endpoint-consistent transitions, while the remaining
record requires an unapplied endpoint-map correction.  All 160 reviewed steps
are accepted, and all 1,120 generated variants rejected under their declared
diagnostics.  This is
representational and operational consistency, not chemical precision: the
excluded outcomes carry no validity labels, paired replay is not an \(N\)-rule
by \(M\)-molecule screen, and the negative variants are synthetic.

The boundary is substantive.  LLG is a localized two-center Lewis model, so
multicenter bonding, delocalized spin coupling, excited states, fractional
occupation, and general organometallic electron counting require new state
fields.  Imported lone-pair counts remain conditional on the perception policy,
aromatic systems need an explicit Kekul\'e policy, and the eight polar classes
with seven radical macros form a template library covering the source taxonomies
rather than a proved decomposition.  No result here shows that every admissible
integral event group factors through them, and supplying such a closure theorem,
or a counterexample bounding it, is open.  A field-by-field ablation is likewise
unavailable, since removing a coordinate changes the state schema rather than a
parameter.  Conformance checking decides only whether an annotation is local,
resource-admissible, integral, policy-consistent, and endpoint-reaching.  It
settles neither kinetics, thermodynamic preference, solvent response,
stereoselectivity, nor the preferred mechanism, and stereochemistry lies outside
the replay benchmark by construction.

Formally, the next steps are to characterize resource-constrained LLG rewriting
as an attributed rewriting category, to state when the condensation functor
\(K\) preserves and reflects admissible derivations, and, building on the
additive composition of Remark~\ref{rem:composition}, to prove a concurrency
theorem under which sequential mechanisms compose into single
resource-constrained rules and factor back into admissible steps.  Empirically,
they are a paired comparison with expanded electron-feature graphs under
identical search conditions, then validation on an independently reviewed,
frozen corpus.  Only then should the conformance checker serve as the acceptance
layer for mechanism enumeration or ranking, its certificates and trajectory
graphs marking the boundary between generation and conformance checking.

\section{Declarations}

\subsection{Availability of data and materials}
The \synkit source code is available under the MIT License at
\url{https://github.com/TieuLongPhan/SynKit}.  Documentation is available at
\url{https://tieulongphan.github.io/SynKit/}.  Archived releases are available
at \url{https://doi.org/10.5281/zenodo.15269901}.  Every result reported here was
produced with version \code{1.6.0}. The
reaction corpora remain under the licenses of their providers and are not
redistributed here, so \texttt{PMechDB}~\cite{Tavakoli2024} and \texttt{RMechDB}~\cite{Tavakoli2023} must be obtained
from the original sources.

\subsection{Conflict of interest}
The author declares no conflict of interest.

\subsection{Funding}
This work was supported by the European Union's Horizon Europe research and
innovation programme under Marie Sk{\l}odowska--Curie grant
agreement No. 101072930 (TACsy, Training Alliance for Computational Systems
Chemistry). Views and opinions expressed are however those of the author(s) only and do not necessarily reflect those of the European Union. Neither the European Union nor the granting authority can be held responsible for them

\subsection{Author contributions}
T.-L.P.: Conceptualization, Methodology, Software, Validation, Formal analysis,
Investigation, Data curation, Visualization, Writing (original draft), and
Writing (review and editing).

\subsection{Acknowledgements}
Not applicable.

\bibliographystyle{unsrtnat}
\bibliography{references}

\clearpage
\renewcommand{\thesection}{S\arabic{section}}
\setcounter{section}{0}
\setcounter{figure}{0}
\setcounter{table}{0}
\setcounter{equation}{0}
\renewcommand{\thefigure}{S\arabic{figure}}
\renewcommand{\thetable}{S\arabic{table}}
\renewcommand{\theequation}{S\arabic{equation}}
\renewcommand{\theHsection}{S\arabic{section}}
\renewcommand{\theHsubsection}{\theHsection.\arabic{subsection}}
\renewcommand{\theHsubsubsection}{\theHsubsection.\arabic{subsubsection}}
\renewcommand{\theHfigure}{S\arabic{figure}}
\renewcommand{\theHtable}{S\arabic{table}}
\renewcommand{\theHequation}{S\arabic{equation}}
\section*{Supporting Information}

\section{Locus-sorted event-group grammar}
\label{si:locus-grammar}

Tables~\ref{tab:polar-classes}--\ref{tab:macros} list each locus pattern, its
mechanistic reading, and its inverse.  Polar rows classify individual
two-electron moves.  A complete group may contain several rows and remains
subject to common-prestate availability and \(\Pi_{\mathrm{exec}}\).

\begin{table}[htbp]
\caption{Eight polar two-electron locus transitions.  The classes are the
ordered pairs over \(\{\operatorname{lp},\sigma,\pi\}\) with at least one bond
endpoint, each classifying a single curly arrow rather than a complete step.
The diagonal \(\operatorname{lp}\!\to\!\operatorname{lp}\) is excluded because a
two-electron pair relocation that edits no bond has no elementary chemical
realization.}
\label{tab:polar-classes}
\centering
\small
\begin{tabularx}{\textwidth}{@{}l >{\raggedright\arraybackslash}X l@{}}
\toprule
Arrow pattern & Typical context & Inverse \\
\midrule
\(\operatorname{lp}\!\to\!\pi\) &
  nucleophilic addition &
  \(\pi\!\to\!\operatorname{lp}\) \\
\(\operatorname{lp}\!\to\!\sigma\) &
  \(S_N2\) or proton transfer &
  \(\sigma\!\to\!\operatorname{lp}\) \\
\(\pi\!\to\!\operatorname{lp}\) &
  ionization or retro-addition &
  \(\operatorname{lp}\!\to\!\pi\) \\
\(\pi\!\to\!\pi\) &
  conjugated or pericyclic shift &
  self \\
\(\pi\!\to\!\sigma\) &
  electrophilic addition &
  \(\sigma\!\to\!\pi\) \\
\(\sigma\!\to\!\operatorname{lp}\) &
  heterolysis or E1 &
  \(\operatorname{lp}\!\to\!\sigma\) \\
\(\sigma\!\to\!\pi\) &
  elimination or retro-cycloaddition &
  \(\pi\!\to\!\sigma\) \\
\(\sigma\!\to\!\sigma\) &
  1,2-migration or hydride transfer &
  self \\
\bottomrule
\end{tabularx}
\end{table}

\begin{table}[htbp]
\caption{Seven radical event macros.}
\label{tab:macros}
\centering
\small
\begin{tabularx}{\textwidth}{@{}
  >{\raggedright\arraybackslash}p{2.6cm}
  >{\raggedright\arraybackslash}p{4.1cm}
  >{\raggedright\arraybackslash}X
  >{\raggedright\arraybackslash}p{2.4cm}@{}}
\toprule
Macro & Fishhook group & Incidence & Inverse \\
\midrule
Homolysis &
  \(\sigma\!\to\!\boldsymbol{\cdot}\ (\times 2)\) &
  both broken-bond atoms &
  recombination \\
Recombination &
  \(\boldsymbol{\cdot}\!\to\!\sigma\ (\times 2)\) &
  both new-bond atoms &
  homolysis \\
Radical addition &
  \(\boldsymbol{\cdot}\!\to\!\sigma,\ \pi\!\to\!\sigma,\ \pi\!\to\!\boldsymbol{\cdot}\) &
  old \(\pi\) and new \(\sigma\) share a centre &
  beta scission \\
Beta scission &
  \(\boldsymbol{\cdot}\!\to\!\pi,\ \sigma\!\to\!\pi,\ \sigma\!\to\!\boldsymbol{\cdot}\) &
  old \(\sigma\) and new \(\pi\) share a centre &
  radical addition \\
Radical resonance &
  \(\boldsymbol{\cdot}\!\to\!\pi,\ \pi\!\to\!\pi,\ \pi\!\to\!\boldsymbol{\cdot}\) &
  migrating \(\pi\) shares a centre &
  self \\
H abstraction &
  \(\boldsymbol{\cdot}\!\to\!\sigma,\ \sigma\!\to\!\sigma,\ \sigma\!\to\!\boldsymbol{\cdot}\) &
  old and new bonds share the mapped H &
  self \\
Alpha resonance
(LP/radical relocation) &
  annotated \(\operatorname{lp}_d\!\to\!\operatorname{lp}_a\), expanded to
  \(\operatorname{lp}_d\!\to\!\boldsymbol{\cdot}_d,\
  \operatorname{lp}_d\!\to\!\operatorname{lp}_a,\
  \boldsymbol{\cdot}_a\!\to\!\operatorname{lp}_a\) &
  adjacent \(d\ne a\), \(\ell_d,r_a\ge1\).  The expansion, not the surface
  arrow, carries the integral four-field delta of
  Equation~\eqref{eq:lp-radical-expansion} &
  self \\
\bottomrule
\end{tabularx}
\vspace{2pt}

\parbox{\textwidth}{\footnotesize Conjugated H abstraction expands the simple
group with \(\sigma\!\to\!\pi\) and
\(\boldsymbol{\cdot}\!\to\!\pi\).  Reversing the four moves gives the inverse.}
\end{table}

\section{Electron-flow event schema}

\begin{table}[htbp]
\caption{Electron-flow record schema.}
\label{tab:electron-flow-schema}
\centering
\begin{tabularx}{\textwidth}{@{}lXX@{}}
\toprule
Object & Carries & Acceptance condition \\
\midrule
Electron locus & Kind and support in \(V\cup E\) &
  Vertex for lone pair/radical and edge for \(\sigma/\pi\) \\
Curly arrow & Source and target loci &
  Two electrons with valid local incidence \\
Fishhook & Source and target loci &
  One electron in an integral group after shorthand expansion \\
Event group & Move multiset &
  Complete macro from one pre-state with an integral atomic delta \\
Transition record & Groups and endpoint &
  All groups admissible and final state equal to endpoint \\
\bottomrule
\end{tabularx}
\end{table}

Positive atom maps are support keys, not coordinates of \(a(v)\).  The locus
alphabet is \(\{\operatorname{lp},\sigma,\pi,\boldsymbol{\cdot}\}\).
\(\sigma\) and \(\pi\) remain distinct whenever they enable different moves.

\section{Composite rules and mechanism sequences}
\label{si:composition}

Remark~\ref{rem:composition} treats a mechanism as a derivation sequence and
notes that additive composition is weaker than DPO rule composition.
Figure~\ref{fig:mechanism-composition} draws the \(S_N2\) case for comparison.

\begin{figure}[!htb]
  \centering
  \resizebox{\textwidth}{!}{
\begin{tikzpicture}[
  font=\small,
  grp/.style={catom, font=\small},
  flag/.style={font=\scriptsize, text=ioVermilion!85!black},
  mor/.style={-{Stealth[length=1.9mm]}, line width=0.62pt, draw=black!55},
  ml/.style={font=\scriptsize\itshape, text=black!68, inner sep=1pt},
  free/.style={font=\scriptsize, text=ioVermilion!88!black},
  sub/.style={font=\scriptsize, text=black!55},
]

\begin{scope}[shift={(0,0)}]
  \iosub{-0.15,0.76}{A}{}
  \node[grp](aln) at (0.5,0){\(\mathrm{H_3N}\)}; \node[celp] at ($(aln)+(-0.02,0.32)$){}; \node[celp] at ($(aln)+(0.16,0.32)$){};
  \node[grp](alc) at (1.75,0){\(\mathrm{CH_3}\)}; \node[grp](alcl) at (2.8,0){Cl}; \draw[cbond](alc)--(alcl);
  \draw[mor](3.95,0)--(3.35,0) node[ml,midway,above=-2pt]{\(l\)};
  \node[grp](akn) at (4.55,0){\(\mathrm{H_3N}\)}; \node[free] at ($(akn)+(0.03,0.34)$){\(\bot\)};
  \node[grp](akc) at (5.8,0){\(\mathrm{CH_3}\)}; \node[grp](akcl) at (6.85,0){Cl}; \draw[cbond](akc)--(akcl);
  \node[free] at ($(akn)!0.5!(akc)+(0,0.02)$){\(\bot\)};
  \draw[mor](7.4,0)--(8.0,0) node[ml,midway,above=-2pt]{\(r\)};
  \node[grp](arn) at (8.65,0){\(\mathrm{H_3N^+}\)}; \node[grp](arc) at (9.95,0){\(\mathrm{CH_3}\)}; \node[grp](arcl) at (11.05,0){Cl};
  \draw[cbond](arn)--(arc); \draw[cbond](arc)--(arcl);
  \node[cchg] at ($(arc)+(-0.02,0.4)$){\chminus};
  \node[sub] at (1.5,-0.56){\(L_1\)}; \node[sub] at (5.55,-0.56){\(K_1\)}; \node[sub] at (9.8,-0.56){\(R_1\)};
\end{scope}

\begin{scope}[shift={(0,-1.80)}]
  \iosub{-0.15,0.76}{B}{}
  \node[grp](bln) at (0.55,0){\(\mathrm{H_3N^+}\)}; \node[grp](blc) at (1.85,0){\(\mathrm{CH_3}\)}; \node[grp](blcl) at (2.95,0){Cl};
  \draw[cbond](bln)--(blc); \draw[cbond](blc)--(blcl);
  \node[cchg] at ($(blc)+(-0.02,0.4)$){\chminus};
  \draw[mor](4.1,0)--(3.5,0) node[ml,midway,above=-2pt]{\(l\)};
  \node[grp](bkn) at (4.7,0){\(\mathrm{H_3N^+}\)}; \node[grp](bkc) at (6.0,0){\(\mathrm{CH_3}\)}; \node[grp](bkcl) at (7.1,0){Cl};
  \draw[cbond](bkn)--(bkc);
  \node[free] at ($(bkc)!0.5!(bkcl)+(0,0.3)$){\(\bot\)};
  \draw[mor](7.6,0)--(8.2,0) node[ml,midway,above=-2pt]{\(r\)};
  \node[grp](brn) at (8.85,0){\(\mathrm{H_3N^+}\)}; \node[grp](brc) at (10.15,0){\(\mathrm{CH_3}\)}; \draw[cbond](brn)--(brc);
  \node[grp](brcl) at (11.25,0){\(\mathrm{Cl^-}\)}; \node[celp] at ($(brcl)+(0,0.32)$){}; \node[celp] at ($(brcl)+(0.02,-0.32)$){};
  \node[sub] at (1.75,-0.56){\(L_2\)}; \node[sub] at (5.75,-0.56){\(K_2\)}; \node[sub] at (9.95,-0.56){\(R_2\)};
\end{scope}

\begin{scope}[shift={(0,-3.60)}]
  \iosub{-0.15,0.76}{C}{}
  \node[grp](cln) at (0.5,0){\(\mathrm{H_3N}\)}; \node[celp] at ($(cln)+(-0.02,0.32)$){}; \node[celp] at ($(cln)+(0.16,0.32)$){};
  \node[grp](clc) at (1.75,0){\(\mathrm{CH_3}\)}; \node[grp](clcl) at (2.8,0){Cl}; \draw[cbond](clc)--(clcl);
  \draw[mor](3.95,0)--(3.35,0) node[ml,midway,above=-2pt]{\(l\)};
  \node[grp](ckn) at (4.55,0){\(\mathrm{H_3N}\)}; \node[free] at ($(ckn)+(0.03,0.34)$){\(\bot\)};
  \node[grp](ckc) at (5.8,0){\(\mathrm{CH_3}\)}; \node[grp](ckcl) at (6.85,0){Cl};
  \node[free] at ($(ckn)!0.5!(ckc)+(0,0.02)$){\(\bot\)};
  \node[free] at ($(ckc)!0.5!(ckcl)+(0,0.3)$){\(\bot\)};
  \draw[mor](7.4,0)--(8.0,0) node[ml,midway,above=-2pt]{\(r\)};
  \node[grp](crn) at (8.65,0){\(\mathrm{H_3N^+}\)}; \node[grp](crc) at (9.95,0){\(\mathrm{CH_3}\)}; \draw[cbond](crn)--(crc);
  \node[grp](crcl) at (11.05,0){\(\mathrm{Cl^-}\)}; \node[celp] at ($(crcl)+(0,0.32)$){}; \node[celp] at ($(crcl)+(0.02,-0.32)$){};
  \node[sub] at (1.5,-0.56){\(L\)}; \node[sub] at (5.55,-0.56){\(K\)}; \node[sub] at (9.8,-0.56){\(R\)};
\end{scope}

\begin{pgfonlayer}{background}
  \draw[iopanel] (-0.3,-0.72) rectangle (11.8,0.96);
  \draw[iopanel] ($(0,-1.80)+(-0.3,-0.72)$) rectangle ($(0,-1.80)+(11.8,0.96)$);
  \draw[iopanel] ($(0,-3.60)+(-0.3,-0.72)$) rectangle ($(0,-3.60)+(11.8,0.96)$);
\end{pgfonlayer}
\end{tikzpicture}}
  \caption{Composition of the \(S_N2\) step in
  \(\ce{NH3 + CH3Cl -> CH3NH3+ + Cl-}\).
  \textbf{(A)} \(p_1\) forms the N--C bond, giving the non-\(\Pi\)-valid shared
  object \(R_1=L_2\). \textbf{(B)} \(p_2\) expels chloride.
  \textbf{(C)} Their composite keeps this intermediate internal and maps valid
  reactants directly to valid products.}
  \label{fig:mechanism-composition}
\end{figure}

\section{Relation to half-edge and electron-feature graphs}
\label{si:efg-llg}

Figure~\ref{fig:abstraction-transition} compares LLGs with the HEG and EFG
constructions of Ref.~\cite{holzschuh2026halfedge}.  HEGs retain electron
identity.  EFGs retain anonymous feature objects.  LLGs store feature-fiber
cardinalities and distinguish \(\sigma\) from \(\pi\).  Table~\ref{tab:efg-llg-notation}
gives the correspondence.

\begin{table}[htbp]
\caption{Correspondence between electron-feature and LLG quantities.  Fiber
cardinality is denoted by vertical bars, and the \(\sigma/\pi\) refinement is made
before condensation.}
\label{tab:efg-llg-notation}
\centering
\small
\begin{tabularx}{\textwidth}{@{}lXX@{}}
\toprule
Chemical quantity & HEG/EFG realization & LLG realization \\
\midrule
Neutral valence electrons at \(v\) & atom attribute \(\operatorname{val}(a(v))\) & \(z_v\) \\
Lone pairs at \(v\) & fiber \(\textsf{LP}(v)\) & \(\ell_v=|\textsf{LP}(v)|\) \\
Radical electrons at \(v\) & fiber \(\textsf{Rad}(v)\) & \(r_v=|\textsf{Rad}(v)|\) \\
Shared pairs on \(\{u,v\}\) & fiber \(\textsf{Pair}(u,v)\), refined by kind &
\(\begin{aligned}
s_{uv}&=|\textsf{Pair}_{\sigma}(u,v)|,\\
p_{uv}&=|\textsf{Pair}_{\pi}(u,v)|
\end{aligned}\) \\
Total bond order & shared-pair multiplicity & \(o_{uv}=s_{uv}+p_{uv}\) \\
Formal charge at \(v\) & \(\operatorname{ch}(v)\), derived from incidence & \(q_v\) from Equation~\eqref{eq:charge} \\
\bottomrule
\end{tabularx}
\end{table}

\subsection{Categorical relation}
Let
\(\mathbf{EFG}^{\mathrm{fib},\Pi}_{\sigma/\pi}\) refine \(\textsf{Pair}\) into
\(\textsf{Pair}_{\sigma}\) and \(\textsf{Pair}_{\pi}\), retain objects whose
condensed labels satisfy \(\Pi\), and restrict morphisms to maps bijective on
each feature fiber.  Fiber cardinality then defines a condensation functor
\(K\), while \(F\) merges the two pair sorts:
\begin{equation}
\label{eq:span-efg-llg}
  \mathbf{EFG} \xleftarrow{\quad F \quad}
  \mathbf{EFG}^{\mathrm{fib},\Pi}_{\sigma/\pi}
  \xrightarrow{\quad K \quad} \mathbf{LLG}_{\Pi}^{=}.
\end{equation}
\(F\) forgets bond kind, while \(K\) forgets feature identity.  A span alone
would not show that neither target is a quotient of the other.  That follows
because the two functors discard incomparable information.  \(K\) retains the
\(\sigma/\pi\) distinction \(F\) discards, so \(K\) cannot factor through
\(F\).  In contrast, \(F\) retains the feature identity, that is which
\textsf{Pair} reacted, which
\(K\) discards, so \(F\) cannot factor through \(K\).  Neither is a refinement
of the other, and the \(\sigma/\pi\) refinement must precede condensation.
Functoriality on this restricted class is routine, both maps fixing the atom set
and acting on fibers.  We claim nothing for morphisms outside it.

\subsection{Determinate bond loci}
An unrefined EFG multiple bond contains interchangeable \(\textsf{Pair}\)
objects, so selecting one reacting pair is non-canonical.  An LLG stores
\((s_e,p_e)\), making \(\pi_e\) a determinate transition locus.

\subsection{Executable transition semantics}
Ref.~\cite{holzschuh2026halfedge} proves existence of electron-move
explanations for balanced mapped reactions, and gives double-pushout rules over
electron-feature ACSets that it proves to be quotients of electron-anonymous
half-edge rules.  Both are statements about what rules and explanations exist.
Our transition relation instead tests whether a specified group is executable
from a given state.  It checks availability, integrality, \(\Pi\), and the
declared endpoint.

Condensation loses electron and feature identity but leaves an atom--bond graph
with scalar attributes.  This is a representational claim about graph-object
count, not a runtime or memory claim.

\section{Audit of radical transition annotations}
\label{si:radical-arrow-review}

Direct resolution fails for the 11 radical records listed in
Table~\ref{tab:radical-arrow-review}.  Ten become endpoint-conformant only after
manual annotation corrections.  ID 2,207 is different: its fishhooks are
coherent with transfer of H13, but the mapped endpoint assigns H15 to the new
O9--H bond.  It therefore requires a hydrogen-map correction rather than an
arrow edit, and that correction is not applied.  Figure~\ref{fig:si-radical-arrow-review}
expands ID 2,207 and the nonstandard mixed-group correction for ID 2,300.  In
the latter, two fishhooks form the radical component and two paired moves
complete the bond rearrangement.

\begin{figure}[htbp]
\centering
\resizebox{\textwidth}{!}{
\begin{tikzpicture}[
  font=\small,
  map/.style={font=\scriptsize, text=black!52, inner sep=0.4pt},
  note/.style={font=\scriptsize, text=black!58, align=center},
  newbond/.style={draw=black!28, dash pattern=on 2pt off 1.5pt,
    line width=0.8pt},
]

\begin{scope}[shift={(0,4.55)}]
  \draw[iopanel] (0,0) rectangle (7.0,4.2);
  \iosub{0.12,4.08}{A}{ID 2,207 \(\cdot\) published arrows}

  \node[font=\scriptsize] (ar) at (0.54,2.10) {R};
  \node[catom] (a9) at (1.18,2.10) {\aO};
  \node[crad] (a9rad) at ($(a9)+(0.24,0.31)$) {};
  \node[catom] (a13) at (2.35,2.10) {\aH};
  \node[catom] (a12) at (3.35,2.10) {\aC};
  \node[catom] (a14) at (4.50,2.10) {\aO};
  \node[catom] (a15) at (5.60,2.10) {\aH};
  \node[font=\scriptsize] (arr) at (3.35,2.88) {R};

  \draw[cbond] (ar)--(a9);
  \draw[cbond] (a13)--(a12);
  \draw[cbond] (a12)--(a14);
  \draw[cbond] (a14)--(a15);
  \draw[cbond] (a12)--(arr);
  \draw[newbond] (a9)--(a13);

  \coordinate (a913) at ($(a9)!0.53!(a13)$);
  \coordinate (a1213) at ($(a12)!0.50!(a13)$);
  \draw[cfish] (a9rad.center) to[bend left=24] ($(a913)+(0,0.10)$);
  \draw[cfish] ($(a1213)+(0.04,0.08)$)
    to[bend right=38] ($(a913)+(0.03,-0.08)$);
  \draw[cfish] ($(a1213)+(-0.04,-0.08)$)
    to[bend right=28] ($(a12)+(-0.06,0.35)$);

  \node[map] at ($(a9)+(0,-0.34)$) {9};
  \node[map] at ($(a13)+(0,-0.34)$) {13};
  \node[map] at ($(a12)+(0,-0.34)$) {12};
  \node[map] at ($(a14)+(0,-0.34)$) {14};
  \node[map] at ($(a15)+(0,-0.34)$) {15};
  \node[note] at (3.50,0.55)
    {Arrow-implied endpoint: \(\mathrm{O}_9{-}\mathrm{H}_{13}\), radical at 12\\
     Mapped endpoint: \(\mathrm{O}_9{-}\mathrm{H}_{15}\), radical at 12
     \(\;\textcolor{ioVermilion!78!black}{\ne}\)};
\end{scope}

\begin{scope}[shift={(7.35,4.55)}]
  \draw[iopanel] (0,0) rectangle (7.0,4.2);
  \iosub{0.12,4.08}{B}{ID 2,207 \(\cdot\) proposed atom map}

  \node[font=\scriptsize] (br) at (0.54,2.10) {R};
  \node[catom] (b9) at (1.18,2.10) {\aO};
  \node[crad] (b9rad) at ($(b9)+(0.24,0.31)$) {};
  \node[catom] (b13) at (2.35,2.10) {\aH};
  \node[catom] (b12) at (3.35,2.10) {\aC};
  \node[catom] (b14) at (4.50,2.10) {\aO};
  \node[catom] (b15) at (5.60,2.10) {\aH};
  \node[font=\scriptsize] (brr) at (3.35,2.88) {R};

  \draw[cbond] (br)--(b9);
  \draw[cbond] (b13)--(b12);
  \draw[cbond] (b12)--(b14);
  \draw[cbond] (b14)--(b15);
  \draw[cbond] (b12)--(brr);
  \draw[newbond] (b9)--(b13);

  \coordinate (b913) at ($(b9)!0.53!(b13)$);
  \coordinate (b1213) at ($(b12)!0.50!(b13)$);
  \draw[cfish] (b9rad.center) to[bend left=24] ($(b913)+(0,0.10)$);
  \draw[cfish] ($(b1213)+(0.04,0.08)$)
    to[bend right=38] ($(b913)+(0.03,-0.08)$);
  \draw[cfish] ($(b1213)+(-0.04,-0.08)$)
    to[bend right=28] ($(b12)+(-0.06,0.35)$);

  \node[map] at ($(b9)+(0,-0.34)$) {9};
  \node[map] at ($(b13)+(0,-0.34)$) {13};
  \node[map] at ($(b12)+(0,-0.34)$) {12};
  \node[map] at ($(b14)+(0,-0.34)$) {14};
  \node[map] at ($(b15)+(0,-0.34)$) {15};
  \node[note] at (3.50,0.55)
    {Arrow-implied endpoint: \(\mathrm{O}_9{-}\mathrm{H}_{13}\), radical at 12\\
     Corrected map: \(\mathrm{O}_9{-}\mathrm{H}_{13}\), radical at 12
     \(\;\textcolor{ioGreen!45!black}{\checkmark}\)};
\end{scope}

\begin{scope}[shift={(0,0)}]
  \draw[iopanel] (0,0) rectangle (7.0,4.2);
  \iosub{0.12,4.08}{C}{ID 2,300 \(\cdot\) published annotation}
  \begin{scope}[yshift=0.45cm]

  \node[catom] (c1) at (1.05,1.18) {\aC H$_2$};
  \node[crad] (c1rad) at ($(c1)+(0.31,0.31)$) {};
  \node[catom] (c2) at (1.68,2.30) {\aO};
  \node[catom] (c3) at (3.00,2.95) {\aH};
  \node[catom] (c4) at (4.30,2.30) {\aO};
  \node[catom] (c5) at (4.92,1.18) {\aO};
  \node[crad] at ($(c5)+(0.25,0.31)$) {};

  \draw[cbond] (c1)--(c2);
  \draw[cbond] (c2)--(c3);
  \draw[cbond] (c4)--(c5);
  \draw[newbond] (c3)--(c4);

  \coordinate (c12) at ($(c1)!0.50!(c2)$);
  \coordinate (c23) at ($(c2)!0.50!(c3)$);
  \coordinate (c34) at ($(c3)!0.50!(c4)$);
  \draw[cfish] (c1rad.center) to[bend right=23] ($(c12)+(-0.08,-0.02)$);
  \draw[cfish] ($(c23)+(0.04,0.06)$)
    to[bend left=28] ($(c34)+(-0.05,0.07)$);
  \draw[cfish] ($(c4)+(-0.02,0.27)$)
    to[bend left=26] ($(c34)+(0.06,-0.06)$);
  \draw[cfish] ($(c23)+(-0.05,-0.05)$)
    to[bend left=42] ($(c12)+(0.06,0.03)$);

  \node[map] at ($(c1)+(0,-0.38)$) {1};
  \node[map] at ($(c2)+(-0.28,0.02)$) {2};
  \node[map] at ($(c3)+(0,0.32)$) {3};
  \node[map] at ($(c4)+(0.28,0.02)$) {4};
  \node[map] at ($(c5)+(0,-0.38)$) {5};
  \end{scope}
  \node[note] at (3.50,0.52)
    {Arrow-implied endpoint: \(\mathrm{C}_1{=}\mathrm{O}_2\) and
     \(\mathrm{H}_3{-}\mathrm{O}_4{-}\mathrm{O}_5\)\\
     Mapped endpoint: \(\mathrm{H}_3{-}\mathrm{O}_2{-}\mathrm{O}_4
     + \mathrm{C}_1{=}\mathrm{O}_5\)
     \(\;\textcolor{ioVermilion!78!black}{\ne}\)};
\end{scope}

\begin{scope}[shift={(7.35,0)}]
  \draw[iopanel] (0,0) rectangle (7.0,4.2);
  \iosub{0.12,4.08}{D}{ID 2,300 \(\cdot\) reviewed annotation}
  \begin{scope}[yshift=0.45cm]

  \node[catom] (d1) at (1.05,1.18) {\aC H$_2$};
  \node[crad] (d1rad) at ($(d1)+(0.31,0.31)$) {};
  \node[catom] (d2) at (1.68,2.30) {\aO};
  \node[catom] (d3) at (3.00,2.95) {\aH};
  \node[catom] (d4) at (4.30,2.30) {\aO};
  \node[catom] (d5) at (4.92,1.18) {\aO};
  \node[crad] (d5rad) at ($(d5)+(0.25,0.31)$) {};

  \draw[cbond] (d1)--(d2);
  \draw[cbond] (d2)--(d3);
  \draw[cbond] (d4)--(d5);
  \draw[newbond] (d1)--(d5);

  \coordinate (d15) at ($(d1)!0.50!(d5)$);
  \draw[cfish] (d1rad.center) to[bend right=28] ($(d15)+(-0.10,-0.07)$);
  \draw[cfish] (d5rad.center) to[bend left=28] ($(d15)+(0.10,0.07)$);

  \node[map] at ($(d1)+(0,-0.38)$) {1};
  \node[map] at ($(d2)+(-0.28,0.02)$) {2};
  \node[map] at ($(d3)+(0,0.32)$) {3};
  \node[map] at ($(d4)+(0.28,0.02)$) {4};
  \node[map] at ($(d5)+(0,-0.38)$) {5};
  \end{scope}
  \node[note] at (3.50,0.52)
    {Arrow-implied endpoint: radical \(\mathrm{C}_1{=}\mathrm{O}_5\) component\\
     Mapped endpoint: radical \(\mathrm{C}_1{=}\mathrm{O}_5\) component
     \(\;\textcolor{ioGreen!45!black}{\checkmark}\)};
\end{scope}

\end{tikzpicture}}
\caption{Local radical-arrow review.  Red single-barbed arrows are fishhooks.
Dashed lines are destination bonds.  \textbf{(A)} ID 2,207 conflicts with its
current mapped endpoint.  \textbf{(B)} The same arrows match the proposed H13
hydrogen-map correction, which is not applied.  \textbf{(C,D)} ID 2,300 before
and after annotation review.  Paired moves are listed in
Table~\ref{tab:radical-arrow-review}.}
\label{fig:si-radical-arrow-review}
\end{figure}
\clearpage

\centering
\small
\renewcommand{\arraystretch}{1.24}
\setlength{\tabcolsep}{5pt}
\setlength{\LTcapwidth}{\textwidth}
\begin{longtable}{@{}>{\raggedleft\arraybackslash}p{1.3cm}
  >{\raggedright\arraybackslash}p{6.6cm}
  >{\raggedright\arraybackslash}p{6.6cm}@{}}
\caption{Published and reviewed radical transition annotations by record ID.
Parentheses denote bond loci.  \(\rightharpoonup\) is a one-electron fishhook
and \(\longrightarrow\) a paired move.  Boldface marks the changed arrow in
both columns.  \(\dagger\) denotes unchanged arrows after endpoint-orbit
alignment.  For \(2{,}207^*\), the arrows are unchanged.  The proposed endpoint
atom-map correction assigns H13 to O9 and retains O14--H15.  It is not applied.}
\label{tab:radical-arrow-review}\\
\toprule
ID & Published annotation & Reviewed annotation \\
\midrule
\endfirsthead
\multicolumn{3}{@{}l}{\small Table~\ref{tab:radical-arrow-review} continued}\\[3pt]
\toprule
ID & Published annotation & Reviewed annotation \\
\midrule
\endhead
\midrule
\multicolumn{3}{r@{}}{\small Continued on next page}
\endfoot
\bottomrule
\endlastfoot
279 &
\(\begin{aligned}[t]
2&\rightharpoonup(2,14)\\
(14,15)&\rightharpoonup15\\
\boldsymbol{(14,15)}&\boldsymbol{\rightharpoonup15}\\
(14,15)&\rightharpoonup(2,14)
\end{aligned}\) &
\(\begin{aligned}[t]
2&\rightharpoonup(2,14)\\
(14,15)&\rightharpoonup(2,14)\\
\boldsymbol{(14,15)}&\boldsymbol{\rightharpoonup15}
\end{aligned}\) \\
\addlinespace[2pt]\cmidrule(lr){1-3}\addlinespace[2pt]
404 &
\(\begin{aligned}[t]
\boldsymbol{5}&\boldsymbol{\rightharpoonup(5,21)}\\
(21,22)&\rightharpoonup(5,22)\\
(21,22)&\rightharpoonup21
\end{aligned}\) &
\(\begin{aligned}[t]
\boldsymbol{5}&\boldsymbol{\rightharpoonup(5,22)}\\
(21,22)&\rightharpoonup(5,22)\\
(21,22)&\rightharpoonup21
\end{aligned}\) \\
\addlinespace[2pt]\cmidrule(lr){1-3}\addlinespace[2pt]
540 &
\(\boldsymbol{1\rightharpoonup(1,2)}\) &
\(\boldsymbol{1\longrightarrow(1,2)}\) \\
\addlinespace[2pt]\cmidrule(lr){1-3}\addlinespace[2pt]
\(1{,}310^\dagger\) &
\(\begin{aligned}[t]
\boldsymbol{13}&\boldsymbol{\rightharpoonup(6,13)}\\
(5,6)&\rightharpoonup(6,13)\\
(5,6)&\rightharpoonup5
\end{aligned}\) &
\(\begin{aligned}[t]
\boldsymbol{13}&\boldsymbol{\rightharpoonup(6,13)}\\
(5,6)&\rightharpoonup(6,13)\\
(5,6)&\rightharpoonup5
\end{aligned}\) \\
\addlinespace[2pt]\cmidrule(lr){1-3}\addlinespace[2pt]
\(1{,}317^\dagger\) &
\(\begin{aligned}[t]
\boldsymbol{19}&\boldsymbol{\rightharpoonup(1,19)}\\
(1,2)&\rightharpoonup(1,19)\\
(1,2)&\rightharpoonup2
\end{aligned}\) &
\(\begin{aligned}[t]
\boldsymbol{19}&\boldsymbol{\rightharpoonup(1,19)}\\
(1,2)&\rightharpoonup(1,19)\\
(1,2)&\rightharpoonup2
\end{aligned}\) \\
\addlinespace[2pt]\cmidrule(lr){1-3}\addlinespace[2pt]
2,046 &
\(\boldsymbol{21\rightharpoonup10}\) &
\(\boldsymbol{10\rightharpoonup21}\) \\
\addlinespace[2pt]\cmidrule(lr){1-3}\addlinespace[2pt]
\(2{,}207^*\) &
\(\begin{aligned}[t]
\boldsymbol{9}&\boldsymbol{\rightharpoonup(9,13)}\\
\boldsymbol{(12,13)}&\boldsymbol{\rightharpoonup(9,13)}\\
\boldsymbol{(12,13)}&\boldsymbol{\rightharpoonup12}
\end{aligned}\) &
\(\begin{aligned}[t]
\boldsymbol{9}&\boldsymbol{\rightharpoonup(9,13)}\\
\boldsymbol{(12,13)}&\boldsymbol{\rightharpoonup(9,13)}\\
\boldsymbol{(12,13)}&\boldsymbol{\rightharpoonup12}
\end{aligned}\) \\
\addlinespace[2pt]\cmidrule(lr){1-3}\addlinespace[2pt]
2,300 &
\(\begin{aligned}[t]
\boldsymbol{1}&\boldsymbol{\rightharpoonup(1,2)}\\
\boldsymbol{(2,3)}&\boldsymbol{\rightharpoonup(3,4)}\\
\boldsymbol{4}&\boldsymbol{\rightharpoonup(3,4)}\\
\boldsymbol{(2,3)}&\boldsymbol{\rightharpoonup(1,2)}
\end{aligned}\) &
\(\begin{aligned}[t]
\boldsymbol{1}&\boldsymbol{\rightharpoonup(1,5)}\\
\boldsymbol{5}&\boldsymbol{\rightharpoonup(1,5)}\\
\boldsymbol{(1,2)}&\boldsymbol{\longrightarrow(2,4)}\\
\boldsymbol{(4,5)}&\boldsymbol{\longrightarrow(1,5)}
\end{aligned}\) \\
\addlinespace[2pt]\cmidrule(lr){1-3}\addlinespace[2pt]
2,516 &
\(\begin{aligned}[t]
4&\rightharpoonup(4,8)\\
\boldsymbol{9}&\boldsymbol{\rightharpoonup(4,8)}
\end{aligned}\) &
\(\begin{aligned}[t]
4&\rightharpoonup(4,8)\\
\boldsymbol{8}&\boldsymbol{\rightharpoonup(4,8)}
\end{aligned}\) \\
\addlinespace[2pt]\cmidrule(lr){1-3}\addlinespace[2pt]
3,970 &
\(\begin{aligned}[t]
(20,21)&\rightharpoonup20\\
\boldsymbol{(20,21)}&\boldsymbol{\rightharpoonup2}
\end{aligned}\) &
\(\begin{aligned}[t]
(20,21)&\rightharpoonup20\\
\boldsymbol{(20,21)}&\boldsymbol{\rightharpoonup21}
\end{aligned}\) \\
\addlinespace[2pt]\cmidrule(lr){1-3}\addlinespace[2pt]
4,852 &
\(\begin{aligned}[t]
10&\rightharpoonup(10,20)\\
(20,21)&\rightharpoonup(10,20)\\
\boldsymbol{(20,21)}&\boldsymbol{\rightharpoonup2}
\end{aligned}\) &
\(\begin{aligned}[t]
10&\rightharpoonup(10,20)\\
(20,21)&\rightharpoonup(10,20)\\
\boldsymbol{(20,21)}&\boldsymbol{\rightharpoonup21}
\end{aligned}\) \\
\end{longtable}

\end{document}